\begin{document}

\title{Automated Discovery of Tactic Libraries for Interactive Theorem Proving}

\newcommand{\jc}[1]{{\color{red}\{JC: #1\}}}
\newcommand{\jx}[1]{{\color{purple}\{jimmy: #1\}}}
\newcommand{\gp}[1]{{\color{brown}\{GP: #1\}}}
\newcommand{\mx}[1]{{\color{teal}\{mx: #1\}}}

\newcommand{\missing}[1]{{\color{red} #1}}

\author{Yutong Xin}
\email{maxryeery@utexas.edu}
\affiliation{
    \institution{University of Texas at Austin}
    \city{Austin}
    \state{Texas}
    \country{USA}
}

\author{Jimmy Xin}
\email{jxin31415@utexas.edu}
\affiliation{
    \institution{University of Texas at Austin}
    \city{Austin}
    \state{Texas}
    \country{USA}
}

\author{Gabriel Poesia}
\email{poesia@stanford.edu}
\affiliation{
    \institution{Stanford University}
    \city{Stanford}
    \state{California}
    \country{USA}
}

\author{Noah Goodman}
\email{ngoodman@stanford.edu}
\affiliation{
    \institution{Stanford University}
    \city{Stanford}
    \state{California}
    \country{USA}
}

\author{Qiaochu Chen}
\email{qc1127@cs.nyu.edu}
\affiliation{
    \institution{New York University}
    \city{New York}
    \state{New York}
    \country{USA}
}

\author{Isil Dillig}
\email{isil@cs.utexas.edu}
\affiliation{
    \institution{University of Texas at Austin}
    \city{Austin}
    \state{Texas}
    \country{USA}
}
\newcommand{\corpus}{\Pi}
\newcommand{\tname}{\eta}
\newcommand{\targs}{I}
\newcommand{\tret}{O}
\newcommand{\tbody}{\mathcal{E}}
\newcommand{\tinvoke}{t}
\newcommand{\denot}[1]{\llbracket #1 \rrbracket}
\newcommand{\isil}[1]{\textcolor{red}{#1}}
\newcommand{\red}[1]{\textcolor{red}{#1}}
\newcommand{\pscript}{\pi}
\newcommand{\tdg}{\textsf{TDG}\xspace}
\newcommand{\inv}{v_\mathsf{in}}
\newcommand{\outv}{v_\mathsf{out}}
\newcommand{\cp}{\mathsf{CP}}
\newcommand{\tacticlib}{\mathcal{T}}
\newcommand{\tc}{\Psi}
\newcommand{\ev}{\Lambda}
\newcommand{\ws}{\Upsilon}
\newcommand{\freq}{\digamma}
\newcommand{\strength}{\mathscr{E}}
\newcommand{\lib}{\mathscr{L}}
\newcommand{\Ours}{\toolname}
\newcommand{\peano}{{\sc Peano}\xspace}
\newcommand{\nogrammar}{GrammarABL\xspace}
\newcommand{\noub}{PruningABL\xspace}

\newcommand{\changed}[1]{{\color{blue}#1}}
\begin{abstract}
Enabling more concise and modular proofs is essential for advancing formal reasoning using interactive theorem provers (ITPs). Since many ITPs, such as Rocq and Lean, use tactic-style proofs, learning higher-level custom tactics is crucial for proof modularity and automation. This paper presents a novel approach to tactic discovery, which leverages Tactic Dependence Graphs (TDGs) to identify reusable proof strategies across multiple proofs. TDGs capture  logical dependencies between tactic applications while abstracting away irrelevant syntactic details, allowing for both the discovery of new tactics and the refactoring of existing proofs into more modular forms.
We have implemented this technique  in a tool called \toolname and compare it against an anti-unification-based approach ({\sc Peano}) to tactic discovery. Our evaluation demonstrates that \toolname can learn 3$\times$ as many tactics as {\sc Peano} and reduces the size of proofs by 26\% across all benchmarks. Furthermore, our evaluation demonstrates the benefits of learning custom tactics for proof automation, allowing a state-of-the-art proof automation tool to achieve a relative increase of 172\% in terms of success rate.
\end{abstract}

\begin{CCSXML}
<ccs2012>
   <concept>
       <concept_id>10010583.10010717.10010721.10010727</concept_id>
       <concept_desc>Hardware~Theorem proving and SAT solving</concept_desc>
       <concept_significance>500</concept_significance>
       </concept>
   <concept>
       <concept_id>10003752.10010124.10010138.10011119</concept_id>
       <concept_desc>Theory of computation~Abstraction</concept_desc>
       <concept_significance>500</concept_significance>
       </concept>
   <concept>
       <concept_id>10011007.10011074.10011092.10011782</concept_id>
       <concept_desc>Software and its engineering~Automatic programming</concept_desc>
       <concept_significance>300</concept_significance>
       </concept>
 </ccs2012>
\end{CCSXML}

\ccsdesc[500]{Hardware~Theorem proving and SAT solving}
\ccsdesc[500]{Theory of computation~Abstraction}
\ccsdesc[300]{Software and its engineering~Automatic programming}

\keywords{Interactive Theorem Proving, Machine Learning}  

\maketitle

\section{Introduction}\label{sec:intro}

Tactics are essential in interactive theorem proving, particularly for facilitating reuse and modularity in proof development. By encapsulating common proof strategies into reusable components, tactics allow users to apply previously developed solutions to new problems, reducing the effort required to construct proofs from scratch. Tactics also facilitate proof automation, as shorter proofs are typically easier to discover using automated techniques.

While interactive theorem provers (ITPs) such as Rocq (formerly known as Coq) and Lean provide a rich library of built-in tactics, proof engineers typically need to devise \emph{custom tactics} by composing existing tactics into higher-level domain-specific building blocks~\cite{dealmeidaborges_et_al:LIPIcs.ITP.2023.12}. For example, the \texttt{mathlib} library in Lean contains useful tactics for mathematical proofs, and the \texttt{flocq} library in Rocq provides tactics that facilitate proofs about floating point computation. But even if proof assistants enable users to write their own tactics, learning to do so adds an extra layer on top of the already steep learning curve of ITPs.

In this paper, we address the problem of automatically learning \emph{custom tactics} (i.e., compositions of existing tactics) from a given proof corpus.  Learning such tactic libraries can uncover reusable patterns in proofs, making it easier to construct similar proofs within the same domain. Furthermore, tactic  discovery can enhance proof automation by facilitating a form of curriculum learning~\cite{leanagent}, where more complex proofs are synthesized using custom tactics discovered in simpler proofs.

While there has been prior work~\cite{stitch,babble,dreamcoder} on learning software libraries from a given corpus of \emph{programs}, these
methods do not readily apply to tactic-based {proofs}.
    Such proofs are written in an imperative style, with frequent reference to implicit, mutating objects such as the current goal, which is typically not present in the source, but rather visualized interactively. However, existing work on library learning typically assumes functional programs and focuses on {generalizing} concrete program expressions into lambda abstractions.  
In contrast, tactic discovery requires an understanding of the logical relationships between different proof steps and the ability to construct higher-level tactics that capture these relationships.

In this paper, we address this problem by proposing a new proof abstraction called \emph{tactic dependence graph (TDG)} that hides irrelevant syntactic variations between different proofs, while focusing on important \emph{logical} dependencies. In a TDG, nodes represent tactic \emph{applications}, while edges represent \emph{proof state dependencies} between them. For example, Figure~\ref{fig:same-proof} shows two  syntactically different proofs of the same theorem, which have exactly the same TDG abstraction shown in Figure~\ref{fig:tdg-same-proof}. In essence, the TDG abstraction hides minor syntactic variations in the proof, such as how sub-goals are named or in what order tactics are applied, and instead  focuses on \emph{semantic} dependencies in between different tactic applications.

\begin{figure}[t]
    \centering
    \begin{minipage}{0.62\textwidth}
        \centering
\includegraphics[trim= 0 0 0 0, clip, width=\textwidth]{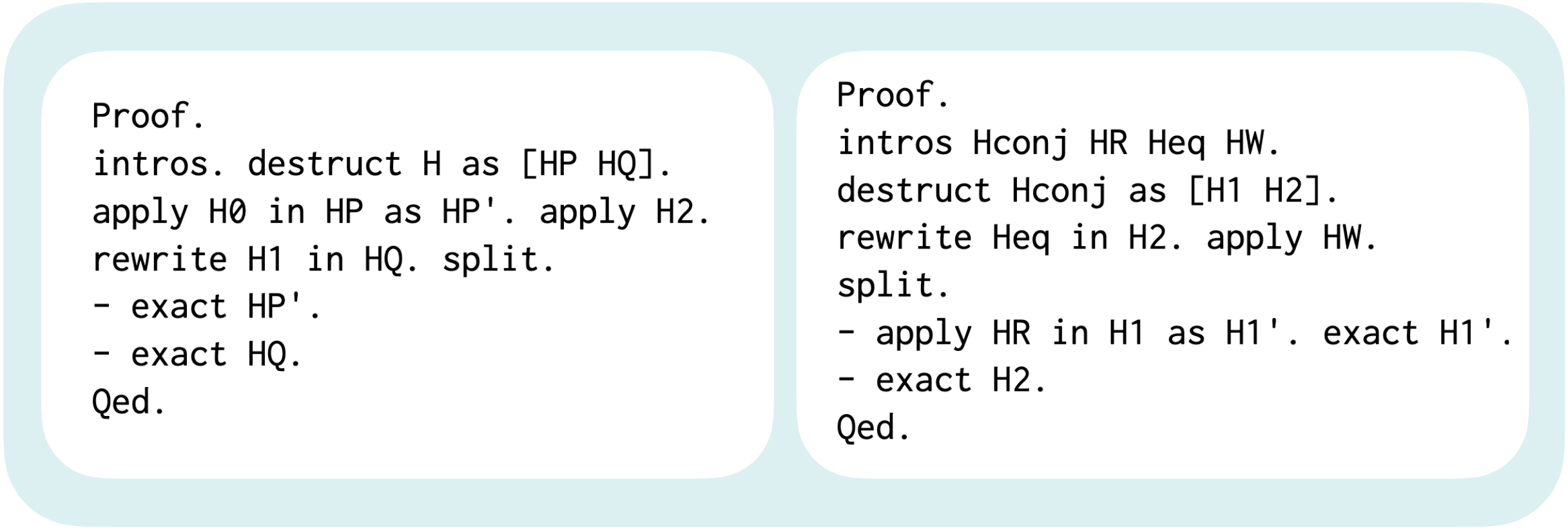}
   \caption{Syntactically different proofs w/ same TDG in Figure~\ref{fig:tdg-same-proof},\\ for Lemma $(P \land Q) \rightarrow (P \rightarrow R) \rightarrow (Q = T) \rightarrow (R \land T \rightarrow W) \rightarrow W$.}
        \label{fig:same-proof}    
    \end{minipage}%
    \hfill
    \begin{minipage}{0.37\textwidth}
        \centering
\includegraphics[width=\textwidth]{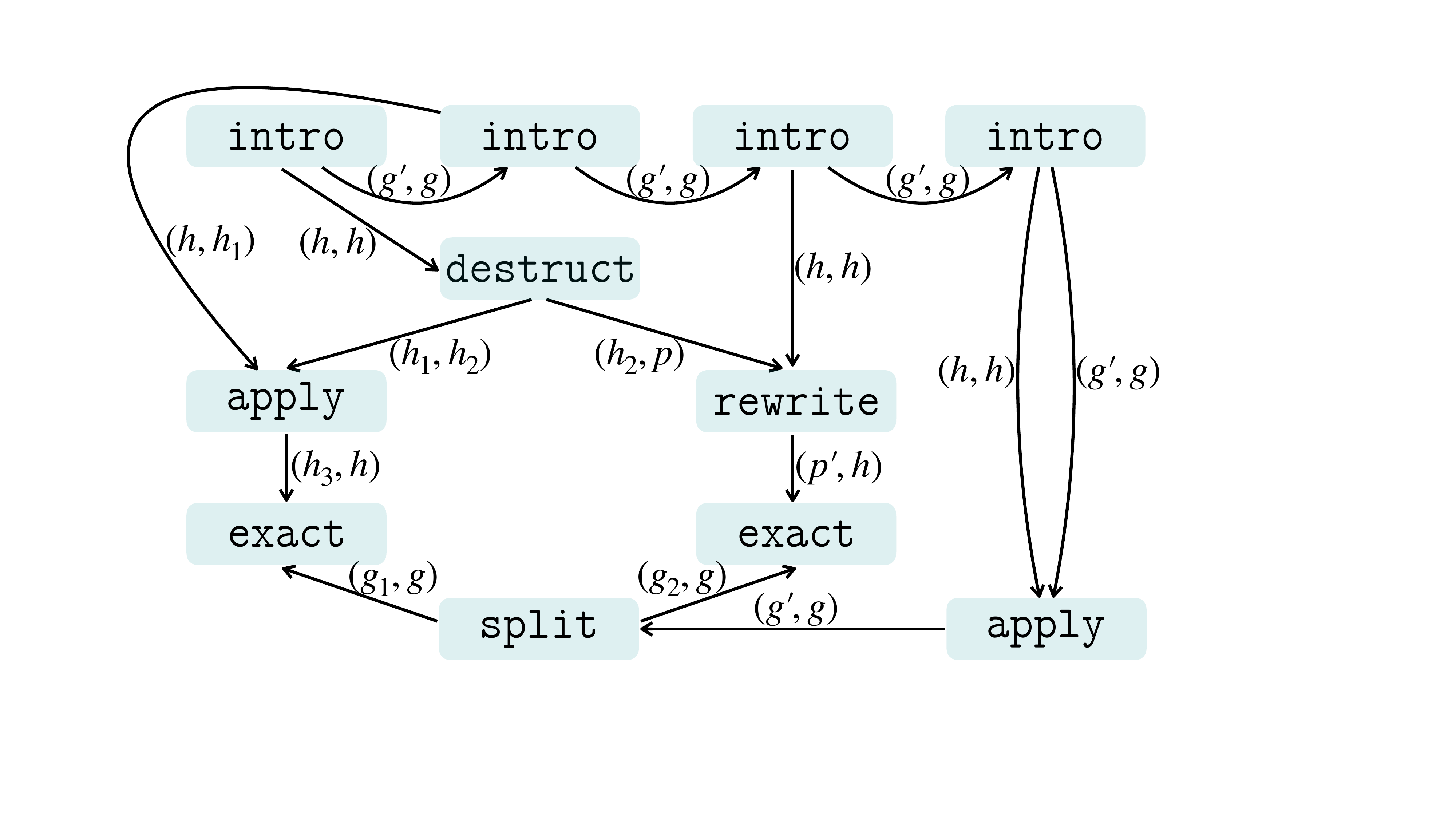}
\vspace{-0.25in}
        \caption{TDG for Figure~\ref{fig:same-proof}.}
        \label{fig:tdg-same-proof}
    \end{minipage}
\vspace{-0.3in}
\end{figure}

{Given a corpus of proofs from the same domain, our approach leverages the TDG abstraction to discover tactics that maximize compression of the proof corpus—a metric widely used in prior work on library learning~\cite{stitch,babble,dreamcoder,peano}. Intuitively, the greater the compression rate of existing proofs, the more broadly applicable the resulting tactics. To achieve this goal, our method identifies \emph{isomorphic subgraphs} within the TDGs that exhibit a special property called \emph{collapsibility}. This property guarantees that (1) each common subgraph can be translated into a valid tactic and (2) the resulting tactics can be applied to refactor the original proofs while preserving their validity.}


Our tactic discovery algorithm is inspired by ideas from \emph{top-down} enumerative program synthesis~\cite{lambda2,myth,neo,stitch}; however, it needs to address additional challenges that are not present in the program synthesis setting. First, since our goal is to discover common isomorphic subgraphs of existing TDGs, we need to perform enumerative search over \emph{graphs} as opposed to abstract syntax \emph{trees}. Second, unlike program synthesis, there is no pre-determined grammar for tactic dependence graphs. Finally, since our goal is to  maximize an optimization objective, the synthesizer cannot stop as soon as it finds an isomorphic embedding  but must keep going until it identifies the highest-scoring tactic. Our proposed tactic discovery algorithm effectively addresses these challenges using two key ideas. First, it extracts a graph grammar from the TDGs of existing proofs. Second, it substantially prunes the search space by deriving upper bounds on the compression power of candidate tactics. 

We have implemented our proposed tactic discovery method in a tool called \toolname and evaluate it on several Rocq projects, including CompCert~\cite{compcert} and proofs involving program logics. We  compare our approach against an anti-unification baseline from prior work ({\sc Peano})~\cite{peano} and show that \toolname can learn $3\times$ as many tactics compared to {\sc Peano}. Furthermore,  the tactics learned by \toolname  make proofs $26\%$ shorter across all benchmarks, reducing the corpus size to 63\% of its original size in some cases. Finally, our evaluation also demonstrates the benefits of our approach for proof automation, allowing a state-of-the-art tool ({\sc Copra})~\cite{copra} to achieve a relative increase of 172\% in the number of theorems proved. 

To summarize, this paper makes the following contributions: 
\begin{itemize}[leftmargin=*]
    \item  We introduce \emph{tactic dependence graphs (TDG)}, a new abstraction that facilitates proof refactoring and tactic discovery.
    \item We define the \emph{semantic proof refactoring} problem and show how to use the TDG abstraction to refactor proofs in a way that preserves their validity.  
    \item We propose a new tactic discovery algorithm for assembling a library of tactics from a  corpus. 
    \item We implement these ideas in a tool called \toolname targeting Rocq proofs and   compare it against a baseline~\cite{peano} that uses anti-unification for tactic discovery. Our method learns around $3\times$ as many tactics compared to the baseline, reducing the size of the proof corpus by 26\% (compared to 9\% for the baseline). Furthermore, the tactics learned by our method enabled {\sc Copra}, a proof automation tool, to increase the number of theorems it can prove by 172\% on a corpus of 50 proofs, providing preliminary evidence of these tactics' utility in automated theorem proving.
    \end{itemize}

\section{Motivating Example}\label{sec:motivation}

\begin{wrapfigure}{r}{0.6\textwidth}
    \centering
    \vspace{-0.2in}
    \includegraphics[width=0.6\textwidth]{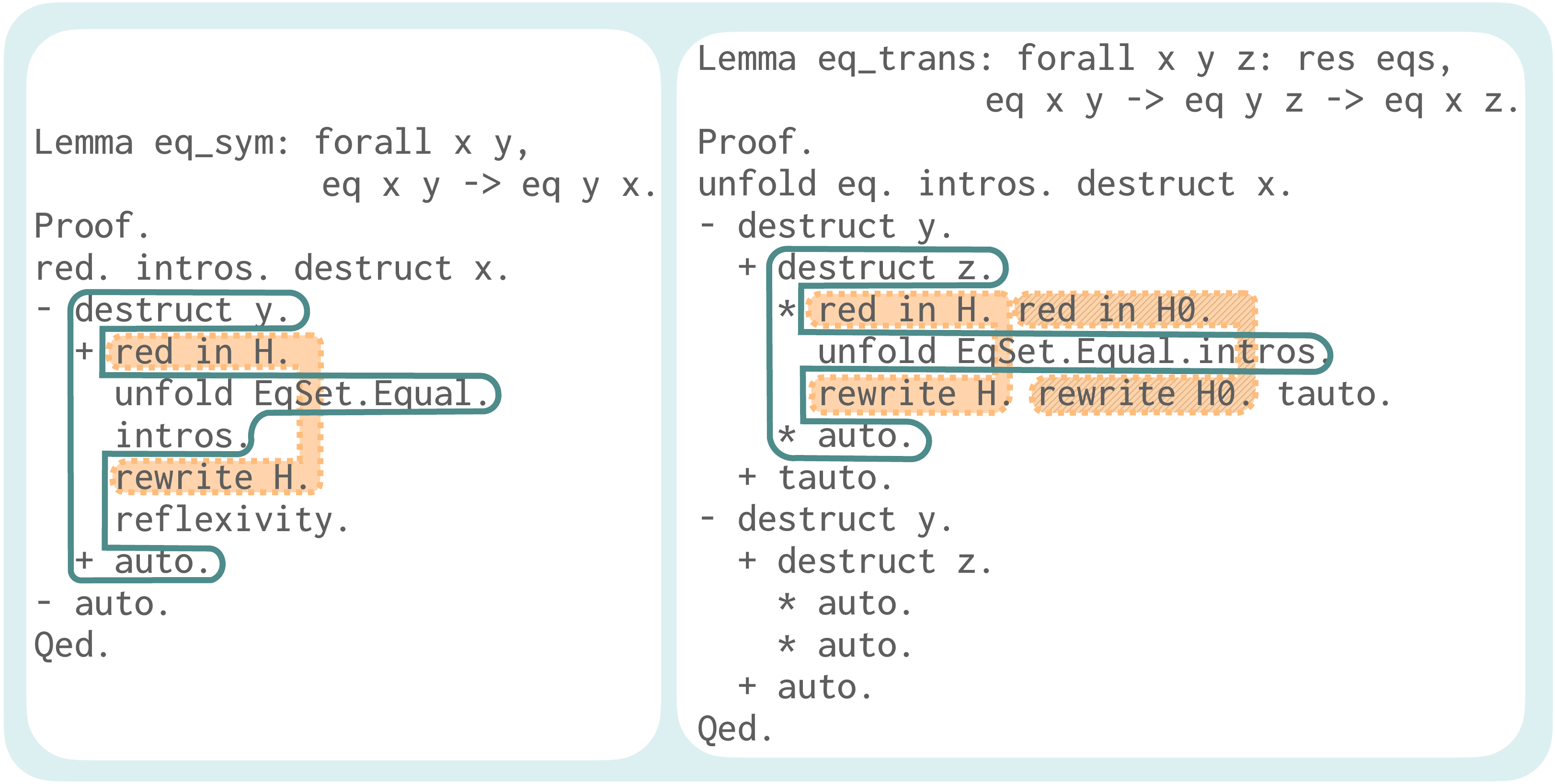}
    \vspace{-0.2in}
    \caption{Input Rocq Proofs.}
    \label{fig:motEx}
    \vspace{-0.1in}
\end{wrapfigure}
In this section, we demonstrate our approach through a simple motivating example shown in Figure~\ref{fig:motEx}. This figure shows two simple Rocq proofs, \texttt{eq\_sym} and \texttt{eq\_trans}, taken from CompCert~\cite{compcert} for establishing the symmetry and transitivity of a predicate defining a suitable notion of equality for dataflow analysis results. As standard in Rocq and many other interactive theorem provers, these are \emph{tactic-style proofs},  where pre-defined tactics  such as \texttt{intros, destruct, rewrite}, etc. are used to transform proof goals into simpler sub-goals. Our goal in this paper is to learn \emph{higher-level (custom) tactics} that can be used to further simplify proof engineering in a specific domain.   

While the two proofs shown in Figure~\ref{fig:motEx} have salient differences, they also have many similarities, as highlighted in color in the figure. In fact, given these two proofs, we can extract the following custom tactics that may be used to simplify several other proofs:

\vspace{0.05in}
\begin{mdframed}
{\small \begin{verbatim}
Ltac simplRewrite H := red in H; rewrite H. 
Ltac destructUnfold n H := destruct n; [unfold H; intros | auto].
\end{verbatim}}
\end{mdframed}
\vspace{0.05in}

The first custom tactic called \texttt{simplRewrite}  is useful in hypothesis \texttt{H} which contains a function or expression that requires some reduction before rewriting. In particular, this tactic first simplifies the hypothesis \texttt{H} by performing one step of reduction, and then uses that simplified hypothesis to rewrite the goal. The second tactic, called \texttt{destructUnfold}, is useful in scenarios that require  handling different cases of an inductive type through unfolding and proof automation, respectively.  Figure~\ref{fig:motEx-rewrite} shows the refactored version of the  proofs from Figure~\ref{fig:motEx} using these custom tactics. 

\begin{wrapfigure}{r}{0.6\textwidth}
    \centering
    \vspace{-0.12in}
    \includegraphics[trim=100 200 340 150, clip,  width=0.6\textwidth]{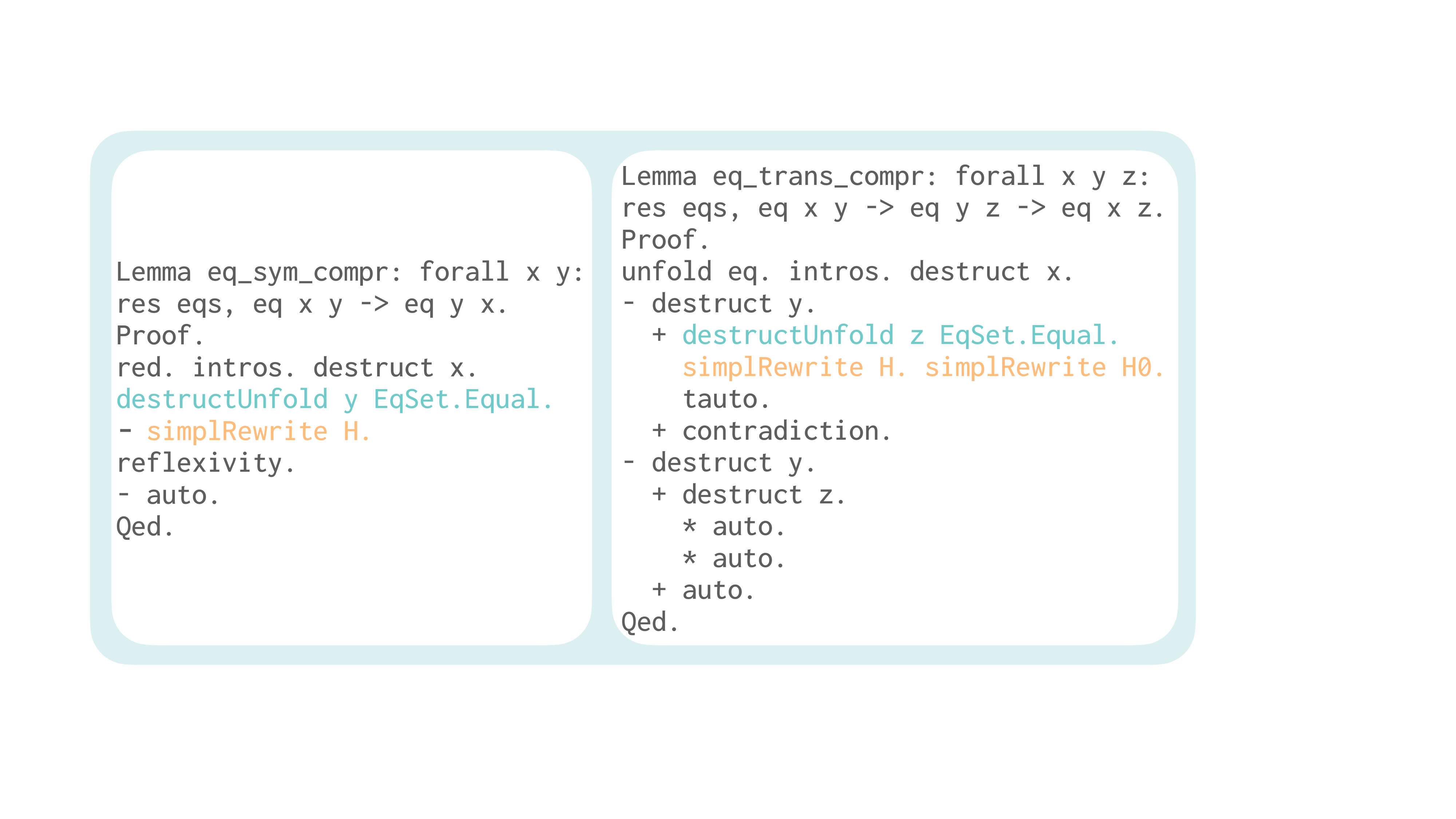}
    \caption{Refactored proofs from Figure~\ref{fig:motEx} using the custom tactics.}
    \label{fig:motEx-rewrite}
    \vspace{-0.12in}
\end{wrapfigure}

\vspace{0.05in}
\noindent
{\bf{\emph{Proof Refactoring.}}} As a first step towards tactic discovery, we pose the question \emph{``Can we, and, if so how do we, refactor proofs using a given custom tactic?''}. At first glance, it is unclear how to refactor the proofs from Figure~\ref{fig:motEx} to use our custom tactics. First, the pre-defined tactics comprising the custom tactic do not appear consecutively in either of the proofs.  Second, the \texttt{destructUnfold} tactic spans multiple cases of the original proof, but  its first branch does not exactly correspond to either of the branches of the original tactic. Third, even if a proof uses the same pre-defined tactics as a custom tactic, this does not necessarily mean that a proof can be refactored using that tactic. In particular, because tactic-style proofs are stateful objects with implicit sub-goals, there can be subtle dependencies between tactic applications that make it impossible to refactor the proof using a given tactic \emph{even when} the proof and the tactic are \emph{syntactically} similar. All of these considerations highlight the need for a suitable {abstraction} that can facilitate  proof refactoring and tactic discovery.
\begin{wrapfigure}{r}{0.5\textwidth}
    \centering
    \vspace{-0.12in}
    \includegraphics[width=0.5\textwidth]{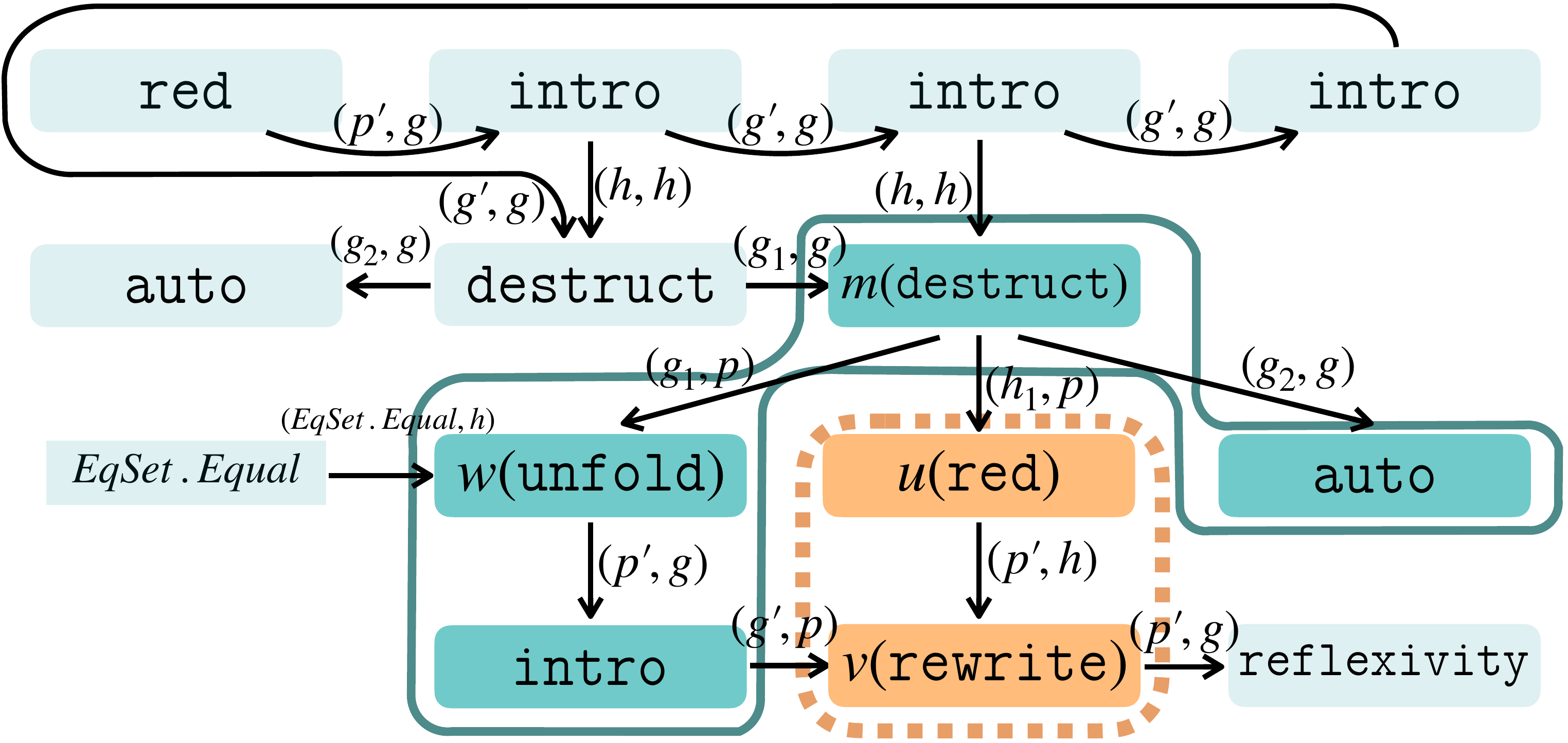}
     \vspace{-0.2in}
    \caption{TDG for {\tt eq\_sym}.} 
    \label{fig:mot-p1}
    \vspace{-0.15in}
\end{wrapfigure}

\vspace{0.05in}
\noindent
{\bf{\emph{Tactic Dependence Graph.}}} To address these problems, we propose representing tactic-style proofs using an abstraction called the \emph{tactic dependence graph (TDG)} that elucidates dependencies between different tactic applications in a proof.  Specifically, nodes in a TDG correspond to tactic applications, and edges encode how the ``outputs'' of one tactic  feed as ``inputs'' to another tactic. For instance, consider the TDG representation of the proof for the \texttt{eq\_sym} lemma shown in Figure~\ref{fig:mot-p1}. Here, an edge from a node $n$ to $n'$ indicates that the tactic application represented by $n'$ depends on the tactic application represented by $n$. For example, looking at this TDG, we see that there is no logical dependence between the second application of \texttt{red} (node $u$) and \texttt{unfold} (node $w$) even though they appear consecutively in the syntax of the proof. On the other hand,  the TDG also makes it clear that there is an immediate dependency between the second application of \texttt{red} ($u$) and \texttt{rewrite} ($v$) even though they are separated in syntax by two other tactic applications. Also, note that TDG edges elucidate not only \emph{whether} there exists a dependency between two tactic applications but also \emph{how} they are dependent. In particular,  an edge  labeled $(\alpha, \beta)$ indicates that the ``formal'' output $\alpha$ of the source tactic application is supplied as the ``formal'' parameter $\beta$ of the target tactic application. As a matter of convention, we use the symbols $h, h_i$ to denote hypotheses, $g, g_i$ to denote goals, and $p, p_i$ to denote propositions that can be either goals or hypotheses. For instance, looking at the same TDG, we see that \texttt{destruct} (node $m$) produces two goals (labeled $g_1, g_2$) and a new hypothesis  (labeled $h_1$).  We also notice that the formal outputs  $g_1, g_2$ of \texttt{destruct} (node $m$) are used by \texttt{unfold} (node $w$) and \texttt{auto} respectively, and that the output hypothesis $h_1$ is used by \texttt{red} (node $u$). 

\vspace{0.05in}
\noindent
{\bf{\emph{TDG for Proof Refactoring.}}} Next, let's consider how we might use the TDG for refactoring the \texttt{eq\_sym} proof from Figure~\ref{fig:motEx} into the form shown in Figure~\ref{fig:motEx-rewrite}. First, observe that the subgraph encircled using solid (green) lines in Figure~\ref{fig:mot-p1} corresponds to the TDG representation of the \texttt{destructUnfold} tactic and the subgraph surrounded by orange dashed lines corresponds to that of  \texttt{simplRewrite}. Hence, we can refactor the original proof by replacing this sequence of tactic invocations with the \texttt{destructUnfold} and \texttt{simplRewrite} tactics. Next, there is the question of \emph{where} in the proof these custom tactic invocations belong. Looking at the TDG, we see that an input of  \texttt{red} (now part of \texttt{simplRewrite}) relies on an output of  \texttt{destruct} (which is now a part of \texttt{destructUnfold}); hence, we deduce that  \texttt{destructUnfold} must occur before \texttt{simplRewrite} in the refactored proof. Using this type of reasoning, we can refactor the original proof from Figure~\ref{fig:motEx} to the version shown in Figure~\ref{fig:motEx-rewrite} while maintaining its syntactic and semantic validity. 

\begin{figure}
    \centering
    \includegraphics[width=0.8\textwidth]{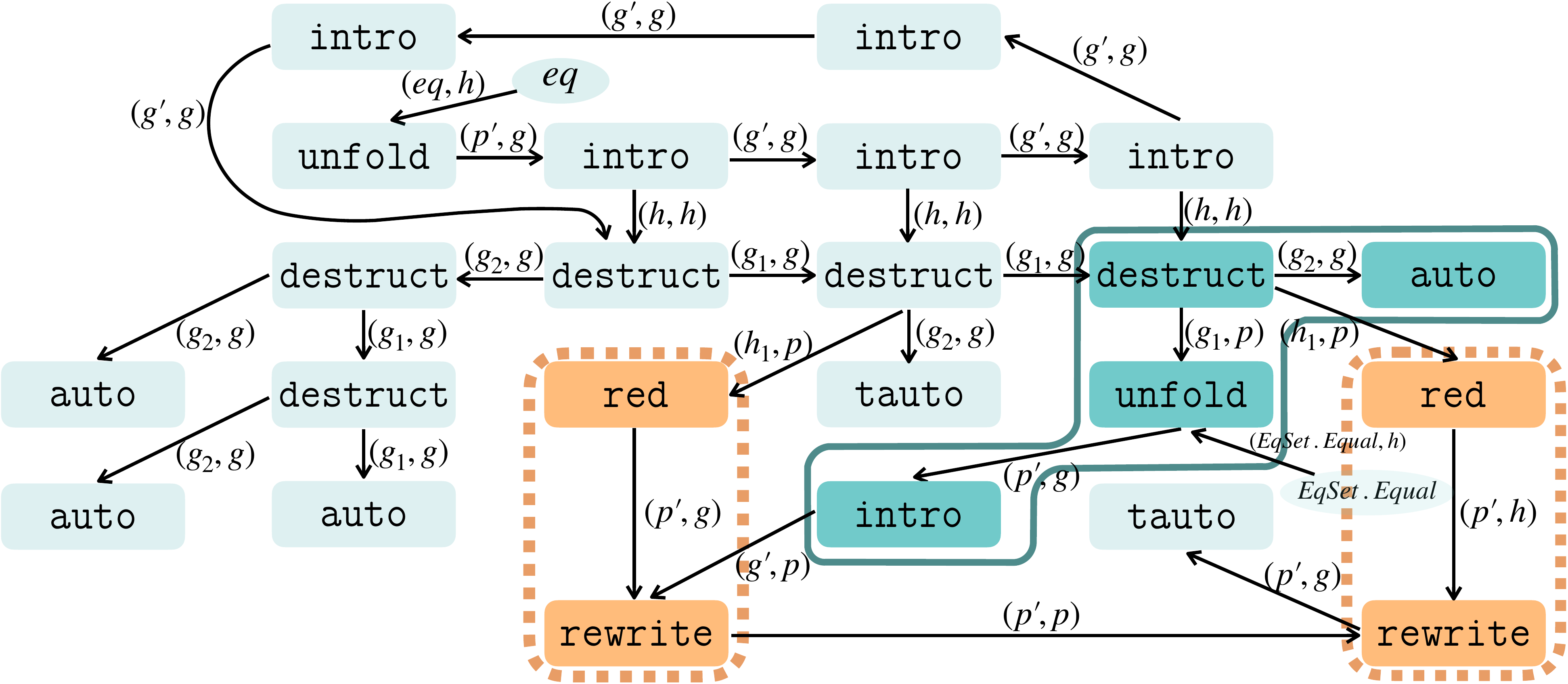}
    \vspace{-0.1in}
    \caption{TDG for {\tt eq\_trans}.}
    \label{fig:mot-p2}
    \vspace{-0.2in}
\end{figure}

\vspace{0.05in}
\noindent
{\bf{\emph{Tactic Discovery.}}} In addition to being useful for proof refactoring, the TDG abstraction is also beneficial for \emph{tactic discovery}: Given a corpus of proofs, we can identify common proof patterns by looking for common isomorphic subgraphs of the TDGs. For instance, comparing the TDGs shown in Figure~\ref{fig:mot-p1} and Figure~\ref{fig:mot-p2} immediately reveals two common isomorphic subgraphs, highlighted in solid (green) and dashed (orange) lines. Hence, the idea behind our tactic discovery algorithm is to find  isomorphic common subgraphs in the corpus and use them to refactor the proofs. However, when learning tactics, there is often a trade-off between  the generality of the tactic and its size: On one hand, the more frequently a proof pattern appears, the more generally applicable it is. On the other hand, the larger the tactic, the more useful it is --- intuitively, large tactics allow us to skip many steps in the proof. To balance this trade-off, our tactic discovery algorithm looks for common isomorphic subgraphs that maximize the total reduction in corpus size, a metric that has also been used in prior work on library learning~\cite{stitch,babble}. For instance, going back to our running example, the \texttt{destructUnfold} tactic reduces the size of the two proofs by $20\%$ whereas the \texttt{simplRewrite} tactic reduces size by $10\%$.  Thus, our learning algorithm first refactors the corpus using the \texttt{destructUnfold} tactic and then looks for a different tactic that maximizes compression, such as the \texttt{simplRewrite} tactic in our example. This process 
continues until no new tactics can be discovered. 
\section{Preliminaries}

This section provides necessary background on interactive theorem proving by developing a formal model of tactic-style proofs.



\subsection{Proof States}

A central concept in tactic-style proofs is the notion of \emph{proof state}, which represents the current status of an ongoing proof, including the goals that remain to be established and the hypotheses available at each step.

 \begin{definition}[{\bf Proof state}]
 A proof state $\pstate$ is a mapping from identifiers (i.e., names like \texttt{H1}) to \emph{proof elements}, which correspond to  hypotheses and goals. A \emph{hypothesis}  is  a proposition $\varphi$ expressed in the formal language of the prover. A \emph{goal} $g$  is a pair $(\varphi, \mathcal{C})$ where $\varphi$ is a proposition to be proven and the \emph{context} $\mathcal{C}$ is a set of hypothesis identifiers that are in scope when proving $g$.
 \end{definition}

Given an identifier $v$, we write $g(v)$ to indicate that it represents a proof goal and $h(v)$ to denote that it corresponds to a hypothesis. {We use the notation $\textsf{Goals}(\pstate)$ to denote the set of goals in the proof state $\pstate$. } Also, given a proof state $\pstate$ and a set of identifiers $I$, we write  $\pstate \backslash I$ to denote the new proof state obtained by removing identifiers $I$ from $\pstate$.

For readability, we omit the context associated with a goal unless necessary. Thus, in most examples, we represent the proof state as a mapping from identifiers to propositions, assuming that all hypotheses are available when proving a goal unless explicitly stated. Additionally, while many ITPs do not explicitly name proof goals—typically operating on a single implicit goal at each step—we choose to explicitly name goals to facilitate disambiguation in multi-goal scenarios.

\subsection{Tactics}
In interactive theorem provers, theorems are typically proven by applying \emph{tactics} to manipulate proof states.  For example, Table~\ref{tab:built-in} lists our representation of some of the built-in  tactics in the Rocq theorem prover, along with an (informal) description of their semantics.\footnote{Note that Table~\ref{tab:built-in} only provides examples; our implementation supports many more, as explained in Section~\ref{sec:impl}.} For instance, consider the \texttt{intro} tactic shown in Table~\ref{tab:built-in} which is listed as having ``signature'' $g \rightarrow h \times g'$. This means that this tactic operates over a specific goal $g$ that is part of the current proof state $\pstate$ and transforms the proof state if $g$ is of the form $p_1 \rightarrow p_2$. In particular, the new proof state is obtained by removing goal $g$ from the current proof state and introducing $p_1$ as a new hypothesis and $p_2$ as a new goal.  Note that some tactics, such as \texttt{split} and \texttt{destruct}, can also introduce multiple new goals, where each goal can have its own unique context (such as in the case of \texttt{destruct}). 

In addition to using built-in tactics, proof engineers can also define their own \emph{custom} tactics, for example, via the {\tt Ltac} construct in Rocq.  We represent  both built-in  and custom tactics through the following formalization:

\begin{definition}[{\bf Tactic definition}]
A \emph{tactic definition}  is a quadruple $(\tname, \targs, \tret, \tbody )$ where $\tname$ is the name of the tactic, $\targs$ and $\tret$ are lists of identifiers representing formal tactic inputs and outputs respectively,  and $\tbody$ is an expression that manipulates a proof state.
\end{definition}


\begin{example}
Consider the following simple tactic in Rocq: \verb|Ltac MyTac h := intro h; simpl.|
This custom tactic first applies the built-in \texttt{intro} tactic to introduce a hypothesis \( h \) and then simplifies the resulting proof goal using \texttt{simpl}. In our representation, this tactic is formalized as:
\[
(\texttt{MyTac}, [g], [h, g''], \texttt{intros} \ [g] \ [h, g']; \texttt{simpl} \ [g'] \ [g''] )
\]
Here, the formal input \( I \) consists of a single goal \( g \), and the formal output is \([h, g'']\), indicating that the resulting proof state includes a new hypothesis \( h \) and a new goal \( g'' \). This explicit representation captures \emph{which parts} of the proof state are modified by a given tactic.
Similarly, when specifying tactic applications (as in the body of the custom tactic), we explicitly define their inputs and outputs. In this case, \texttt{intros} takes the initial goal \( g \), introduces a hypothesis \( h \), and produces an updated goal \( g' \), which is then further transformed by \texttt{simpl} into \( g'' \). This formalization clarifies the sequential modifications to the proof state.
\end{example}


In the remainder of this paper, we assume  uniqueness of tactic names, and, given a tactic name $\tname$, we write $\mathsf{In}(\tname)$ and $\mathsf{Out}(\tname)$ to denote its formal inputs and outputs, and  $\mathsf{Body}(\tname)$ to denote its body. 



\begin{table}[t]
\centering
\small
\vspace{-0.05in}
\caption{Our internal representation of some commonly used  Rocq tactics, where $h$ denotes a hypothesis, $g$ denotes a goal, and $p$ is any proposition (either goal or hypothesis).}
\vspace{-0.05in}
\begin{tabular}{|l|l|l|}
\hline
\textbf{Tactic name} & \textbf{Signature} & \textbf{Semantics} \\ \hline
\texttt{intro}            & $g \rightarrow h \times g'$           & If $g$ matches $p_1 \rightarrow p_2$, produces $p_1$ as hypothesis $h$ \\
& & and $p_2$ as  goal $g'$           \\ \hline
\texttt{apply .. (in) }             & $h \times p \rightarrow p_1 (\times ... \times p_n)$ 
& If $h$ matches $p \rightarrow p_1$ and $p$ is a hypothesis, produces $p_1$ \\ & & as a new hypothesis; if $h$ matches $p_1 \rightarrow \ ..\ \rightarrow p_n \rightarrow p$ \\ & & and $p$ is a goal, produces $p_1 \times \ ..\ \times p_n$ as new goals          \\ \hline
\texttt{exact}           & $h \times g \rightarrow \bot$            & If $h$ matches $g$, discharges the goal \\
& & (i.e., no new goal is produced)            \\ \hline
\texttt{split}           & $g \rightarrow g_1 \times g_2 $            & If $g$ is of the form $(p_1 \land p_2)$,  produces $p_1$ and $p_2$ \\
& & as two new goals $g_1$ and $g_2$, respectively       \\ \hline
\texttt{destruct}$_\lor$           &$h \times g \rightarrow $          & If $h$ matches $(h_1 \lor h_2)$, produces two goals $g_1$ and $g_2$, \\
& $h_1 \times h_2 \times g_1 \times g_2$  & derived from $g$, with corresponding hypotheses $h_1$ and $h_2$        \\ \hline
\texttt{rewrite}           &$h \times p \rightarrow p'$            & If $h$ matches $x = y$ and $p$ is of the form $p(x)$, \\ & &produces $p(y)$   as the new proposition $p'$ \\ \hline
\texttt{left} /   \texttt{right}          &$g \rightarrow g'$            & If $g$ matches $p_1 \lor p_2$, produces $p_1$ (resp. $p_2$ for \texttt{right}) \\
& &  as a new goal $g'$ \\ \hline
\end{tabular}
\label{tab:built-in}
\vspace{-0.13in}
\end{table}


\subsection{Tactic Applications}

A tactic application modifies the proof state by taking a set of actual proof elements as inputs and producing new proof elements as outputs. Formally, a tactic application is represented as a tuple $(\eta, X, Y)$ where $\eta$ is the name of the applied  tactic, $X$ is the list of \emph{actual inputs} (e.g., goals, hypotheses),  $Y$ is the list of \emph{actual outputs} (e.g., new hypotheses, transformed goals). In our representation, a tactic application is exactly akin to \emph{function invocation} with \emph{call-by-value} semantics:  Given a tactic definition $
(\eta, I, O, \tbody)$ 
where $I$ and $O$ denote the formal inputs and outputs, respectively, applying $\eta$ involves replacing formal inputs with actual proof elements and executing $\tbody$.   
More formally, each tactic invocation $\tinvoke = (\tname, X, Y)$ defines a transition relation between proof states, denoted  $\denot{\tinvoke}(\pstate) = \pstate'$, where
$
\pstate' = (\pstate \backslash \textsf{Goals}(X)) \ \uplus  [ Y \mapsto  \big (\denot{\tbody}([\mathsf{In}(\tname) \mapsto \pstate(X) ] \big )(\mathsf{Out}(\tname))]. 
$
In particular, $\denot{\tinvoke}(\pstate)$ modifies the proof state $\sigma$ according to body expression $\tbody$, subject to the usual formal-to-actual renamings. Note that inputs of $\tinvoke$ that correspond to proof goals are transformed by the tactic invocation; hence, identifiers that refer to ``stale'' proof goals are removed when constructing the new input state $\pstate'$. {Importantly, we assume tactics are deterministic with respect to the inputs (i.e., goals, hypotheses, and hint databases) they consume. That is, tactic invocations will always produce the same outputs as long as their inputs in the initial proof state are the same, even if the rest of the state is different.} Given proof state $\pstate$ and expression $\tbody$, we write $\denot{\tbody}(\pstate)$ to denote the resulting  state $\pstate'$ after applying $\tbody$ to~$\pstate$. 

\begin{example}
Consider the following tactic application $(\texttt{apply}, \texttt{[H0, g3]}, \texttt{[g4, g5]})$ (where the semantics of the built-in \texttt{apply} tactic is given in Table~\ref{tab:built-in}) and the proof state $\pstate$: 
\small
\[ [\texttt{h1} \mapsto \texttt{P1}, \ \ \texttt{h2} \mapsto \texttt{P2}, \ \ \texttt{H0} \mapsto (\texttt{P1} \rightarrow \texttt{P2} \rightarrow \texttt{P3}), \ \ \texttt{g3} \mapsto \texttt{P3}] \]
\normalsize
This tactic application results in the following modified proof state $\pstate'$:
\small
\[
[\texttt{h1} \mapsto \texttt{P1}, \ \ \texttt{h2} \mapsto \texttt{P2}, \ \ \texttt{H0} \mapsto (\texttt{P1} \rightarrow \texttt{P2} \rightarrow \texttt{P3}), \ \ \texttt{g4}\mapsto \texttt{P1}, \ \ \texttt{g5} \mapsto \texttt{P2}]
\] \normalsize {Observe that the hypotheses are still the same, but previous goal \texttt{g3} is removed from the old proof state and two new goals, \texttt{g4} and \texttt{g5}, are added to $\pstate'$.}
\end{example}

\subsection{Proof Scripts and Proofs}

We conclude this section by defining \emph{proof scripts}: programs defined by a sequence of tactic invocations.  A \emph{proof} corresponds to a successful \emph{execution} of that program on some initial  state.  

\begin{definition}[{\bf Proof script}]
A \emph{proof script} $\pscript$ is a sequence of tactic invocations.
\end{definition}

\begin{example}\label{ex:script}
{Consider the following lemma:}
\small
\[
\texttt{Lemma implication: (P1} \land \texttt{P2) -> (P1 -> P2 -> P3) -> P3.}
\] 
\normalsize
and its corresponding proof script:
{
\small
\begin{verbatim}
   intros H. destruct H as [h1 h2]. intros H0. apply H0.
   - exact h1.
   - exact h2.
\end{verbatim}
\normalsize
} 
In our representation, this proof script corresponds to the following sequence of tactic invocations:
{
\small
\begin{verbatim}
   intro [g0] [H, g1];  destruct [H, g1] [h1, h2, g2]; 
   intro [g2] [H0, g3]; apply [H0, g3] [g4, g5]; 
   exact [h1, g4] [];   exact [h2, g5] [].
\end{verbatim}
\normalsize
}

Although the original Rocq script uses bullets (``-'') to indicate a structured proof with multiple subgoals, our proof script representation explicitly encodes the logical dependencies between tactics as a linear sequence of invocations. In particular, the \texttt{apply H0} tactic introduces two subgoals, labeled \texttt{g4} and \texttt{g5}, which are subsequently solved by \texttt{exact h1} and \texttt{exact h2}, respectively.
\end{example}


Finally, we define a \emph{proof} as a successful execution of a proof script on some initial state:
\begin{definition}[{\bf Proof}]
Let $\pstate_0$ be a proof state and $\pscript = \tinvoke_1; \ldots; \tinvoke_n$ be a proof script.  Executing $\pi$ on $\pstate_0$ yields a sequence of states (i.e., \emph{trace}) $\pstate_1, \ldots \pstate_n$ where $\pstate_{i+1} = \denot{\tinvoke_i}(\pstate_i)$. We say that $\pi$ is a \emph{proof} of $\pstate_0$ if $\pstate_n$ does not contain any goal identifiers in its domain. 
\end{definition}

\begin{example}
Consider the initial proof state, which corresponds to our lemma to be proven from Example~\ref{ex:script}: 
$ \small \sigma_0: [\texttt{g0} \mapsto ((\texttt{P1} 
 \land \texttt{P2}) \rightarrow (\texttt{P1} \rightarrow \texttt{P2} \rightarrow \texttt{P3}) \rightarrow \texttt{P3}) ] $.
Executing the proof script from Example~\ref{ex:script} results in the following trace:
\[
\footnotesize
\begin{array}{llll}
\sigma_1: & [\texttt{H} \mapsto (\texttt{P1} \land \texttt{P2}), \ \ \texttt{g1} \mapsto ((\texttt{P1} \rightarrow \texttt{P2} \rightarrow \texttt{P3}) \rightarrow\texttt{P3}) ] \\
\sigma_2: & [\texttt{h1} \mapsto \texttt{P1}, \ \ \texttt{h2} \mapsto \texttt{P2}, \ \ \texttt{g2} \mapsto ((\texttt{P1} \rightarrow \texttt{P2} \rightarrow \texttt{P3}) \rightarrow\texttt{P3})] \\ 
\sigma_3: & [\texttt{h1} \mapsto \texttt{P1}, \ \ \texttt{h2} \mapsto \texttt{P2}, \ \ \texttt{H0} \mapsto (\texttt{P1} \rightarrow \texttt{P2} \rightarrow \texttt{P3}), \ \ \texttt{g3} \mapsto \texttt{P3} ] \\
\sigma_4: & [\texttt{h1} \mapsto \texttt{P1}, \ \ \texttt{h2} \mapsto \texttt{P2}, \ \ \texttt{H0} \mapsto (\texttt{P1} \rightarrow \texttt{P2} \rightarrow \texttt{P3}), \ \ \texttt{g4}\mapsto \texttt{P1}, \ \ \texttt{g5} \mapsto \texttt{P2}] \\
\sigma_5: &   [\texttt{h1} \mapsto \texttt{P1},  \ \ \texttt{h2} \mapsto \texttt{P2}, \ \ \texttt{H0} \mapsto ( \texttt{P1} \rightarrow \texttt{P2} \rightarrow \texttt{P3}), \ \ \texttt{g5} \mapsto \texttt{P2}] \\
\sigma_6: &   [\texttt{h1} \mapsto \texttt{P1},  \ \ \texttt{h2} \mapsto \texttt{P2}, \ \ \texttt{H0} \mapsto ( \texttt{P1} \rightarrow \texttt{P2} \rightarrow \texttt{P3}) ] \\
\end{array}
\]
Since $\sigma_6$ does not contain any goals, this constitutes a proof of  our lemma.
\end{example}

\section{Semantic Proof Refactoring}\label{sec:refactor}
{
To define our tactic discovery problem, we first consider: “What makes a custom tactic useful?” In our setting, a necessary requirement is that the tactic can be applied to refactor existing proofs in a way that yields a desirable outcome—be it reduced proof size, improved maintainability, or some other benefit. In this section, we formally introduce the concept of \emph{proof refactoring}. Instead of relying on brittle, {syntactic} notions, our approach takes a {semantic} perspective by representing tactic-style proofs with \emph{tactic dependency graphs (\tdg)}.
}


\begin{definition}[{\bf \tdg of proof script}]
Let $\pscript$ be a proof script. A \emph{tactic dependence graph} for $\pscript$, denoted $\mathsf{TDG}(\pscript)$, is a directed acyclic graph $G = (V, E)$ where each node $v(\tname) \in V$ corresponds to a tactic invocation $\tinvoke = (\tname, X, Y) \in \pscript$ and each arc $(s(\tname), t(\tname'), \beta, \alpha )$ indicates that the formal input $\alpha$ for tactic invocation $t(\tname')$ corresponds to the formal output $\beta$ for tactic invocation $s(\tname)$.
\end{definition}

\begin{example}
Figure~\ref{fig:tdg-ex} shows the TDG for the proof script from Example~\ref{ex:script}. Here, an edge labeled $(a, b)$ indicates the formal output $a$ of the source node corresponds to the formal input $b$ of the target node.  We use the  signatures shown in Table~\ref{tab:built-in} to refer to the formal inputs and outputs of Rocq's built-in tactics. For example, the edge from $\nu_3$ to $\nu_4$ labeled $(g', p)$ indicates that formal output $g'$ of the \texttt{intro} tactic becomes the formal input $p$ of the \texttt{apply} tactic.
\end{example}

\begin{wrapfigure}{r}{0.22\textwidth}
    \centering
    \vspace{-0.15in}  \includegraphics[width=0.22\textwidth]{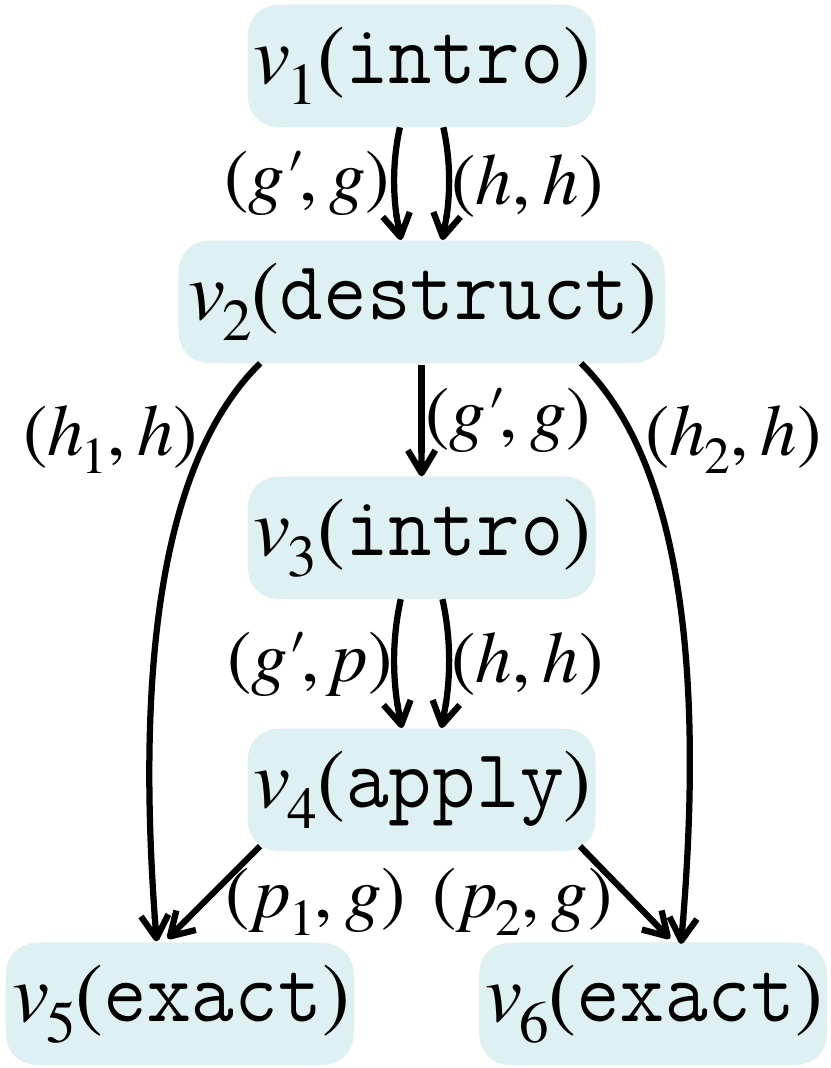}
    \vspace{-0.1in}
    \caption{TDG for Ex~\ref{ex:script}. 
    }
    \vspace{-0.1in}
    \label{fig:tdg-ex}
    \vspace{-0.1in}
\end{wrapfigure}

In general, multiple  syntactically different proof scripts can have the same \tdg. Given a graph $G$, we use $\mathsf{InducedProofs}(G)$ to denote the set of all proof scripts $\Pi = \{ \pscript_1, \ldots, \pscript_k \}$ such that $G$ is isomorphic to $\mathsf{TDG}(\pscript_j)$. Thus, we can view a \tdg as representing a combinatorially large class of syntactically different, but semantically equivalent, proof scripts. We can obtain all induced proofs of $G$ by considering all  topological sorts of $G$ that satisfy  branching constraints of the original proof {(see Section~\ref{sec:impl} for more details)}.

Note that we can also represent a tactic definition as a \tdg as long as its body expression can be expressed as a sequence of other tactic invocations.  However, to simplify the presentation of our algorithms, we augment the \tdg for each tactic definition with two special nodes to represent their formal inputs and outputs.  


\begin{definition}[{\bf Tactic \tdg}] 
Let $\tactic = (\tname, I, O, \pi) $ be a tactic definition where $\pi$ is a sequence of tactic invocations, and let $G_\pi = (V_\pi, E_\pi)$ be the \tdg of $\pi$. The \tdg of $\tactic$ is a directed acyclic graph $G = (V, E)$ such that  $V = V_\pi \cup \{ \inv, \outv  \}$ where $\inv, \outv$ represent the formal inputs and outputs of $\tactic$ respectively. Furthermore, let $I_{\alpha_i, \beta_j}$ denote the set of all tactic invocations of the form $(\tname', \_, \_)$  where the formal input $\alpha_i$ of $\tactic$ corresponds to formal input $\beta_j$ of $\tname'$, and let $O_{\alpha_k, \beta_l}$ denote the set of all tactic invocations of the form $(\tname', \_, \_)$  where the formal output  $\alpha_k$ of $\tau$ corresponds to formal output $\beta_l$ of $\tname'$. Then, $G$ contains edges:
\[
E = E_\pi \cup \bigcup_{ij} I_{\alpha_i, \beta_j} \cup \bigcup_{kl} O_{\alpha_k, \beta_l}
\]
\end{definition}

\begin{wrapfigure}{r}{0.31\textwidth}
 \vspace{-0.8in}
    \centering
    \includegraphics[width=0.31\textwidth]{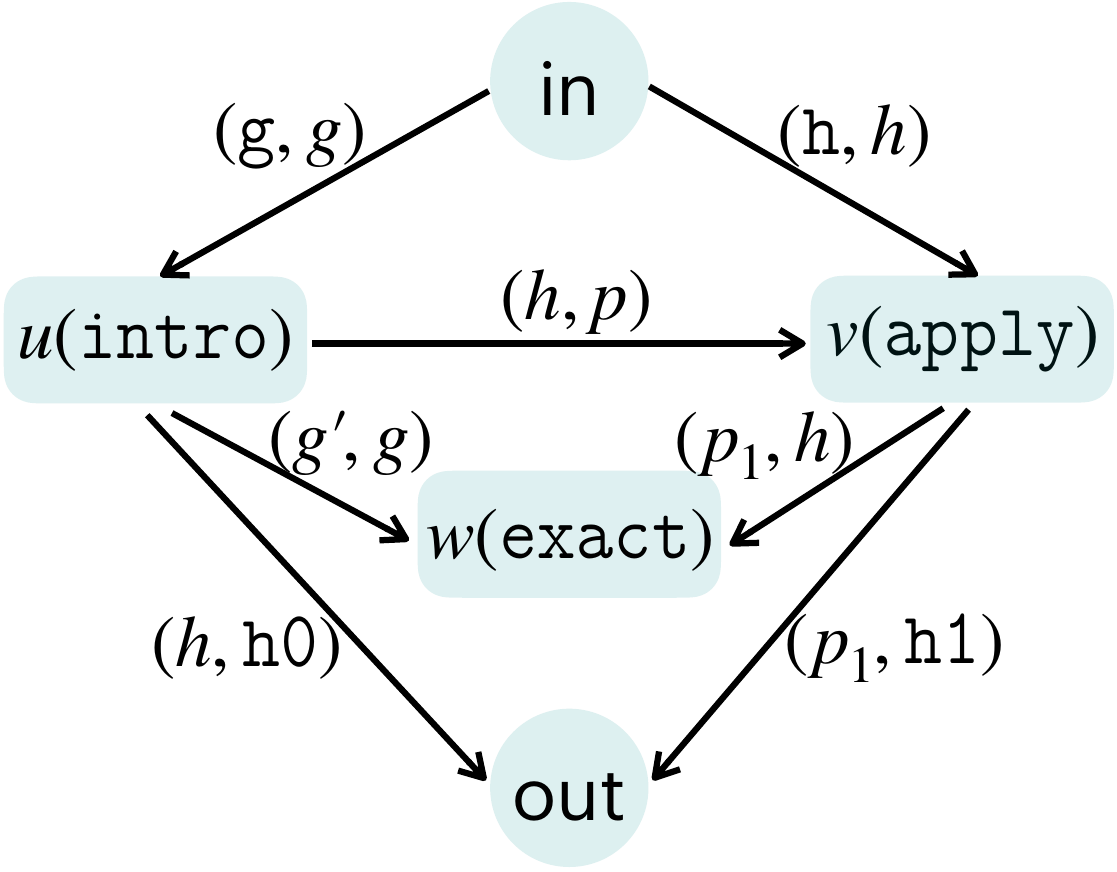}
    \vspace{-0.22in}
    \caption{Tactic TDG for Ex.~\ref{ex:tactic}.}
    \label{fig:tdg-tactic}
    \vspace{-0.15in}
\end{wrapfigure}
In other words, the TDG for a tactic contains \emph{special nodes and edges} that connect formal inputs and outputs in the tactic definition to the formal inputs and outputs of other tactic invocations.

\begin{example}\label{ex:tactic}
Consider the following tactic definition in Rocq: \\
\vspace{-0.1in}
{\small 
\begin{verbatim}
Ltac myTac h h0 h1:= intro h0; apply h0 in h as h1; exact h1. 
\end{verbatim}
}
\noindent In our internal representation, this tactic takes two formal arguments, namely goal $\texttt{g}$ and hypothesis \texttt{h}, and has two formal outputs \texttt{h0} and \texttt{h1}. The body expression of this tactic is represented using the following sequence of tactic invocations:
{\small
\begin{verbatim}
intro [g] [h0, g1]; apply [h, h0] [h1]; exact [h1, g1] []
\end{verbatim}
} \noindent
Figure~\ref{fig:tdg-tactic} shows the TDG representation of this tactic.  Note that there is an edge from node $u$ to the special exit node labeled $(h, {\tt h_0})$ since the formal output $h$ of the {\tt intro} tactic becomes the formal output {\tt h0} of the custom tactic. The edge between $\nu$ and the exit node is labeled $(p_1, {\tt h_1})$ similarly.
\end{example}

Intuitively, a proof script $\pscript$ could be refactored using a tactic $\tactic$ if $\tactic$ is a subgraph of $\pscript$ modulo isomorphism, subject to some extra restrictions. To make this more precise, we define \emph{isomorphic embedding} in the context of \tdg. 

\begin{definition}[{\bf Isomorphic embedding}]\label{def:isomorphic}
 A tactic \tdg $G=(V, E)$ is an isomorphic embedding into a proof  \tdg $G' = (V', E')$ iff there exists an injective function $f: V \backslash \{\inv, \outv\} \rightarrow V'$ such~that:
\begin{enumerate}[leftmargin=*]
\item Every vertex in $V \backslash \{ \inv, \outv \}$ is mapped to a vertex of $G'$ with the same tactic name, i.e., 
\[ \forall v \in V. \exists v' \in V'. f(v(\tname)) = v'(\tname)
\]
\item Every non-special edge in $G$ is mapped to a corresponding edge of $G'$ with the \emph{same edge label}:
\[
\forall (s, t, \beta, \alpha) \in E. \ (s \neq \inv \land t \neq \outv) \rightarrow  \exists (s', t', \beta, \alpha) \in E'.  f(s) = f(s') \land f(t) = f(t')
\]
\end{enumerate}
We refer to this injective function $f$ as the \emph{witness}. 
\end{definition}

\begin{wrapfigure}{r}{0.2\textwidth}
\vspace{-0.2in}
    \centering
    \includegraphics[scale=0.205]{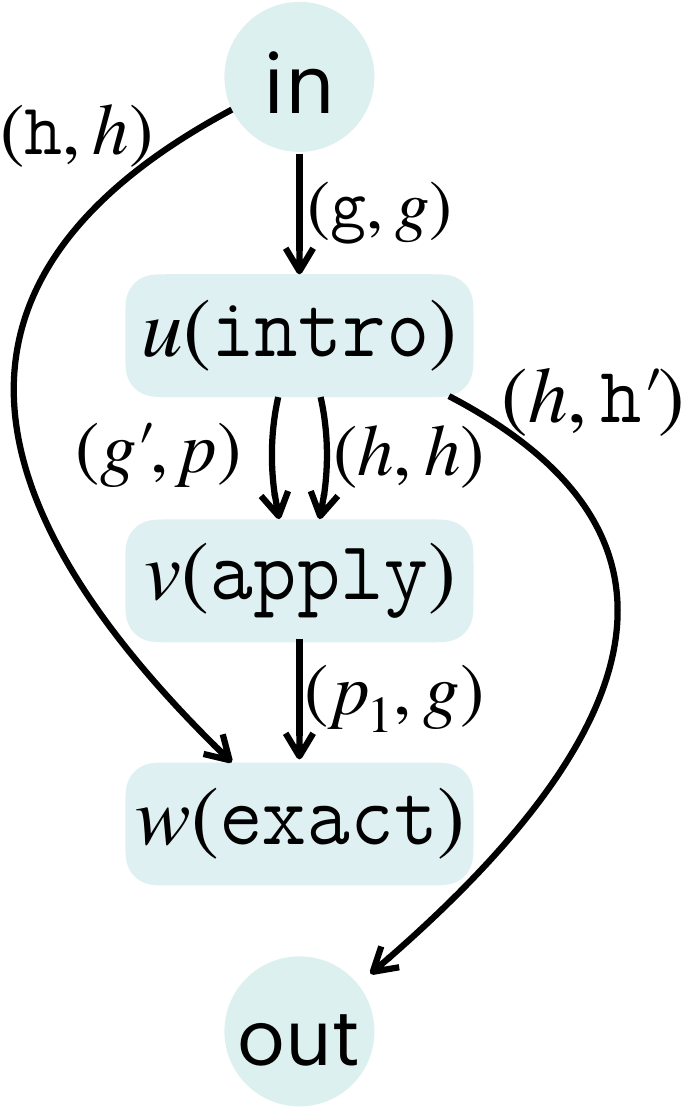}
    \vspace{-0.08in}
    \caption{TDG for Ex~\ref{ex:embedding}.}
    \vspace{-0.13in}
    \label{fig:tdg-tactic2}
\end{wrapfigure}
The two conditions in the above definition ensure agreement between labels of nodes and edges between the tactic TDG $G$ and the proof TDG $G'$ that $G$ is embedded into. 
However, it turns out that finding an isomorphic embedding between a tactic $\tactic$ and a proof script $\pscript$ is a necessary but not sufficient condition for refactoring $\pscript$ using $\tactic$. To see why this is the case, consider two TDGs, one with edges $\{{\tt a} \shortrightarrow {\tt b}, {\tt b} \shortrightarrow {\tt c}, {\tt a} \shortrightarrow {\tt c}\}$, and another with edges $\{{\tt a} \shortrightarrow {\tt d}, {\tt d} \shortrightarrow {\tt c}, {\tt a} \shortrightarrow {\tt c}\}$. 
While the subgraph consisting of nodes \(\texttt{a}\) and \(\texttt{c}\) appears in both proofs, it is not possible to extract a valid tactic that isolates these two steps. This is because node \(\texttt{c}\) depends on a node outside the tactic, which in turn relies on an intermediate result produced by the tactic, preventing it from being used as a standalone argument.
To avoid this issue, we must additionally ensure that the subgraph of $\pscript$ that is isomorphic to $\tactic$ is \emph{collapsible} into a single node, meaning that 
we can replace all incoming edges into the subgraph with the special entry node of the tactic and all outgoing edges with the special exit node.
We formalize this using the definition below:

\begin{definition}[{\bf Collapsible embedding}]\label{def:collapsible} Let $G = (V, E)$,   $G' = (V', E')$ be the \tdg's for a tactic definition  $\tactic$ and proof script $\pscript$ respectively such that $G$ is an isomorphic embedding into $G'$ with witness function $f$. Let $G \vdash u \leadsto v$ denote that node $v$ is reachable from node $u$ in graph $G$. We say that $G'$ is $f$-collapsible iff both of the following conditions hold:
\begin{enumerate}[leftmargin=*]
\item $\forall u, v \in \mathsf{Range}(f). \forall w \in V'. \  (G' \vdash u \leadsto w \ \land G' \vdash w \leadsto v) \Rightarrow w \in \mathsf{Range}(f)
$
\item $\forall u, v \in \mathsf{Dom}(f). \forall e \in E'. \  \ 
e= (f(u), f(v), \alpha, \beta)   \Rightarrow (u, v, \alpha, \beta) \in E $
\end{enumerate}
\end{definition}

Here, the first condition states that, if we have $f(a) = u $ and $f(b) = v$ and there is a path from $u$ to $v$ that includes $w$ in the middle, then $f$ must also map some node $c \in V$ to $w$. On the other hand, the second condition states that if $G'$ includes an edge between any pair of vertices that are in the range of $f$, then the corresponding edge must also exist in $G$.

To see why we require the first  property, note that any valid proof script induced by $G'$ must include  tactic invocations in the order $u, w, v$ since $w$ depends on an output of $u$ and $v$ depends on an output of $w$. Thus, if $f$ had only $u, v$ in its range but not $w$, the refactored proof could not allow the tactic invocations $u, v, w$ in the required order, thereby resulting in an invalid proof script.  Additionally, we  need the second property because a candidate tactic $G$ is not a valid embedding unless it includes \emph{all} required dependencies between a pair of tactic invocations. Intuitively, these conditions enforce that, if we replace the subgraph of $G'$ that is isomorphic to $G$ with a single node representing a new tactic invocation, then the resulting proof is still valid.

\begin{example}\label{ex:embedding}
Consider the proof script and its corresponding TDG from Figure~\ref{fig:tdg-ex}. We give examples and non-examples of collapsible isomorphic embeddings.

\begin{enumerate}[leftmargin=*]
    \item Consider the tactic from Example~\ref{ex:tactic} and its TDG in Figure~\ref{fig:tdg-tactic}. Note that both the proof script and the tactic contain the sequence of tactic invocations {\tt intro}, {\tt apply}, {\tt exact}. However, the function $ [ u \mapsto v_2, v \mapsto v_3, w \mapsto v_4 ]$ does not define an isomorphic embedding because it does not satisfy condition (2) from Definition~\ref{def:isomorphic}.
    \item Now consider the following tactic definition, whose TDG is shown in Figure~\ref{fig:tdg-tactic2}.
    {\small
    \begin{verbatim}
    Ltac myTac2 h0 h := intro h0. apply h0. exact h. 
\end{verbatim}
    }
    In our  representation, this tactic has formal inputs {\tt g} and {\tt h} and  output {\tt h'}. The TDG from Figure~\ref{fig:tdg-tactic2} is an isomorphic embedding into Figure~\ref{fig:tdg-ex} with the witness function $[u \mapsto v_3, v \mapsto v_4, w \mapsto v_5]$.
\end{enumerate}

\begin{minipage}{0.39\textwidth}
\begin{enumerate}[leftmargin=5pt]
    \setcounter{enumi}{2}
    \item Now, consider a subgraph $G_1$ of the TDG in Figure~\ref{fig:tdg-tactic2} that does not contain node $v$ as well as its incoming and outgoing edges. Then, the witness function $f$ from part (2) of this example is an isomorphic embedding but it is not collapsible, as it violates part (1) of Definition~\ref{def:collapsible}. 
    \item Next, consider a modified version $G_2$ of Figure~\ref{fig:tdg-tactic2} that does not contain the edge labeled $(h, h)$ between $u, v$. Then, the same witness function $f$ still defines an isomorphic embedding but it violates part (2) of Definition~\ref{def:collapsible}. Intuitively, $G_2$ is not a valid tactic modulo the original proof because it omits a \emph{required} dependency between {\tt intro} and {\tt apply}.
\end{enumerate}

\begin{adjustwidth}{-7.5pt}{0pt}  
Our proof refactoring procedure is presented in Algorithm~\ref{alg:refactor}: It takes as input a proof script $\pscript$ and a tactic definition $\tactic$, and returns a refactored proof script of $\pscript$.  The algorithm first constructs the \tdg's $G, G_c$ for  the tactic $\tactic$ and proof script $\pscript$ respectively  (line 2)  and then it repeatedly contracts $G_c$ in the while loop (lines 3--12) by finding a collapsible
\end{adjustwidth}
\end{minipage}
\hspace{6pt}
\begin{minipage}{0.55\textwidth}
\fbox{
    \begin{minipage}{0.95\textwidth}
    \small
    \vspace{-0.1in}
    \begin{algorithm}[H]
    \begin{algorithmic}[1]
    \Procedure{Refactor}{$\tactic, \pscript$}
     \Statex \Input{A tactic definition $\tactic$ and a proof script $\pscript$.}
    \Statex \Output{A refactored proof script of $\pscript'$}
    \State $G, G_c \assign {\sf ConstructTDG}(\tactic),  {\sf ConstructTDG}(\pscript)$
    \While{true}
    \State $f \assign {\sf FindEmbedding}(G, G_c)$
    \If{$f \equiv \bot$} {\bf break}
    \EndIf
    \State $V \assign \{v \ | \ v \in {\sf Nodes}(G_c) \wedge  v \in {\sf Range}(f)\}$
    \ForAll{$v \in V$}
    \ForAll {$u \in {\sf Parents}(v) \setminus {\sf Range}(f)$}
    \State $G_c \assign {\sf RewireIn}(u, v, G, G_c)$
    \EndFor
        \ForAll {$u \in {\sf Children}(v) \setminus {\sf Range}(f)$}
    \State $G_c \assign {\sf RewireOut}(u, v, G, G_c)$
    \EndFor
      \State $G_c \assign {\sf Contract}(G_c, V, v(\tau.\eta))$
    \EndFor
    \EndWhile
  
    \State \Return $\pscript' \in ({\sf InducedProofs}(G_c))$
    \EndProcedure
    \end{algorithmic}
    \end{algorithm}
    \vspace{-0.25in}
      \end{minipage}
    }
    \captionsetup{type=figure} 
    \vspace{-0.03in}
    \caption{Procedure for refactoring a proof script $\pscript$ using tactic  $\tactic$. This procedure uses auxiliary procedures ${\sf RewireIn}$ 
    (resp. ${\sf RewireOut}$) to change the labels of the incoming (resp. outgoing) edges of $v$. Given edge $(v', v, \alpha, \beta)$ in the \tdg $G_c$ of $\pscript$ with corresponding edge $(\inv, f^{-1}(v), \gamma, \beta)$ in the \tdg $G$ of $\tactic$,  \textsf{RewireIn} replaces that edge with  $(v', v, \alpha, \gamma)$.  Similarly, given edge $(v', v, \alpha, \beta)$ in $G_c$ with corresponding edge $(f^{-1}(v'), \outv, \alpha, \gamma)$ in $G$, \textsf{RewireOut} replaces that edge with  $(v', v, \gamma, \beta)$. Also, \textsf{FindEmbedding} finds a collapsible isomorphic embedding of $G$ into $G_c$  and \textsf{Contract}  performs standard graph contraction. }
    \label{alg:refactor}
\end{minipage}

\end{example}



{\noindent isomorphic embedding of $G$ into $G_c$ (line 4). If \textsf{FindEmbedding} does not return a witness (line 5),  further refactoring is not possible, so the algorithm returns any one of the induced proofs associated with $G_c$ (line 13). Otherwise, the algorithm proceeds in two steps. First, recall that the identifiers used in the tactic definition $\tactic$ are different from those used in the original tactic invocation, so we need to change the edge labels to ensure that the refactoring is correct. This is done in lines 8-11 using functions called \textsf{RewireIn} and \textsf{RewireOut}. In particular, given a node $v$ whose parent $u$ is not part of the embedding, we need to change the incoming edges to $v$ to use the correct identifier for $\tactic$ using the special entry edges in $G$ from $\inv$ to $f^{-1}(v)$. \textsf{RewireOut} does something very similar but for ``exit'' edges whose children are not part of the embedding. Finally, the second step (call to \textsf{Contract} at line 12) replaces the subgraph of $G_c$ induced by vertices $V$ with a single fresh node $v(\tactic.\tname)$ where $\tactic.\tname$ is the name of tactic $\tactic$. Since the replacement of a subgraph with a single node is the standard \emph{graph contraction} operation~\cite{diestel2024graph}, we do not provide the implementation of \textsf{Contract}.}
\begin{figure}[H]
\vspace{-0.15in}
    \centering
    \subfigure[Original proof TDG]{%
    \includegraphics[width=0.3\textwidth]{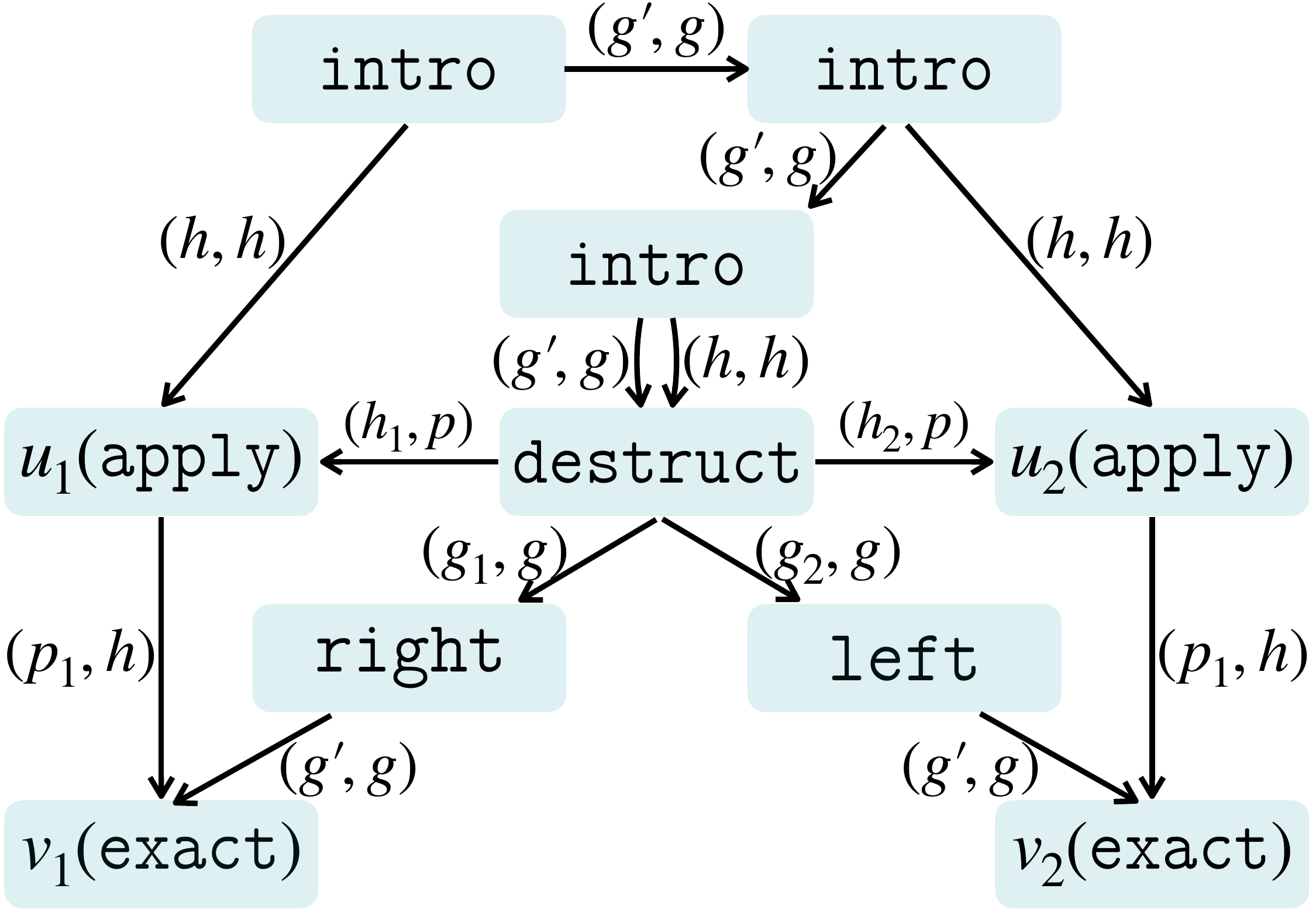}
        \label{fig:refactor-proof}
    }
\hfill
    \subfigure[A new tactic]{%
        \includegraphics[width=0.2\textwidth]{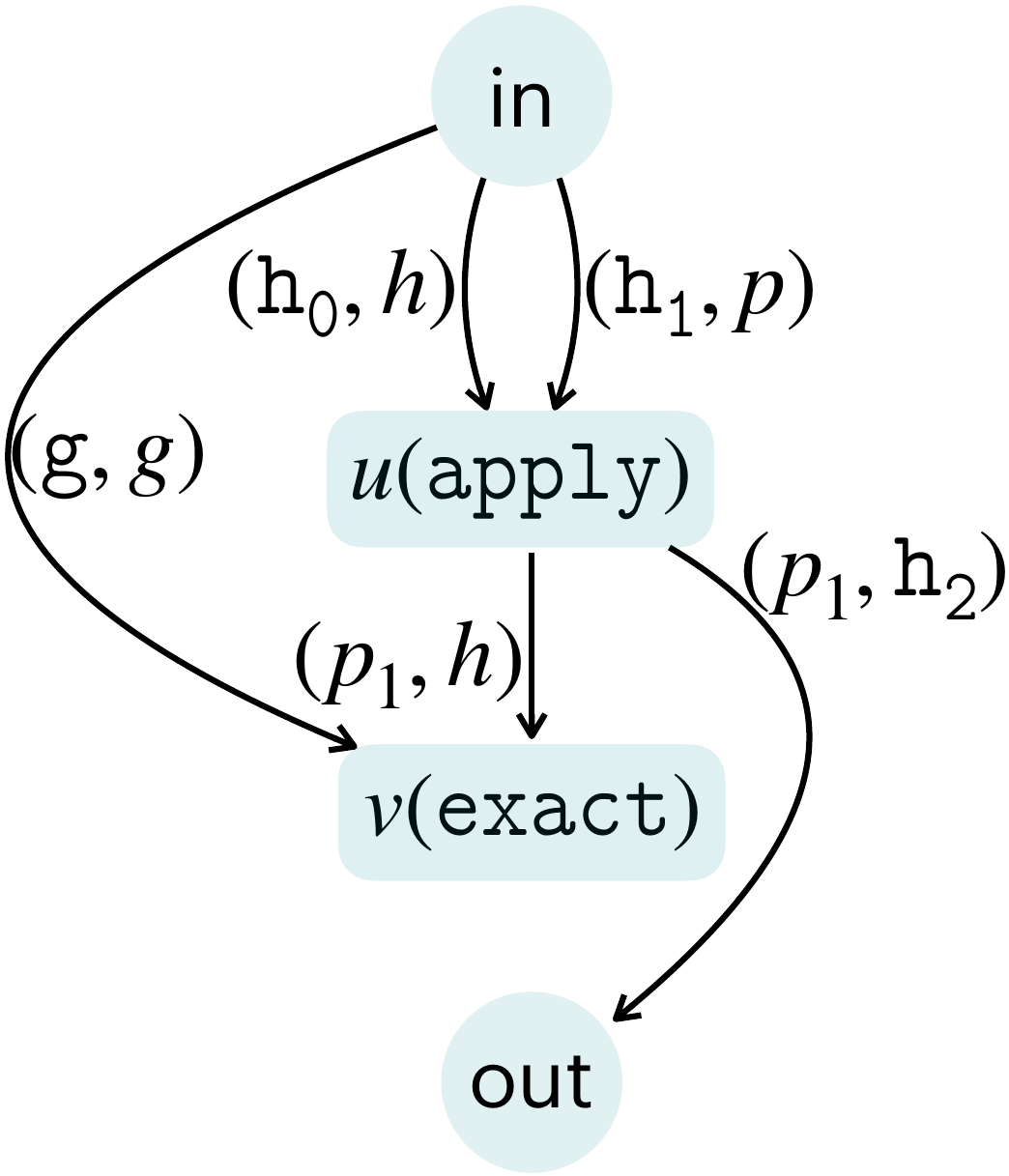}
        \label{fig:refactor-tactic}
    }\hfill
    \subfigure[Refactored proof]{%
     \includegraphics[width=0.45\textwidth]{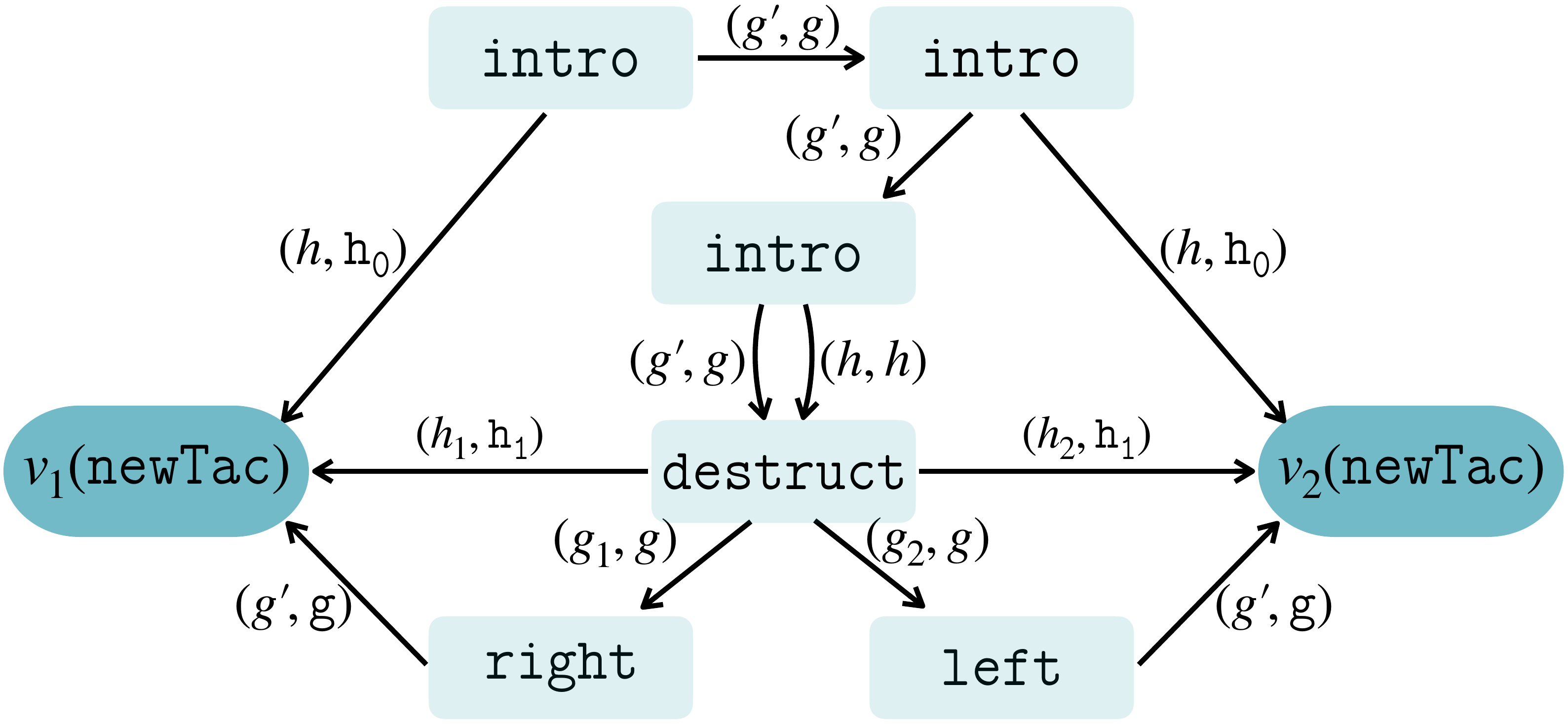}
        \label{fig:refactored}
    }
    \vspace{-0.05in}
    \caption{Refactoring a proof using a new tactic.}
    \label{fig:refactoring}
    \vspace{-0.1in}
\end{figure}

\vspace{0.03in}
\begin{example}
Consider the following Rocq proof  and its  TDG shown in Figure~\ref{fig:refactor-proof}:
\small
\begin{verbatim}
Lemma example:  (A -> D) -> (B -> C) -> ((A \/ B) -> (C \/ D)).
Proof. intro. intro. intro. destruct H1.
    - apply H in H1. right. exact H1.
    - apply H0 in H1. left. exact H1.
\end{verbatim}
\normalsize
Also, consider the following tactic whose TDG is shown in Figure~\ref{fig:refactor-tactic}:
\small
\begin{verbatim}
Ltac newTac h h' := apply h in h'. exact h'.
\end{verbatim}
\normalsize
This tactic can be embedded into the proof using the witness functions $f_1: [u \mapsto u_1, v \mapsto v_1] $ and $f_2: [u \mapsto u_2, v \mapsto v_2] $, and Figure~\ref{fig:refactored} shows the refactored TDG representing the following proof:
\small
\begin{verbatim}
Proof. intro. intro. intro. destruct H1.
    - right. newTac H H1.
    - left.  newTac H0 H1. 
\end{verbatim}
\normalsize
\end{example}

{\bf \emph{Remark.}} If there are multiple collapsible embeddings of a tactic $\tau$ into a proof $\pi$ such that these embeddings overlap, then the result of {\sc Refactor} may not be unique, depending on which embedding is discovered first. For simplicity, the rest of the paper assumes that collapsible embeddings do not overlap; however, our implementation handles this situation (see Section~\ref{sec:impl}).

\section{Tactic Library Synthesis Problem}\label{sec:problem}

In this section, we formalize the \emph{tactic discovery} problem addressed in the rest of this paper.


\begin{definition}[{\bf Tactic discovery problem}]
\label{def:prob-single-tactic}
Let \(\corpus\) be a corpus of proof scripts, and let
$\mathcal{O} \colon (\tactic \times \corpus) \to \mathbb{R}$
be an objective function that evaluates the quality of a tactic
\(\tactic\) on \(\corpus\).
The goal of the tactic discovery problem is to find a tactic \(\tactic\)
that maximizes $\mathcal{O}$ (i.e., 
$\arg \max_{\tactic} \, \mathcal{O}(\tactic, \corpus)$).
\end{definition}

{
While the optimization objective $\mathcal{O}$ could be defined in a number of ways, we mainly consider \emph{compression power} as our primary objective, even though our learning algorithm can also be adapted to  handle other types of measures, as long as $\mathcal{O}(\tactic, \corpus)$ can be defined as:
\small
\[
\mathcal{O}(\tactic, \corpus) \;=\;
f\bigl(\corpus,\;\textsc{Refactor}(\corpus,\tactic)\bigr),
\]
\normalsize
where $f$ is some function $\corpus \times \corpus' \rightarrow \mathbb{R}$ and {\sc Refactor} is the refactoring operation defined in the previous section. 
In particular, prior work on library learning in the program synthesis literature~\cite{babble,dreamcoder,stitch} has argued that reducing the overall size of a corpus---i.e., maximizing how much a learned abstraction ``compresses'' existing code---is an effective proxy for identifying broadly applicable patterns. Hence, we define an adaptation of compression power to the ITP setting. 
}


\begin{definition}[{\bf Compression power}]\label{def:cp} Given proof corpus and tactic $\tactic$, let
$
\corpus' = \{ \pscript' \ | \ \pscript \in \corpus \land \pscript' = \textsc{Refactor}(\pscript, \tactic) \}
$.
Then, the compression power of $\tactic$ modulo $\corpus$, denoted $\cp(\tactic, \corpus)$, is:
\small
\[
\cp(\tactic, \corpus) = \left  ( \frac{\sum_{\pscript \in \corpus} \mathsf{Size}(\pscript) }{\sum_{\pscript' \in \corpus'} \mathsf{Size}(\pscript')} \right )
\]
\normalsize
\end{definition}
Intuitively, 
the larger the compression power, the more effective the tactic is in reducing the corpus size. 
Finally, we can define the tactic library discovery problem as follows:

\begin{definition}[{\bf Library synthesis problem}]\label{def:lib-synth}
Let $\Pi_1$ be a corpus of proof scripts. The \emph{tactic library synthesis}  problem is to find a sequence of tactics $\tactic_1, \ldots, \tactic_n$  such that each $\tactic_i$ is a solution to the tactic discovery problem for corpus $\corpus_i = \textsc{Refactor}(\tactic_{i-1}, \corpus_{i-1})$. 
\end{definition}

{\bf \emph{Remark.}}  One might consider defining the library synthesis problem as finding a set of tactics that \emph{collectively} achieve the maximum compression power for the proof corpus. However, the compression power of a set of tactics depends on the \emph{order} in which those tactics are applied. This makes such a definition either ill-formed or necessitates evaluating all possible permutations of tactic application orders, which is computationally infeasible and, in our small-scale experiments, yields no meaningful gain in compression.  Consequently, we mirror prior library-learning works \cite{stitch, peano} and formulate the library synthesis problem as finding a sequence of tactics that achieve maximal compression at each step.

\section{Learning Tactic Libraries}\label{sec:learning}

We now describe our learning technique for discovering useful tactics from a given corpus. 

\subsection{Preliminary Definitions}
We start this section by presenting some definitions that are useful for describing our algorithm. 

\begin{definition}[\bf Witness set]\label{def:ws}
    Let $G, G'$ be the \tdg's of a tactic and proof script respectively. A witness set of $G$ and $G'$, denoted $\ws(G, G')$,  is the set of all witness functions proving that $G$ is an isomorphic embedding into $G'$.
\end{definition}

In this definition, we deliberately do not require a \emph{collapsible} isomorphic embedding; specifically, \( f \) only needs to satisfy Definition~\ref{def:isomorphic}. As we will demonstrate later in this section, our algorithm incrementally constructs the witness set and subsequently filters out non-collapsible candidates. This approach is necessary because the collapsibility criterion cannot be enforced incrementally.

\begin{definition}[\bf Embedding vector]\label{def:ev}
    An embedding vector $\ev$ for a tactic $\tactic$ and a proof corpus $\corpus$ is a mapping from each proof $\proofsym \in \corpus$  to the witness set $\ws(\tdg(\tactic), \tdg(\proofsym))$.
\end{definition}

That is, an embedding vector for  $\tactic$ maps each proof $\proofsym$ to the witness set between $\tactic$~and~$\proofsym$.

\begin{definition}[\bf Tactic candidate]
    A tactic candidate $\tc$ for a proof corpus $\corpus$ is a pair $(G, \ev)$ where $G$ is a TDG and $\ev$ is an embedding vector of $G$ for $\corpus$. 
\end{definition}

Given a tactic candidate $\tc = (G, \ev)$, we  write $\tc.G$ to denote $G$ and $\tc.\ev$ to denote $\ev$. Intuitively,  $G$ represents a  \tdg that \emph{could} be used to refactor some proofs in the corpus, and its embedding vector $\ev$ allows us to efficiently compute witnesses for \tdg's that are extensions of $G$. As we will see in the next section, our tactic discovery algorithm uses the tactic candidate data structure as a key building block when searching for valid tactics. 

\begin{definition}[\bf Frequency] Let $\tc = (G, \ev)$ be a tactic candidate and $\corpus$ a proof corpus. The \emph{frequency} of $\tc$ in $\corpus$, denoted $\freq(\tc, \corpus)$, is defined as follows:
\small
\[
\freq(\tc, \corpus) = \sum\limits_{\proofsym \in \corpus} \sum\limits_{f \in \ws(G, \tdg(\proofsym))} \mathmybb{1}[{\sf IsCollapsible}(f, G, \tdg(\pi))]
\]
\normalsize
where $\mathmybb{1}[\cdot]$ is the standard indicator function and $\mathsf{IsCollapsible}(f, G, G')$ evaluates to true iff $f$ defines  a collapsible isomorphic embedding of $G'$ into $G$.
\end{definition}

Intuitively, the frequency of a tactic candidate $\tc$ counts the number of times that $\tc$ can be used in refactoring the proofs in the corpus. 
The higher the frequency of a tactic candidate, the more frequently it can be applied when refactoring proofs in the corpus. However, to evaluate how useful a tactic is in compressing the corpus size, we also need to take into account the size of the tactic. To this end, we define the \emph{effectiveness} of a tactic candidate as follows:
\vspace{14pt}
\begin{wrapfigure}{r}{0.63\textwidth}
\vspace{-0.1in}
\fbox{
    \begin{minipage}{0.6\textwidth}
    \small
    \begin{algorithm}[H]
    \begin{algorithmic}[1]
    \vspace{-0.1in}
    \Procedure{LearnTactic}{$\corpus$}
    \Statex \Input{A proof corpus $\corpus$}
    \Statex \Output{A tactic $\tactic$ that achieves maximum compression of $\corpus$}
    \State $\mathcal{G} \assign {\sf ConstructTDGs}(\corpus)$
    \State $R \assign \textsc{LearnGraphGrammar}(\mathcal{G})$
    \State $\worklist \assign \textsc{InitWorklist}(R, \corpus)$
    \State $r \assign \{{\sf Candidate}=\bot, {\sf Eff}=0 \}$
    \While{$\worklist \neq \emptyset$}
    \State $\tc \assign \worklist.{\sf dequeue}()$
    \If{ $\textsc{UpperBound}(\tc, \corpus) < r.{\sf Eff}$} {\bf continue}
    \EndIf
    \If{$\strength(\tc, \corpus) >  r.{\sf Eff}$}
    \State $r \assign \{{\sf Candidate}=\tc, {\sf Eff}=\strength(\tc, \corpus)\}$
    \EndIf
    \State $\worklist.{\sf enqueue}(\textsc{Expand}(\tc, R, \corpus))$
    \EndWhile
    \State \Return ${\sf MakeTactic}(r.{\sf Candidate})$
    \EndProcedure
    \vspace{-0.05in}
    \end{algorithmic}
    \end{algorithm}
    \vspace{-0.23in}
    \end{minipage}
}
\vspace{-0.07in}
\caption{Procedure for learning a tactic that results in maximum compression of proof corpus $\corpus$. Procedure names that are in {\sc SmallCaps} font are defined in separate algorithms and explained in the rest of this section. On the other hand, procedure names that are written in {\sf Sans Serif} font are only explained in text. At a high level, this algorithm iteratively explores tactic candidates, pruning those whose expansions cannot yield a higher compression power than a previously encountered tactic.}
\vspace{-0.1in}
\label{alg:learn_tactic}
\end{wrapfigure}

\vspace{-25pt}
\begin{definition}[\bf Effectiveness]\label{def:strength}
Let $\tc = (G, \ev)$ be a tactic candidate and $\corpus$ a proof corpus. The \emph{effectiveness} of $\tc$ in $\corpus$, denoted $\strength(\tc, \corpus)$, is defined as 
$
\strength(\tc, \corpus) = (\mathsf{Size}(G) - 1) \times \freq(\tc, \corpus)
$.
\end{definition}

In other words, effectiveness takes into account both the frequency and the size of the tactic. \\
{\noindent In this definition, we subtract $1$ from the size of $G$, because when an embedding of the tactic is used}\\
{\noindent for contracting the proof, the size of the proof shrinks by $\mathsf{Size}(G)-1$. Intuitively,  $\strength(\tc, \corpus)$ can be used to accurately characterize the compression power of a tactic -- the higher the value 
 of $\strength(\tc, \corpus)$, the more effective $\tc$ is for compressing the proof corpus $\corpus$. }

As we will see in the next section, our algorithm uses this metric as a pruning criterion during search.\changed{\footnote{To adapt our learning algorithm in Section~\ref{sec:learn-tactic} to optimization objectives other than compression power, one needs to define a suitable instantiation of function $\strength$ for the corresponding objective.}}

\subsection{Tactic Discovery Algorithm}\label{sec:learn-tactic}
In this section, we describe our algorithm for learning a single tactic that {maximizes the desired objective}. Our top-level algorithm, called {\sc LearnTactic}, is presented in Figure~\ref{alg:learn_tactic}: It takes in a proof corpus $\corpus$ and returns a single tactic $\tactic$ that maximizes {$\textsc{CP}(\tactic, \corpus)$}. The algorithm starts by constructing \tdg's for each proof in the corpus and then calls the {\sc LearnGraphGrammar} procedure (described later), for learning a graph grammar $R$ that can be used to construct \tdg's of tactic candidates.  Intuitively, $R$  encapsulates recurring patterns within the proof corpus and provides a set of rules to guide the exploration of potential tactic candidates.

Lines 4-11 of {\sc LearnTactic} systematically explore different tactic candidates. Specifically, the algorithm initializes the worklist $\worklist$ with all the single-node tactics from the grammar $R$ and initializes a record $r$ to track the best tactic discovered so far.  In each iteration of the loop (lines 6--11), the algorithm dequeues an existing tactic candidate $\tc$ and decides whether or not to continue expanding it. In particular, if it can prove that \emph{no expansion} of $\tc$ will result in a higher compression power than the best discovered tactic, it discards $\tc$ from the search space. Otherwise, it proceeds to compute the actual effectiveness of $\tc$ using the function $\strength(\tc, \corpus)$ from Definition~\ref{def:strength}. If $\strength(\tc, \corpus)$ exceeds that of the previous best tactic, the result $r$ is updated to $\tc$. Additionally, since expansions of $\tc$ might have even higher compression power, the algorithm invokes the {\sc Expand} procedure to obtain other candidate tactics that can be produced by expanding $\tc$ using the productions in $R$. Upon termination, $r$ contains a tactic that maximizes compression power for the entire corpus. In the remainder of this section, we elaborate on the auxiliary procedures used in {\sc LearnTactic}.

\vspace{0.05in}
\noindent
{\bf \emph{Learning Graph Grammar.}} While our tactic learning algorithm is inspired by top-down enumerative program synthesis~\cite{lambda2,myth}, a key difference is that we do not have a context-free grammar defining the space of possible TDGs. Hence, our algorithm first learns a graph grammar that can be used to construct \tdg's of possible tactics by analyzing the proof corpus. 

\begin{figure}[H]
\vspace{-0.03in}
\fbox{
    \small
    \begin{mathpar}\resizebox{1\textwidth}{!}{    
    \inferrule{\quad \quad G \in \mathcal{G}  \quad v(\eta) \in {\sf Nodes}(G)  \quad v'(\eta') \in {\sf Nodes}(G) \quad \quad
    \theta = \{(\alpha, \beta) \ | \ (v(\eta), v'(\eta'), \alpha, \beta) \in {\sf Edges}(G) \} \} \quad \quad}{\mathcal{G} \vdash \eta \shortrightarrow (\eta', \theta)}} 
    \end{mathpar}
    }
    \vspace{-0.05in}
    \caption{Inference rule defining the \textsc{LearnGraphGrammar} procedure}
    \label{alg:learn_grammar}
    \vspace{-0.05in}
\end{figure}
\begin{definition}[\bf \tdg grammar]  
A \tdg grammar is defined by a set of production rules of the form $\eta \shortrightarrow (\eta', \theta )$, where each rule specifies that a single-node graph \(G = (\{v(\eta)\})\) can be transformed into a new graph \(G' = (\{v(\eta), v(\eta')\}, \{(v(\eta), v(\eta'), \alpha, \beta) \ | \ (\alpha, \beta) \in \theta \})\). 
\end{definition}

Intuitively, each production $\eta \rightarrow (\eta', \theta)$ states that a node labeled \(\eta\) in a \tdg can be connected to other nodes labeled  \(\eta'\) via arcs whose labels are specified  by $\theta$.  These productions are mined from the proof corpus based on the following observations:
\begin{enumerate}[leftmargin=*]
\item  If a certain tactic never occurs in the proof corpus, it also cannot appear in any learned tactic.
\item If there is no dependency between a pair of tactics in the corpus, there is no point in learning a tactic that involves a spurious dependency between them.
\item If there exists a dependency between a pair of tactics in a proof $\proofsym$ in the  corpus, then the tactic \tdg should either not include that dependency or, else, it should include \emph{all} the input-output dependencies --- otherwise, we cannot find a \emph{collapsible} isomorphic embedding.
\end{enumerate}

Based on these ideas, Figure~\ref{alg:learn_grammar} presents the graph grammar learning procedure as a single inference rule deriving a judgment of the form:
$
\mathcal{G} \vdash \eta \rightarrow (\eta', \theta)
$
The productions $R$ used in the {\sc LearnTactic} algorithm include the set of all rewrite rules $r$ such that $\mathcal{G} \vdash r $ according to Figure~\ref{alg:learn_grammar}.

\vspace{0.05in}
\noindent
{\bf\emph{Worklist Initialization.}} Next, we consider the {\sc InitWorklist} procedure presented in Figure~\ref{alg:init_worklist}. The idea behind this procedure is very simple: It constructs a set of single node \tdg's based on the non-terminals in  graph grammar $R$. Then, for each such single node \tdg, it computes the tactic's embedding vector by going over each proof in the corpus. Hence, the worklist is initialized to tactic candidates that consist of single node \tdg's and their corresponding embedding vector. 

\vspace{0.1in}
\noindent 
{\bf \emph{Generating New Tactic Candidates.}}  Recall that the {\sc LearnTactic} procedure generates new tactic candidates by calling the {\sc Expand} procedure, which is presented in Figure~\ref{alg:expand}. This algorithm takes as input an existing tactic candidate $\tc = (G, \ev)$ and the learned graph grammar $R$ and produces a new set $\Theta$ of tactic candidates, where each $\tc_i \in \Theta$ is an expansion of $\tc$. Specifically, let $\tc_i = (G_i, \ev_i)$ be one of the new tactic candidates produced by {\sc Expand}. Here, every $G_i$ is a strict supergraph of $G$ and always contains  additional edges that are not in $G$. Additionally, some of these $G_i$'s may also contain a \emph{single} additional node that is not in $G$.

As shown in Figure~\ref{alg:expand}, the {\sc Expand} procedure considers each production $\eta \rightarrow (\eta', \theta) \in R$ (line 3) and locates all  nodes $v \in G$ with $\eta$. Then, at line 6, it calls {\sc Apply} (discussed later) to produce a new tactic candidate $G'$  that includes a fresh node $v'$ labeled $\eta'$, along with arcs  labeled $(\alpha, \beta) \in \theta$ between $v$ and $v'$. Additionally, since there may be existing nodes labeled $\eta'$ in  $G$, the inner loop at lines 7--8 also adds additional edges from $v$ to each existing node $v_i$ labeled $\eta'$.  

We now also briefly explain the auxiliary {\sc Apply} procedure defined in Algorithm~\ref{alg:expand-production}. Given a tactic candidate $\tc = (G, \ev)$ where $G$ contains a node $v$, {\sc Apply} first constructs a new \tdg $G'$ that includes a (possibly new) node $v'$ as well  as edges between $v$ and $v'$ with labels $\theta$.  However, since a tactic candidate also contains the embedding vector for the tactic, the loop in lines 7--10 constructs a new embedding vector for $G'$ by extending the existing witness functions in $\ev$.
Finally, if the frequency of the resulting tactic candidate is less than two, this means that the new tactic candidate (or any of its future expansions) are \emph{not} useful for compressing the proof corpus; hence, {\sc Apply} returns the new tactic candidate $\tc'$ only if $\freq(\tc', \corpus)$ is at least 2. 

\begin{figure}[t]
\hspace{-0.17in}
\small
    \centering
    \vspace{-0.04in}
    \begin{minipage}{0.46\textwidth}
        \centering
        \vspace{-0.03in}
        \fbox{
        \begin{minipage}{\textwidth}
        \small
        \begin{algorithm}[H]
        \begin{algorithmic}[1]
        \vspace{-0.06in}
            \Procedure{InitWorklist}{$R, \corpus$}
                \Statex \Input{Graph grammar $R$, proof corpus $\corpus$}
                \Statex \Output{A  worklist of tactic candidates $\worklist$}
                \State $S = \{\eta \ | \ (\eta \shortrightarrow (\eta', E)) \in R \}$; $\worklist \assign \emptyset$ 
                \ForAll{$\eta \in S$}
                    \State $G = (\{v(\eta)\}, \varnothing)$
                    \State $\ev \assign [ \pi \mapsto \bot \ | \ \pi \in \corpus ] $
                    \ForAll{$\proofsym \in \corpus$}
                       \State $\ev[\proofsym] \assign {\sf GetWitness}(G, \tdg(\proofsym))$
                    \EndFor
                    \State $\worklist.{\sf enqueue}((G, \ev))$
                \EndFor
                \State \Return $\worklist$
            \EndProcedure
        \end{algorithmic}
        \end{algorithm}
        \end{minipage}
        }
    \vspace{-0.06in}
    \caption{The \textsc{InitWorklist} procedure for constru-\\cting the initial worklist used in {\sc LearnTactic}. }
    \label{alg:init_worklist}
    \end{minipage}
    \hspace{0.2in}
    \begin{minipage}{0.46\textwidth}
        \centering
        \vspace{-0.03in}
        \fbox{
        \begin{minipage}{\textwidth}
        \small
        \begin{algorithm}[H]
        \begin{algorithmic}[1]
        \vspace{-0.1in}
            \Procedure{Expand}{$\tc, R, \corpus$}
                \Statex \Input{A tactic candidate $\tc = (G, \ev)$, graph grammar $R$, corpus $\corpus$}
                \Statex \Output{A new set of tactic candidates}
                \State $\Theta \assign \emptyset$
                \ForAll{$\eta \shortrightarrow (\eta', \theta) \in R$}
                    \ForAll{$ v \in {\sf FindNodes}(G, \eta)$}
                        \State $v' \gets {\sf Fresh}(\eta')$
                        \State $\Theta \assign \Theta \cup \textsc{Apply}(\tc, v, v', \theta, \corpus)$
                        \ForAll{$v_i \in {\sf FindNodes}(G, \eta')$}
                            \State $\Theta \assign \Theta \cup {\textsc{Apply}}(\tc, v, v_i, \theta, \corpus)$
                        \EndFor
                    \EndFor
                \EndFor
                \State \Return $\Theta$
            \EndProcedure
        \end{algorithmic}
        \end{algorithm}
        \vspace{-0.25in}
        \end{minipage}
        }    
    \vspace{-0.05in}
    \caption{Procedure for expanding a given tactic candidate $\tc$ using graph grammar $R$. }
    \label{alg:expand}
    \end{minipage}
    \vspace{-0.15in}
\end{figure}

\begin{figure}[t]
\hspace{-0.17in}
\small
    \centering
    \begin{minipage}{0.46\textwidth}
    \vspace{-0.02in}
        \centering
        \fbox{ 
        \begin{minipage}{\textwidth}
        \vspace{0.01in}
        \small
        \begin{algorithm}[H]
        \begin{algorithmic}[1]
        \vspace{-0.08in}
  \Procedure{Apply}{$\tc, v, v', \theta, \corpus$}
            \Statex \Input{A tactic candidate $\tc = (G = (V, E), \ev)$, nodes $v, v'$,  edge labels $\theta$, and proof corpus $\corpus$}
            \Statex \Output{Empty set or singleton tactic candidate}
            \State $V' \gets (V \cup \{v'\})$
              \State $E' \gets \{E \cup \{(v, v', \alpha, \beta) \ | \ (\alpha, \beta) \in \theta \})$
            \State $G' \assign (V', E')$
            \State $\ev' \assign [ \proofsym \mapsto \varnothing \ | \ \proofsym \in \mathsf{Dom}(\ev) ];$
            \State $  \tc' \assign (G', \ev')$ 
            \ForAll{$\proofsym \in {\sf Dom}(\ev)$}
                \ForAll{$f \in \ev[\proofsym]$}
                \State $F \gets {\sf Extend}(f, G', \tdg(\proofsym))$
                \State $\ev'[\proofsym] \assign \ev'[\proofsym] \cup F$
                \EndFor
            \EndFor
            \If{$\freq(\tc', \corpus) \geq  2$}  \Return $\{\tc'\}$
            \Else  \ \Return $\varnothing$
            \EndIf
            \EndProcedure
        \end{algorithmic}
        \end{algorithm}
        \vspace{-0.15in}
        \end{minipage}
        }
    \vspace{-0.02in}
    \caption{Procedure for applying a production from the graph grammar on nodes $v, v'$. }
        \label{alg:expand-production}
    \end{minipage}
    \hspace{0.2in}
    \begin{minipage}{0.46\textwidth}
        \centering
        \fbox{
        \begin{minipage}{1\textwidth}
        \small
        \begin{algorithm}[H]
        \begin{algorithmic}[1]
        \vspace{-0.1in}
            \Procedure{UpperBound}{$\tc$, $\corpus$}
            \Statex \Input{A tactic candidate $\tc = (G, \ev)$ and a proof corpus $\corpus$}
            \Statex \Output{Upper bound on the effectiveness of $\tc$}
            \State ${\sf ub} \gets 0$
            \ForAll{$\proofsym \in {\sf Dom}(\ev)$}
            \ForAll{$f \in \ev[\pi]$}
            \State $G' \assign {\sf ApplyWitness}(G, f)$
            \State $G_e \assign {\sf MaxExtend}(G', \tdg(\proofsym))$
            \State ${\sf ub}  \gets {\sf ub} + {\sf size}(G_e) - 1$
            \EndFor
            \EndFor
            \State \Return {\sf ub}
            \EndProcedure
        \end{algorithmic}
        \end{algorithm}
        \vspace{-0.22in}
        \end{minipage}
        }
    \vspace{-0.05in}
    \caption{\small Computes upper bound on the compression power of any extension of $\tc$. \textsf{ApplyWitness} applies function $f$ to  $G$ to obtain a graph $G'$. \textsf{MaxExtend}$(G, G')$ finds a graph $G_e$ such that (1) $G_e$ is subgraph of $TDG(\pi)$ and supergraph of $G'$, (2) \textsf{root}$(G_e) = $ \textsf{root}$(G')$ and (3) $G_e$ has maximum size among all graphs that satisfy (1) and (2). }
        \label{alg:ub}
    \end{minipage}
    \vspace{-0.15in}
\end{figure}

\vspace{0.05in}
\noindent
{\bf \emph{Pruning Non-Optimal Tactic Candidates.}} Finally, we consider the {\sc UpperBound} procedure, presented in Figure~\ref{alg:ub}, for deriving an upper bound on the effectiveness of a given tactic candidate $\tc$ on compressing the size of the proof corpus $\corpus$.  The idea behind {\sc UpperBound} is fairly straightforward: For every embedding of $G$ in a proof script $\proofsym$, we compute the \emph{maximum extension} of $G$ that can still be embedded in $\proofsym$:

\begin{definition}\label{def:max-ext}{\bf (Maximum extension)}  The maximum extension of $G$ in $G'$ is a graph $G_e$ with the following properties: (1) $G$ is a subgraph of $G_e$; (2) $G_e$ is a subgraph of $G'$; (3) \textsf{root}$(G_e) = $ \textsf{root}$(G)$; and (4) For any other graph $G_e'$ satisfying (1) (2) and (3),  we have $\mathsf{size}(G_e') \leq \mathsf{size}(G_e)$.
\end{definition}

The idea is that, by summing up the sizes of all these extensions across all embeddings into the proof corpus, we can  obtain  an upper bound  $\strength(\tc', \corpus)$ for any $\tc'$  that is an extension of $\tc$.  This is precisely what the {\sc UpperBound} procedure in Figure~\ref{alg:ub} computes.

\vspace{0.05in}
\noindent {\bf \emph{Tactic Library Synthesis.}}
Finally, to generate a \emph{library} of tactics, our algorithm  synthesizes a single tactic by calling {\sc LearnTactic}, then refactors the corpus using this tactic,  and repeats this process until  no more tactics  can be learned.

\vspace{0.05in}
\noindent
\textbf{\emph{Discussion: Learning with Other Objectives.}}  
While our algorithm follows standard practice in optimizing \emph{compression power}~\cite{stitch,babble,dreamcoder}, it naturally generalizes to other objectives. For example, one can consider a broader optimization function of the form  
$ 
\alpha \times \textsc{CP}(\tau, \corpus) - \beta \times F(\tau),
$  
where \(F\) penalizes tactics that do not satisfy certain syntactic or semantic criteria, such as requiring too many arguments or being too narrowly applicable. Adapting our method to alternative objectives requires modifying two aspects of the learning algorithm: (1) the objective function itself, captured by \(\strength\), and (2) the \textsc{UpperBound} procedure, which estimates an upper bound on the objective value. For instance, for an objective of the form \(\alpha \times \textsc{CP}(\tau, \corpus) - \beta \times F(\tau)\), a straightforward implementation of \textsc{UpperBound} could reuse our existing technique for over-approximating compression power while conservatively assuming that the penalty term is zero. 
\section{Implementation}\label{sec:impl}
We have implemented the proposed algorithm as a new tool called \toolname, which consists of about 5000 lines of Java code and utilizes the external Coq-SerAPI library~\cite{serapi} to parse proof scripts and extract any information  necessary for constructing TDGs. The current version of \toolname only supports Ltac~\cite{ltac}, as it remains the dominant tactic language in existing Rocq developments.


\vspace{0.05in}
\noindent{\bf \emph{Handling Overlapping Embeddings.}} In the technical presentation, we assume that, if there are multiple isomorphic collapsible embeddings of a tactic into a proof, then they are non-overlapping; however, our implementation does not make this assumption. In particular, when computing an upper bound on effectiveness during the tactic discovery process, we assume that all embeddings can be used for refactoring, giving a conservative upper bound on the effectiveness of a given tactic candidate. Similarly, when computing the actual effectiveness of a tactic, we perform backtracking search to find a disjoint subset of embeddings that maximizes the effectiveness score. 




\vspace{0.05in}
\noindent{\bf{\emph{Converting TDGs to Proof Scripts.}}} Since our approach refactors a proof at the TDG level, we need to convert it back to a valid Rocq proof script. As discussed in Section~\ref{sec:refactor}, we can do this by performing a topological sort of the TDG. However, if a tactic produces multiple sub-goals as its output, we need to ensure that each sub-goal is processed as a separate branch. Our algorithm for converting TDGs to proof scripts performs additional analysis to keep track of this information and ensures that tactics that are used for discharging the same sub-goal appear in the same branch, subject to the topological sorting constraints within that branch.
\vspace{0.05in}

\noindent \textbf{\emph{Supported Tactics.}}
Our method places no restrictions on the tactics used in the input proof scripts. Any tactic that can be parsed using SerAPI, including advanced tactics like \texttt{match goal} and \texttt{eapply}, as well as user-defined custom tactics, is supported. Constructs that bundle combinators (e.g., \texttt{try}, \texttt{first}, \texttt{repeat}) with tactics are treated an atomic tactics, whereas tactics connected by the sequencing operator \texttt{;} are explicitly decomposed into their individual invocations. There are, however, restrictions on what our system can \emph{learn}. We do not synthesize tactics whose \emph{control flow} depends on inspecting the current proof state (e.g., \texttt{match goal})\footnote{In contrast, tactics such as \texttt{intros}, \texttt{simpl}, or \texttt{auto} are supported: although their effects depend on the proof state, their execution does not require branching on it.}, because a TDG records only the executed trace and omits unexecuted branches. Similarly, while our approach can learn tactics involving existential variables (\texttt{evars}), it does so only when their effects remain localized within the proof states that arise from their instantiation.


    
\begin{table}[H]
    \centering
    \footnotesize
    \vspace{-0.05in}
    \caption{Summary of Benchmarks.  ``Total size''  reports total number of pre-defined tactic applications within the domain, and ``Avg. size / proof'' reports average proof size in terms of the number of tactic applications. }
    \vspace{-0.1in}
        \begin{tabular}{cccccc}
        \toprule
        {\bf Domain} &{\bf Topic} & {\bf Description} & {\bf \# of Proof} & {\bf Total size} & {\bf Avg size / proof} \\
        \hline
        & \textsc{IndPred} & Inductive Predicates & {61} & {620} & {10}\\
        {\bf CoqArt} & \textsc{SearchTree} & Search Trees & {48} & {781} & {16} \\
        & \textsc{Reflection} & Proof by Reflection & {33} & {668} & {20} \\
        \hline
        & \textsc{Hoare} & Hoare logic & 65 & 1479 & 23 \\
        {\bf Program} & \textsc{Separation} & Heap properties & 58 & 592 & 10 \\
        {\bf Logics} & \textsc{Seplog} & Separation logic & 70 & 1111 & 16 \\
        & \textsc{CSL} & Concurrent separation logic & 47 & 1282 & 27 \\
        \hline
        & \textsc{RegAlloc} & Register allocation with validation & 31 & 467 & 15\\
        {\bf Comp-} & \textsc{LiveRange} & Proofs for live ranges computation & 32 & 903 & 28 \\
        {\bf Cert} & \textsc{Needness} & Abstract domain for needness analysis & 103 & 1759 & 17 \\
        & \textsc{RTLSpec} & Abstract specification for RTL generation & 55 & 1413 & 26 \\
        \hline
        & \textsc{NMake} & Big natural numbers & {105} & {1465} & {14}\\
        {\bf BigNums} & \textsc{ZMake} & Big integers & {43} & {801} & {19} \\
        & \textsc{QMake} & Big rational numbers & {68} & {1392} & {20} \\
        \bottomrule
        \end{tabular}
        \label{tab:benchmark-stats}
        \vspace{-0.1in}
    \end{table}

\section{Evaluation}
We now describe our experimental evaluation that is designed to answer the following  questions: 

\begin{itemize}[leftmargin=*]
    \item {\bf RQ1: } How effective is our proposed technique in learning tactics compared to a prior approach~\cite{peano} that learns tactics using anti-unification? 
    \item {\bf RQ2: } How useful are the learned tactics in terms of compression rate? 
    \item {\bf RQ3: } Can our learned tactics help with proof automation?
    \item {\bf RQ4: } How data efficient is our learning algorithm? 
    \item {\bf RQ5: } What impact do the key ingredients of our learning algorithm have on  running time?

\end{itemize}

\subsection{Experimental Setup}\label{sec:exp-setup}

{\bf \emph{Benchmarks.}} To evaluate our approach, we collected {918} proofs across {four} different sources:  a textbook called ``Interactive Theorem Proving and Program Development''\cite{bertot2004interactive} (also known as CoqArt\cite{coq-art}), the formally verified C compiler CompCert~\cite{compcert}, the BigNums arbitrary-precision arithmetic library~\cite{bignum}, and formalizations of various program  logics~\cite{leroy2021cdfprogramlogics}. For each of these sources, we choose these sub-domains that satisfy certain criteria, such as containing a threshold number of proofs ($\geq 30$ across all sources). 
Table~\ref{tab:benchmark-stats} provides a summary of our experimental benchmarks, including a description of the sub-domain that the proofs pertain to and the number of proofs in each sub-domain. The last two columns in the table show the total number of tactic applications in each domain and the average number of tactics per proof. Overall, these benchmarks span a wide spectrum of proof domains, ranging from introductory proofs in CoqArt to expert-level developments in large-scale systems like CompCert. Furthermore, they include a mix of programming languages (PL)-specific benchmarks—such as those involving program logics—and formalizations with broader mathematical utility, such as those for infinite-precision arithmetic.

\begin{table}[b]
\centering
\footnotesize
\vspace{-0.16in}
\caption{Results for evaluating the effectiveness of the tactic learning algorithm}
\vspace{-0.1in}
    \begin{tabular}{@{}c@{\hskip 4pt}c@{\hskip 4pt}c@{\hskip 6pt}c@{\hskip 6pt}c@{\hskip 6pt}c@{\hskip 6pt}c}
    \toprule
    & {\multirow{2}{*}{{\bf Topic}}} & {\multirow{2}{*}{{\bf Tool}}} & {\bf \# Tactics} & {\bf Avg Tactic} & {\bf Max Tactic} & {\bf Tactic Usage} \\
    & & & {\bf Learned} & {\bf Size} & {\bf Size} &  {\bf Count} \\
    \hline
    \multirow{6}{*}{\rotatebox{90}{\textbf{CoqArt}}}
    & \multirow{2}{*}{\textsc{IndPred}} & \toolname & {31} & {2.8} & {14} & {83} \\
    & & \peano & {16} & {2} & {2} & {48} \\
    \cdashline{2-7}
    & \multirow{2}{*}{{\textsc{SearchTree}}} & \toolname & {54} & {3.4} & {17} & {161} \\
    & & \peano & {20} & {2.4} & {6} & {31} \\
    \cdashline{2-7}
    & \multirow{2}{*}{{\textsc{Reflection}}} & \toolname & {38} & {3.3} & {11} & {100} \\
    & & \peano & {20} & {2.6} & {9} & {62} \\
    \hline
    \multirow{8}{*}{\rotatebox{90}{\textbf{Program Logics}}}
    & \multirow{2}{*}{\textsc{Hoare}} & \toolname & 87 & 3.2 & 15 & 231 \\
    & & \peano & 18 & 2.2 & 4  & 55 \\
    \cdashline{2-7}
    & \multirow{2}{*}{\textsc{Separation}} & \toolname & 36 & 3.1 & 8 & 93 \\
    & & \peano & 26 & 2.5 & 7 & 68 \\
    \cdashline{2-7}
    & \multirow{2}{*}{\textsc{Seplog}} & \toolname & 69 & 3.4 & 13 & 188 \\
    & & \peano & 16 & 2.3 & 4 & 56 \\
    \cdashline{2-7}
    & \multirow{2}{*}{\textsc{CSL}} & \toolname & {82} & 3.6 & 19 & {214} \\
    & & \peano & 14 & 2.1 & 3 & 44 \\
    \hline
    \multirow{8}{*}{\rotatebox{90}{\textbf{CompCert}}}
    & \multirow{2}{*}{\textsc{RegAlloc}} & \toolname & 18 & 8.8 & 34 & 50 \\
    & & \peano & 8 & 7 & 18 & 13 \\
    \cdashline{2-7}
    & \multirow{2}{*}{\textsc{LiveRange}} & \toolname & 61 & 3.8 & 16 & 162 \\
    & & \peano & 17 & 2.2 & 5 & 42 \\
    \cdashline{2-7}
    & \multirow{2}{*}{\textsc{AbsDomain}} & \toolname & 100 & 4.3 & 23 & 243 \\
    & & \peano & 39 & 2.1 & 3 & 135 \\
    \cdashline{2-7}
    & \multirow{2}{*}{\textsc{RTLSpec}} & \toolname & 93 & 3.4 & 13 & 260\\
    & & \peano & 25 & 2 & 4 & 58 \\
    \hline
    \multirow{6}{*}{\rotatebox{90}{\textbf{BigNums}}}
    & \multirow{2}{*}{{\textsc{NMake}}} & \toolname & {91} & {2.9} & {9} & {263} \\
    & & \peano & {29} & {2.3} & {5} & {84} \\
    \cdashline{2-7}
    & \multirow{2}{*}{{\textsc{ZMake}}} & \toolname & {56} & {4.8} & {19} & {132} \\
    & & \peano & {23} & {3.4} & {10} & {48} \\
    \cdashline{2-7}
    & \multirow{2}{*}{{\textsc{QMake}}} & \toolname & {95} & {3.6} & {21} & {250} \\
    & & \peano & {36} & {2.2} & {4} & {98} \\
    \midrule
    \midrule
    & \multirow{2}{*}{\bf Overall} & \toolname & {918}  & {3.6} & 34 & {2430} \\
    & & \peano & {310} & {2.4} & 10 & {842} \\
    \bottomrule
    \end{tabular}
    \label{tab:ev1}
    \vspace{-0.05in}
\end{table}

\noindent 
{\bf \emph{Baseline.}} Because no prior work exists on tactic library learning for Rocq proofs, we will address our first two research questions and compare our approach against a baseline by adapting the closest relevant prior work: tactic discovery via anti-unification, proposed in {\sc Peano}, a system designed to automate K12 math proofs.  
In particular, {\sc Peano} represents proofs as sequences of actions, and, for each pair of subsequences from different proofs, {\sc Peano} applies anti-unification to find the least general generalization between them. If the two sequences invoke the same actions but on different inputs, anti-unification identifies the shared structure while introducing variables for the differing parts. Similar to our approach, {\sc Peano} also aims to maximally compress the existing proofs. 

Since the implementation of the tactic induction method in {\sc Peano} does not work on Rocq proofs, we re-implemented their technique to extract tactics from a corpus of Rocq proof scripts. Towards this goal, we represent Rocq proof scripts as a sequence of tactics (i.e., ``actions'' in {\sc Peano} terminology) and use the same anti-unification approach to learn custom tactics. However, this syntactic approach does not guarantee that an extracted tactic  can be successfully used to refactor the proof from which it was derived.  Hence, given a tactic library learned using this baseline, we perform post-processing to discard tactics that cannot be used to re-factor \emph{any} proof in the corpus.
\vspace{0.05in}

\noindent
{\bf \emph{Computational Resources}}
All of our experiments are conducted on a machine with an Apple M2 CPU and 24 GB of physical memory, running on the macOS operating system.
\vspace{-0.1in}

\vspace{0.05in}
\subsection{Evaluation of Learned Tactics}\label{sec:effectiveness}
To answer our first research question, we use both \toolname and \peano to extract tactics from each domain. This evaluation, summarized in Table~\ref{tab:ev1}, treats the entire proof corpus of each benchmark as the training data. The figure provides statistics about the number of tactics learned by each technique and quantitative metrics to assess their quality, including average size of learned tactics, the maximum size among all learned tactics, and the total number of times the learned tactics can be applied in the proof corpus. As shown in this table, our proposed method learns substantially more tactics compared to {\sc Peano} (around 3$\times$ more across all benchmarks) and the learned tactics tend to be larger on average (1.5$\times$). Furthermore, across all benchmarks, the tactics learned by {\sc Peano} can be applied a total of 842 times, whereas those learned by \toolname can be used 2430 times. We believe these results demonstrate that our proposed TDG abstraction and learning algorithm allow for more effective tactic  discovery compared to a baseline that extracts tactics from a syntactic representation of proofs.  \\
\begin{mdframed}
{\bf Results for RQ1:} \toolname extracts around 3$\times$ more tactics compared to \peano. Furthermore, the tactics learned by \toolname are both larger and more frequently applicable in the corpus.
\end{mdframed}
\vspace{-1pt}

\subsection{Effectiveness of the Learned Tactics for  Proof Refactoring}\label{sec:usefulness}

\begin{wrapfigure}{r}{0.58\textwidth}
    \vspace{-0.2in}
    \definecolor{peachcrayola}{RGB}{247,197,159}
\definecolor{beige}{RGB}{239, 239, 208}
\definecolor{darkbluegray}{RGB}{102,106,134}
\definecolor{shadowblue}{RGB}{120,138,163}
\definecolor{opal}{RGB}{146,182,177}
\definecolor{laurelgreen}{RGB}{178,201,171}
\definecolor{dutchwhite}{RGB}{232,221,181}
\begin{tikzpicture}
\begin{axis}[
    width=8cm,
    height=4cm,
    ybar=0pt,  
    bar width=0.2cm,  
    ylabel={\footnotesize Compression Rate per Topic},
    ymin=1,
    ymax=1.7,
    xtick=data,
    xticklabels={IndPred, SearchTree, Reflection, Hoare,Separation,Seplog,CSL,RegAlloc,LiveRange,AbsDomain,RTLSpec, NMake, QMake, ZMake},
    y tick label style={font=\footnotesize},
    x tick label style={rotate=40,align=center,text width=0.7cm, font=\footnotesize},
    legend style={
        at={(0.9,0.9)},
        anchor=east,
        legend columns=2,
        /tikz/every even column/.append style={column sep=0.2cm, font=\footnotesize}
    },
    legend image code/.code={
    \draw [#1] (0cm,-0.1cm) rectangle (0.2cm,0.25cm); },
    symbolic x coords={IndPred, SearchTree, Reflection, Hoare,Separation,Seplog,CSL,RegAlloc,LiveRange,AbsDomain,RTLSpec, NMake, QMake, ZMake},
    nodes near coords align={vertical},
    ymajorgrids=false,
    axis x line*=bottom,
    axis y line=left,
    nodes near coords style={font=\tiny},
    enlarge x limits=0.1
]

\addplot[   
    bar shift=-0.1cm,  
    black,
    fill=dutchwhite,
    postaction= {
        pattern=north east lines,  
    }
] coordinates {
    (IndPred,1.08)
    (SearchTree,1.04)
    (Reflection,1.06)
    (Hoare,1.14)
    (Separation,1.08)
    (Seplog,1.16)
    (CSL,1.18)
    (RegAlloc,1.05)
    (LiveRange,1.08)
    (AbsDomain,1.19)
    (RTLSpec,1.11)
    (NMake,1.09)
    (QMake,1.15)
    (ZMake,1.03)
};

\addplot[
    bar shift=0.1cm,  
    black,
    fill=opal,
    postaction = {
        pattern=north west lines,
    }
] coordinates {
    (IndPred,1.13)
    (SearchTree,1.28)
    (Reflection,1.18)
    (Hoare,1.6)
    (Separation,1.17)
    (Seplog,1.52)
    (CSL,1.52)
    (RegAlloc,1.47)
    (LiveRange,1.28)
    (AbsDomain,1.46)
    (RTLSpec,1.42)
    (NMake,1.24)
    (QMake,1.29)
    (ZMake,1.17)
};

\draw[dotted, black, thick] (axis description cs:0,0.5) -- (axis description cs:0.97,0.5) node[above] {\footnotesize {\sc TacMiner} \qquad \qquad \qquad};
\draw[dashed, darkbluegray] (axis description cs:0,0.129)  node[above] {\footnotesize \quad \qquad {\sc Peano}} -- (axis description cs:0.97,0.129);
\legend{\peano,\toolname}

\end{axis}
\end{tikzpicture}
    \vspace{-0.15in}
    \caption{Average compression power per topic. The dotted lines denote the total across topics for each tool.}\label{fig:ev2}
    \vspace{-0.12in}
\end{wrapfigure}

To answer our second research question, we evaluate how useful the learned tactics are in  refactoring previously unseen proofs. To perform this experiment, we split the benchmarks into two separate training and test sets. We use the training set for tactic discovery but evaluate the usefulness of learned tactics \emph{only} on the test set. For this experiment, we use 65\% of the proofs for training (selected via an automated sampling script), and the remaining 35\% as the test set. We further evaluate the effectiveness of our approach for different training vs. test set ratios in Section~\ref{sec:data-eff}.

The results of this evaluation are presented in Figure~\ref{fig:ev2} where the y-axis shows the compression power (Def~\ref{def:cp}) of the learned tactics on the test set. For each category of benchmarks, we show the compression power achieved by both \toolname and \peano. As we can see from this figure, \toolname results in significantly higher compression power compared to \peano in \emph{all} categories. Across all benchmarks, the compression power of \toolname is 1.35 for \toolname vs 1.1 for \peano --- this means that, using the tactics learned by \toolname, the size of the proof corpus can be reduced by 26\%, whereas  the tactics learned by \peano result in a reduction of only 9\%. Furthermore, for the Hoare logic proofs, \toolname achieves the maximum compression of 1.6, indicating that the learned tactics for this domain are particularly effective at simplifying other proofs in the same domain. That is, for this domain, the refactored proofs are  approximately 63\% of their old size. \\

\begin{mdframed}
{\bf Results for RQ2: } 
Using the tactics learned by \toolname, the size of the proof corpus can be reduced by $26\%$. In contrast, {\sc Peano}'s tactics can reduce the corpus size by only 9\%. 
\end{mdframed}

\subsection{Using Learned Tactics for Proof Automation}
\label{sec:copra}

\begin{wraptable}{r}{0.52\textwidth} 
    \centering
    \small
    \vspace{-0.16in}
    \caption{Comparison of theorem proving success rates.}
    \label{tab:copra}
    \vspace{-0.08in}
    \begin{tabular}{|l|c|}
        \hline
        {\bf Method}                & {\bf Theorems proved} \\
        \hline
        {\sc Copra} with built-in tactics & 11/50 (22\%)           \\
        {\sc Copra} with \peano tactics    & 18/50 (36\%)           \\
        {\sc Copra} with \toolname tactics & 30/50 (60\%)         \\
        \hline
    \end{tabular}
    \vspace{-0.05in}
\end{wraptable}
Next, we evaluate whether our learned custom tactics can help a proof automation tool. To perform this investigation, we modify {\sc Copra}~\cite{copra}, a state-of-the-art proof automation tool based on Large Language Models (LLMs), to leverage our learned tactics. To adapt {\sc Copra} to use custom tactics, we modify the prompt provided to the LLM for in-context learning~\cite{icl}. Specifically, for each custom tactic learned by \toolname, we add the tactic definition as {\sc Copra}'s context as well as one example showing a proof state where that tactic was used (after being refactored by \toolname) along with the tactic's invocation (i.e., its arguments) that took place in the refactored proof. These two pieces of information should, in principle, allow {\sc Copra} to leverage custom tactics. For this experiment, we use OpenAI's \texttt{gpt-4o-2024-10-06} \cite{hurst2024gpt} as the underlying LLM.

\begin{wrapfigure}{r}{0.4\textwidth}
    \vspace{-0.14in}
    \resizebox{1.0\linewidth}{!}{\definecolor{opal}{RGB}{146,182,177}
\definecolor{darkbluegray}{RGB}{102,106,134}
\begin{tikzpicture}
\begin{axis}[
    width=7cm,
    height=5cm,
    ymajorgrids=false,
    axis x line*=bottom,
    axis y line=left,
    xlabel={\small \% of Training Data},
    ylabel={\small Average Compression Rate},
    legend pos=north west,
    xmin=0.1,
    xmax=1,
    ymin=1,
    ymax=1.5,
    error bars/y dir=both,
    error bars/y explicit,
    enlarge x limits=0.02
]

\addplot[
    darkbluegray,
    mark=*,
    thick,
    error bars/.cd,
    y dir=both,
    y explicit,
] coordinates {
    (0.20,1.10) +- (0,0.01)
    (0.25,1.12) +- (0,0.01)
    (0.30,1.13) +- (0,0.01)
    (0.35,1.15) +- (0,0.02)
    (0.40,1.17) +- (0,0.02)
    (0.45,1.19) +- (0,0.02)
    (0.50,1.21) +- (0,0.02)
    (0.55,1.23) +- (0,0.03)
    (0.60,1.25) +- (0,0.03)
    (0.65,1.26) +- (0,0.03)
    (0.70,1.27) +- (0,0.03)
    (0.75,1.29) +- (0,0.03)
    (0.80,1.30) +- (0,0.03)
    (0.85,1.31) +- (0,0.03)
    (0.90,1.32) +- (0,0.03)
    (0.95,1.32) +- (0,0.03)
    (1.00,1.35) +- (0,0.04)
};

\end{axis}
\end{tikzpicture}}
    \vspace{-0.2in}
    \caption{Data efficiency for \toolname. The error bar reflects the standard error.}
    \label{fig:data-eff}
    \vspace{-0.05in}
\end{wrapfigure}

To evaluate the utility of learned tactics for downstream proof automation, we selected a set of 50 theorems from the same sources listed in Table~\ref{tab:benchmark-stats}. These theorems were chosen based on two criteria. First, they are theorems for which {\sc Copra}, with a smaller computational budget than our official evaluation, was able to make partial progress—successfully solving some sub-goals but failing to complete the entire proof. This ensures that the benchmarks are neither trivial nor intractable, providing a meaningful setting for evaluating the added value of tactic learning. Second, we required that the corresponding refactored proof makes use of at least one tactic discovered by our system, ensuring that the learned tactics are relevant and can plausibly aid automation. Together, these criteria yield a benchmark set well-suited for measuring the practical benefits of tactic reuse in proof automation workflows.



The results of this evaluation are presented in Table~\ref{tab:copra}. The  vanilla {\sc Copra} baseline using only built-in tactics can prove 11 out of the 50 theorems, resulting in a success rate of $22\%$. We then independently evaluate {\sc Copra} augmented with learned tactics from {\sc Peano} and {\sc TacMiner} on the same 50 theorems. {\sc Copra} supplied with the tactics learned by {\sc Peano} can prove 7 additional theorems, increasing success rate to 36\%. Finally, using the custom tactics learned by our method, {\sc Copra}  can prove 19 additional  theorems over the baseline, increasing the success rate to $60\%$. These results demonstrate that (1) custom tactics can be useful for improving proof automation, and (2) tactics learned by our method are more useful for proof automation compared to the {\sc Peano} baseline. \\

\begin{mdframed}
{\bf Results for RQ3: } Using the custom tactics learned by \toolname, we are able to increase the success rate of an LLM-based proof automation tool from 22\% to 60\%.
\end{mdframed}

\subsection{Data efficiency of proposed tactic learning technique}\label{sec:data-eff}

The results of this evaluation are presented in Figure~\ref{fig:data-eff}. As expected, the larger the number of training benchmarks, the more effective the learning technique. However, even if we only use 25\% of the benchmarks for training, \toolname still achieves a compression power of around 1.13, which is higher than that achieved by \peano with a much larger training set. \\

\begin{mdframed}
{\bf Results for RQ4: } While \toolname benefits from a larger training corpus, it can still achieve significant compression on the test set as we decrease the size of the training data.  
\end{mdframed}

\subsection{Ablation Studies for Tactic Discovery Algorithm}
To answer our final research question, we present the results of an ablation study in which we disable some of the key components of our tactic learning algorithm. In particular, we consider the following two ablations of \toolname:

\begin{itemize}[leftmargin=*]
\item {\bf \nogrammar:} This ablation performs a limited form of grammar learning.\footnote{We also tried switching off grammar learning entirely, but since it performs \emph{extremely poorly}, we only report the results of a limited form of grammar learning.} In particular, it learns  graph grammar rules of the form $\eta \rightarrow \eta'$ instead of $\eta \rightarrow (\eta', \theta)$, meaning that it does not have prior knowledge about the argument dependencies between tactics. 
\item {\bf \noub:} This ablation does not use the pruning procedure ({\sc UpperBound}) from Section~\ref{sec:learn-tactic}. 
\end{itemize}

The results of this ablation study  are presented in Figures \ref{fig:eval4-cdf} (for Program Logics), \ref{fig:eval4-compcert} (for CompCert), \ref{fig:eval4-coqart} (for CoqArt), and \ref{fig:eval4-bignum} (for BigNums).  Here, the $x$-axis shows the number of total tactics learned, and the $y$-axis shows the cumulative learning time in seconds. Note that the $y$-axis uses log scale. As we can see from these plots, both  graph grammar learning and pruning via upper bound estimation have a huge impact on the running time of the tactic discovery algorithm. While \toolname can terminate on the entire corpus in about 13 minutes, both of the ablations fail to terminate within a 30-minute time limit.  \\

\begin{mdframed}
{\bf Results for RQ5: } Both of our algorithmic optimizations (namely,  grammar learning and pruning method) significantly reduce the learning algorithm's running time.
\end{mdframed}

\begin{figure}[H]
    \centering
    \begin{minipage}{0.47\textwidth} 
        \centering
        \resizebox{\linewidth}{!}{\definecolor{peachcrayola}{RGB}{247,197,159}
\definecolor{beige}{RGB}{239, 239, 208}
\definecolor{darkbluegray}{RGB}{102,106,134}
\definecolor{shadowblue}{RGB}{120,138,163}
\definecolor{opal}{RGB}{146,182,177}
\definecolor{laurelgreen}{RGB}{178,201,171}
\definecolor{dutchwhite}{RGB}{232,221,181}
\begin{tikzpicture}
\begin{semilogyaxis}[
    width=7cm,
    height=4cm,
    ymajorgrids=false,
    axis x line*=bottom,
    axis y line=left,
    xlabel={\small Number tactics learned},
    ylabel={\small Cumulative Time (s)},
    legend columns=-1,
    legend style={at={(0,1.05)}, anchor=south west, font=\footnotesize},
    legend cell align={left},
    ymin=0.01,
    ymax=2000,
    xmin=0,
    xmax=280,
]

\addplot[very thick, densely dotted, opal] table[x index=0, y index=1] {
1.0 0.012
2.0 0.025
3.0 0.038
4.0 0.052
5.0 0.066
6.0 0.08
7.0 0.095
8.0 0.11
9.0 0.126
10.0 0.142
11.0 0.158
12.0 0.174
13.0 0.19
14.0 0.207
15.0 0.224
16.0 0.242
17.0 0.26
18.0 0.278
19.0 0.297
20.0 0.317
21.0 0.337
22.0 0.357
23.0 0.378
24.0 0.399
25.0 0.42
26.0 0.442
27.0 0.465
28.0 0.488
29.0 0.511
30.0 0.534
31.0 0.559
32.0 0.585
33.0 0.611
34.0 0.637
35.0 0.663
36.0 0.69
37.0 0.717
38.0 0.744
39.0 0.771
40.0 0.798
41.0 0.826
42.0 0.854
43.0 0.882
44.0 0.911
45.0 0.94
46.0 0.969
47.0 0.999
48.0 1.03
49.0 1.061
50.0 1.092
51.0 1.123
52.0 1.155
53.0 1.188
54.0 1.221
55.0 1.255
56.0 1.291
57.0 1.327
58.0 1.364
59.0 1.401
60.0 1.438
61.0 1.476
62.0 1.514
63.0 1.552
64.0 1.591
65.0 1.63
66.0 1.67
67.0 1.711
68.0 1.752
69.0 1.794
70.0 1.836
71.0 1.879
72.0 1.922
73.0 1.965
74.0 2.009
75.0 2.053
76.0 2.097
77.0 2.141
78.0 2.187
79.0 2.236
80.0 2.285
81.0 2.335
82.0 2.387
83.0 2.442
84.0 2.497
85.0 2.555
86.0 2.614
87.0 2.673
88.0 2.733
89.0 2.794
90.0 2.857
91.0 2.921
92.0 2.985
93.0 3.052
94.0 3.12
95.0 3.191
96.0 3.262
97.0 3.333
98.0 3.405
99.0 3.478
100.0 3.551
101.0 3.624
102.0 3.698
103.0 3.772
104.0 3.847
105.0 3.923
106.0 4.0
107.0 4.08
108.0 4.162
109.0 4.244
110.0 4.33
111.0 4.416
112.0 4.503
113.0 4.593
114.0 4.684
115.0 4.777
116.0 4.871
117.0 4.965
118.0 5.059
119.0 5.155
120.0 5.252
121.0 5.352
122.0 5.455
123.0 5.56
124.0 5.672
125.0 5.786
126.0 5.9
127.0 6.016
128.0 6.134
129.0 6.253
130.0 6.372
131.0 6.494
132.0 6.618
133.0 6.747
134.0 6.877
135.0 7.007
136.0 7.144
137.0 7.289
138.0 7.435
139.0 7.59
140.0 7.745
141.0 7.912
142.0 8.09
143.0 8.268
144.0 8.463
145.0 8.666
146.0 8.87
147.0 9.081
148.0 9.294
149.0 9.536
150.0 9.778
151.0 10.033
152.0 10.293
153.0 10.563
154.0 10.837
155.0 11.114
156.0 11.4
157.0 11.691
158.0 11.997
159.0 12.309
160.0 12.622
161.0 12.94
162.0 13.289
163.0 13.64
164.0 13.999
165.0 14.361
166.0 14.748
167.0 15.153
168.0 15.559
169.0 15.973
170.0 16.419
171.0 17.01
172.0 17.618
173.0 18.257
174.0 18.98
175.0 19.708
176.0 20.471
177.0 21.314
178.0 22.164
179.0 23.054
180.0 24.09
181.0 25.15
182.0 26.402
183.0 27.708
184.0 29.602
185.0 36.606
186.0 43.714
187.0 53.037
188.0 69.221
189.0 112.099
190.0 155.382
191.0 198.721
192.0 255.89
193.0 316.737
194.0 556.266
};
\addlegendentry{PruningABL}

\addplot[very thick, densely dashdotted, peachcrayola] table[x index=0, y index=1] {
1.0 0.292
2.0 0.591
3.0 0.895
4.0 1.203
5.0 1.576
6.0 1.96
7.0 2.347
8.0 2.737
9.0 3.128
10.0 3.521
11.0 3.916
12.0 4.311
13.0 4.71
14.0 5.112
15.0 5.516
16.0 5.921
17.0 6.326
18.0 6.745
19.0 7.17
20.0 7.601
21.0 8.032
22.0 8.467
23.0 8.923
24.0 9.384
25.0 9.849
26.0 10.334
27.0 10.82
28.0 11.312
29.0 11.813
30.0 12.328
31.0 12.857
32.0 13.423
33.0 13.999
34.0 14.582
35.0 15.169
36.0 15.764
37.0 16.376
38.0 17.036
39.0 17.794
40.0 18.748
41.0 19.713
42.0 20.701
43.0 21.737
44.0 22.83
45.0 23.948
46.0 25.069
47.0 26.191
48.0 27.316
49.0 28.461
50.0 29.607
51.0 30.757
52.0 31.937
53.0 33.141
54.0 34.542
55.0 35.953
56.0 37.412
57.0 39.364
58.0 41.319
59.0 43.523
60.0 45.842
61.0 48.52
62.0 51.209
63.0 53.906
64.0 56.607
65.0 59.374
66.0 62.148
67.0 65.029
68.0 68.243
69.0 71.773
70.0 75.595
71.0 79.585
72.0 83.776
73.0 88.35
74.0 94.92
75.0 101.989
76.0 109.354
77.0 117.722
78.0 126.119
79.0 135.274
80.0 144.672
81.0 155.223
82.0 166.779
83.0 178.431
84.0 190.901
85.0 204.219
86.0 218.169
87.0 232.275
88.0 247.027
89.0 262.386
90.0 278.247
91.0 294.316
92.0 310.657
93.0 327.467
94.0 344.446
95.0 361.7
96.0 379.375
97.0 397.205
98.0 415.294
99.0 433.804
100.0 452.322
101.0 471.257
102.0 490.394
103.0 509.738
104.0 529.392
105.0 549.128
106.0 569.271
107.0 589.446
108.0 610.068
109.0 631.15
110.0 652.651
111.0 674.572
112.0 696.921
113.0 719.692
114.0 742.875
115.0 766.459
116.0 790.377
117.0 814.582
};
\addlegendentry{GrammarABL}

\addplot[very thick, darkbluegray] table[x index=0, y index=1] {
1.0 0.006
2.0 0.013
3.0 0.02
4.0 0.027
5.0 0.034
6.0 0.041
7.0 0.049
8.0 0.057
9.0 0.065
10.0 0.073
11.0 0.081
12.0 0.091
13.0 0.102
14.0 0.113
15.0 0.124
16.0 0.136
17.0 0.149
18.0 0.162
19.0 0.176
20.0 0.19
21.0 0.205
22.0 0.22
23.0 0.235
24.0 0.25
25.0 0.266
26.0 0.283
27.0 0.3
28.0 0.318
29.0 0.336
30.0 0.354
31.0 0.372
32.0 0.391
33.0 0.41
34.0 0.429
35.0 0.448
36.0 0.468
37.0 0.488
38.0 0.508
39.0 0.529
40.0 0.55
41.0 0.571
42.0 0.592
43.0 0.613
44.0 0.635
45.0 0.657
46.0 0.679
47.0 0.701
48.0 0.723
49.0 0.746
50.0 0.769
51.0 0.793
52.0 0.817
53.0 0.841
54.0 0.865
55.0 0.89
56.0 0.915
57.0 0.94
58.0 0.965
59.0 0.991
60.0 1.017
61.0 1.043
62.0 1.07
63.0 1.097
64.0 1.124
65.0 1.151
66.0 1.178
67.0 1.205
68.0 1.232
69.0 1.259
70.0 1.287
71.0 1.315
72.0 1.343
73.0 1.371
74.0 1.4
75.0 1.429
76.0 1.458
77.0 1.487
78.0 1.517
79.0 1.547
80.0 1.577
81.0 1.607
82.0 1.638
83.0 1.669
84.0 1.7
85.0 1.731
86.0 1.762
87.0 1.793
88.0 1.824
89.0 1.856
90.0 1.888
91.0 1.92
92.0 1.952
93.0 1.984
94.0 2.017
95.0 2.05
96.0 2.084
97.0 2.118
98.0 2.153
99.0 2.189
100.0 2.225
101.0 2.262
102.0 2.299
103.0 2.337
104.0 2.376
105.0 2.416
106.0 2.457
107.0 2.499
108.0 2.541
109.0 2.583
110.0 2.626
111.0 2.669
112.0 2.712
113.0 2.755
114.0 2.798
115.0 2.842
116.0 2.886
117.0 2.93
118.0 2.975
119.0 3.02
120.0 3.066
121.0 3.112
122.0 3.159
123.0 3.207
124.0 3.255
125.0 3.303
126.0 3.351
127.0 3.399
128.0 3.448
129.0 3.497
130.0 3.547
131.0 3.597
132.0 3.647
133.0 3.698
134.0 3.75
135.0 3.803
136.0 3.857
137.0 3.911
138.0 3.965
139.0 4.021
140.0 4.077
141.0 4.133
142.0 4.19
143.0 4.247
144.0 4.305
145.0 4.364
146.0 4.423
147.0 4.482
148.0 4.541
149.0 4.6
150.0 4.66
151.0 4.72
152.0 4.781
153.0 4.843
154.0 4.905
155.0 4.967
156.0 5.029
157.0 5.091
158.0 5.154
159.0 5.217
160.0 5.28
161.0 5.343
162.0 5.407
163.0 5.471
164.0 5.535
165.0 5.6
166.0 5.666
167.0 5.734
168.0 5.803
169.0 5.874
170.0 5.948
171.0 6.023
172.0 6.098
173.0 6.173
174.0 6.248
175.0 6.323
176.0 6.399
177.0 6.475
178.0 6.553
179.0 6.631
180.0 6.712
181.0 6.794
182.0 6.877
183.0 6.961
184.0 7.046
185.0 7.132
186.0 7.218
187.0 7.305
188.0 7.392
189.0 7.48
190.0 7.568
191.0 7.658
192.0 7.75
193.0 7.843
194.0 7.936
195.0 8.033
196.0 8.131
197.0 8.232
198.0 8.333
199.0 8.435
200.0 8.541
201.0 8.653
202.0 8.768
203.0 8.887
204.0 9.007
205.0 9.133
206.0 9.259
207.0 9.385
208.0 9.514
209.0 9.643
210.0 9.774
211.0 9.905
212.0 10.038
213.0 10.174
214.0 10.311
215.0 10.451
216.0 10.592
217.0 10.736
218.0 10.881
219.0 11.028
220.0 11.175
221.0 11.322
222.0 11.471
223.0 11.624
224.0 11.782
225.0 11.94
226.0 12.102
227.0 12.266
228.0 12.43
229.0 12.595
230.0 12.76
231.0 12.926
232.0 13.095
233.0 13.27
234.0 13.452
235.0 13.636
236.0 13.827
237.0 14.018
238.0 14.21
239.0 14.404
240.0 14.599
241.0 14.806
242.0 15.018
243.0 15.235
244.0 15.456
245.0 15.712
246.0 15.977
247.0 16.261
248.0 16.561
249.0 16.866
250.0 17.177
251.0 17.489
252.0 17.821
253.0 18.16
254.0 18.503
255.0 18.875
256.0 19.262
257.0 19.659
258.0 20.073
259.0 20.521
260.0 21.01
261.0 22.248
262.0 23.622
263.0 25.369
264.0 27.209
265.0 29.755
266.0 32.745
267.0 56.201
268.0 80.879
269.0 106.071
270.0 133.369
271.0 164.33
272.0 200.746
};
\addlegendentry{Ours}

    \end{semilogyaxis}
\end{tikzpicture}} 
        \vspace{-0.27in}
        \caption{Learning curve for ProgramLogic.}
        \label{fig:eval4-cdf}
        \vspace{0.15in}
    \end{minipage}%
    \hfill
    \begin{minipage}{0.45\textwidth} 
        \centering
        \resizebox{\linewidth}{!}{\definecolor{peachcrayola}{RGB}{247,197,159}
\definecolor{beige}{RGB}{239, 239, 208}
\definecolor{darkbluegray}{RGB}{102,106,134}
\definecolor{shadowblue}{RGB}{120,138,163}
\definecolor{opal}{RGB}{146,182,177}
\definecolor{laurelgreen}{RGB}{178,201,171}
\definecolor{dutchwhite}{RGB}{232,221,181}
\begin{tikzpicture}
\begin{semilogyaxis}[
    width=7cm,
    height=4cm,
    ymajorgrids=false,
    axis x line*=bottom,
    axis y line=left,
    xlabel={\small Number tactics learned},
    ymin=0.01,
    ymax=2000,
    xmin=0,
    xmax=280,
    legend columns=-1,
    legend style={at={(0,1.05)}, anchor=south west, font=\footnotesize},
    legend cell align={left},
    ]
    
\addplot[
    very thick,
    opal,
    densely dotted
    ] table [x index = 0, y index = 1] {
1.0 0.02
2.0 0.041
3.0 0.064
4.0 0.088
5.0 0.112
6.0 0.136
7.0 0.161
8.0 0.188
9.0 0.215
10.0 0.243
11.0 0.272
12.0 0.303
13.0 0.334
14.0 0.365
15.0 0.396
16.0 0.427
17.0 0.458
18.0 0.49
19.0 0.522
20.0 0.554
21.0 0.587
22.0 0.62
23.0 0.653
24.0 0.686
25.0 0.72
26.0 0.754
27.0 0.788
28.0 0.823
29.0 0.858
30.0 0.895
31.0 0.932
32.0 0.97
33.0 1.009
34.0 1.048
35.0 1.087
36.0 1.127
37.0 1.167
38.0 1.207
39.0 1.247
40.0 1.288
41.0 1.329
42.0 1.371
43.0 1.413
44.0 1.455
45.0 1.499
46.0 1.543
47.0 1.587
48.0 1.632
49.0 1.677
50.0 1.723
51.0 1.77
52.0 1.817
53.0 1.865
54.0 1.913
55.0 1.961
56.0 2.01
57.0 2.06
58.0 2.11
59.0 2.161
60.0 2.212
61.0 2.263
62.0 2.314
63.0 2.366
64.0 2.418
65.0 2.47
66.0 2.525
67.0 2.582
68.0 2.64
69.0 2.701
70.0 2.763
71.0 2.827
72.0 2.892
73.0 2.958
74.0 3.025
75.0 3.093
76.0 3.165
77.0 3.239
78.0 3.314
79.0 3.393
80.0 3.476
81.0 3.562
82.0 3.649
83.0 3.737
84.0 3.828
85.0 3.919
86.0 4.011
87.0 4.104
88.0 4.202
89.0 4.303
90.0 4.406
91.0 4.512
92.0 4.623
93.0 4.735
94.0 4.849
95.0 4.975
96.0 5.101
97.0 5.228
98.0 5.359
99.0 5.499
100.0 5.64
101.0 5.784
102.0 5.929
103.0 6.077
104.0 6.23
105.0 6.395
106.0 6.566
107.0 6.739
108.0 6.937
109.0 7.145
110.0 7.364
111.0 7.586
112.0 7.815
113.0 8.044
114.0 8.276
115.0 8.516
116.0 8.766
117.0 9.026
118.0 9.294
119.0 9.586
120.0 9.883
121.0 10.236
122.0 10.612
123.0 10.997
124.0 11.458
125.0 11.922
126.0 12.423
127.0 12.971
128.0 13.546
129.0 14.137
130.0 14.74
131.0 15.371
132.0 16.023
133.0 16.767
134.0 17.525
135.0 18.36
136.0 19.199
137.0 20.274
138.0 21.665
139.0 23.275
140.0 24.905
141.0 27.601
142.0 30.327
143.0 33.48
144.0 36.866
145.0 40.541
146.0 44.318
147.0 50.819
148.0 61.28
149.0 84.583
150.0 111.789
151.0 140.324
152.0 174.881
153.0 210.274
154.0 247.77
155.0 317.674
156.0 387.814
157.0 472.815
158.0 655.905
    };
    
\addplot[
    very thick,
    densely dashdotted, peachcrayola
    ] table [x index=0, y index=1] {
1.0 0.015
2.0 0.031
3.0 0.048
4.0 0.066
5.0 0.089
6.0 0.112
7.0 0.137
8.0 0.165
9.0 0.195
10.0 0.51
11.0 1.657
12.0 2.984
13.0 4.342
14.0 5.711
15.0 7.095
16.0 8.49
17.0 9.948
18.0 11.418
19.0 12.894
20.0 14.388
21.0 15.885
22.0 17.39
23.0 18.916
24.0 20.532
25.0 22.154
26.0 23.793
27.0 25.439
28.0 27.086
29.0 28.737
30.0 30.389
31.0 32.051
32.0 33.726
33.0 35.402
34.0 37.08
35.0 38.771
36.0 40.528
37.0 42.292
38.0 44.065
39.0 45.844
40.0 47.628
41.0 49.458
42.0 51.333
43.0 53.23
44.0 55.137
45.0 57.054
46.0 58.985
47.0 60.923
48.0 62.886
49.0 64.874
50.0 66.879
51.0 68.898
52.0 70.942
53.0 73.122
54.0 75.314
55.0 77.528
56.0 79.748
57.0 81.968
58.0 84.335
59.0 86.783
60.0 89.312
61.0 91.841
62.0 94.379
63.0 96.923
64.0 99.472
65.0 102.063
66.0 104.655
67.0 107.554
68.0 110.457
69.0 113.368
70.0 116.31
71.0 119.311
72.0 122.411
73.0 125.525
74.0 128.727
75.0 132.217
76.0 135.91
77.0 139.609
78.0 143.552
79.0 147.77
80.0 152.94
81.0 158.323
82.0 163.789
83.0 169.425
84.0 175.566
85.0 181.967
86.0 188.375
87.0 194.94
88.0 201.668
89.0 210.281
90.0 220.04
91.0 232.443
92.0 245.835
93.0 259.529
94.0 273.262
95.0 287.003
96.0 300.891
97.0 315.852
98.0 332.664
99.0 350.348
100.0 368.662
101.0 387.776
102.0 407.891
103.0 428.339
104.0 451.65
105.0 481.039
106.0 522.159
107.0 570.686
108.0 620.236
109.0 671.414
110.0 727.038
111.0 799.718
112.0 876.909
113.0 973.289
114.0 1123.786
115.0 1713.98
    };
    
    \addplot[
        very thick,
        darkbluegray,
        ] table [x index = 0, y index = 1] {
1.0 0.003
2.0 0.006
3.0 0.011
4.0 0.016
5.0 0.022
6.0 0.028
7.0 0.036
8.0 0.052
9.0 0.07
10.0 0.088
11.0 0.106
12.0 0.124
13.0 0.143
14.0 0.162
15.0 0.181
16.0 0.2
17.0 0.219
18.0 0.238
19.0 0.257
20.0 0.277
21.0 0.297
22.0 0.317
23.0 0.337
24.0 0.357
25.0 0.377
26.0 0.397
27.0 0.417
28.0 0.437
29.0 0.458
30.0 0.479
31.0 0.5
32.0 0.521
33.0 0.543
34.0 0.565
35.0 0.587
36.0 0.609
37.0 0.631
38.0 0.653
39.0 0.675
40.0 0.697
41.0 0.719
42.0 0.742
43.0 0.765
44.0 0.788
45.0 0.811
46.0 0.834
47.0 0.857
48.0 0.88
49.0 0.903
50.0 0.927
51.0 0.951
52.0 0.975
53.0 1.0
54.0 1.025
55.0 1.05
56.0 1.075
57.0 1.101
58.0 1.127
59.0 1.153
60.0 1.179
61.0 1.206
62.0 1.233
63.0 1.261
64.0 1.289
65.0 1.318
66.0 1.347
67.0 1.376
68.0 1.405
69.0 1.434
70.0 1.464
71.0 1.494
72.0 1.524
73.0 1.554
74.0 1.585
75.0 1.616
76.0 1.647
77.0 1.678
78.0 1.709
79.0 1.741
80.0 1.773
81.0 1.805
82.0 1.837
83.0 1.87
84.0 1.903
85.0 1.936
86.0 1.97
87.0 2.004
88.0 2.038
89.0 2.072
90.0 2.107
91.0 2.142
92.0 2.178
93.0 2.214
94.0 2.25
95.0 2.286
96.0 2.323
97.0 2.36
98.0 2.398
99.0 2.436
100.0 2.474
101.0 2.512
102.0 2.55
103.0 2.59
104.0 2.63
105.0 2.67
106.0 2.71
107.0 2.75
108.0 2.791
109.0 2.833
110.0 2.875
111.0 2.917
112.0 2.96
113.0 3.004
114.0 3.049
115.0 3.094
116.0 3.139
117.0 3.184
118.0 3.23
119.0 3.277
120.0 3.325
121.0 3.375
122.0 3.425
123.0 3.476
124.0 3.528
125.0 3.58
126.0 3.632
127.0 3.685
128.0 3.738
129.0 3.791
130.0 3.845
131.0 3.902
132.0 3.96
133.0 4.02
134.0 4.081
135.0 4.145
136.0 4.21
137.0 4.278
138.0 4.349
139.0 4.423
140.0 4.499
141.0 4.575
142.0 4.652
143.0 4.73
144.0 4.808
145.0 4.888
146.0 4.968
147.0 5.048
148.0 5.128
149.0 5.209
150.0 5.291
151.0 5.373
152.0 5.455
153.0 5.538
154.0 5.621
155.0 5.706
156.0 5.791
157.0 5.879
158.0 5.967
159.0 6.056
160.0 6.147
161.0 6.239
162.0 6.332
163.0 6.426
164.0 6.52
165.0 6.621
166.0 6.723
167.0 6.826
168.0 6.929
169.0 7.032
170.0 7.135
171.0 7.239
172.0 7.343
173.0 7.452
174.0 7.562
175.0 7.677
176.0 7.793
177.0 7.91
178.0 8.029
179.0 8.148
180.0 8.268
181.0 8.388
182.0 8.509
183.0 8.63
184.0 8.754
185.0 8.889
186.0 9.028
187.0 9.167
188.0 9.311
189.0 9.455
190.0 9.6
191.0 9.745
192.0 9.891
193.0 10.037
194.0 10.196
195.0 10.361
196.0 10.527
197.0 10.7
198.0 10.885
199.0 11.075
200.0 11.273
201.0 11.475
202.0 11.678
203.0 11.882
204.0 12.087
205.0 12.292
206.0 12.5
207.0 12.709
208.0 12.921
209.0 13.133
210.0 13.356
211.0 13.584
212.0 13.823
213.0 14.074
214.0 14.331
215.0 14.588
216.0 14.846
217.0 15.137
218.0 15.43
219.0 15.731
220.0 16.044
221.0 16.36
222.0 16.682
223.0 17.029
224.0 17.383
225.0 17.751
226.0 18.133
227.0 18.545
228.0 18.966
229.0 19.416
230.0 19.881
231.0 20.369
232.0 20.908
233.0 21.455
234.0 22.004
235.0 22.564
236.0 23.141
237.0 23.74
238.0 24.383
239.0 25.056
240.0 25.744
241.0 26.564
242.0 27.426
243.0 28.31
244.0 29.283
245.0 30.357
246.0 31.457
247.0 32.567
248.0 33.922
249.0 35.399
250.0 36.959
251.0 38.533
252.0 40.224
253.0 42.065
254.0 43.916
255.0 45.799
256.0 47.901
257.0 50.348
258.0 52.82
259.0 55.379
260.0 57.946
261.0 60.863
262.0 64.227
263.0 67.718
264.0 71.355
265.0 75.224
266.0 80.137
267.0 85.394
268.0 93.272
269.0 103.133
270.0 115.989
271.0 133.249
272.0 516.192
    };
    
\legend{PruningABL, GrammarABL, Ours}
\end{semilogyaxis}
\end{tikzpicture}} 
        \vspace{-0.27in}
        \caption{Learning curve for CompCert.}
        \label{fig:eval4-compcert}
        \vspace{0.15in}
    \end{minipage}
    
    \begin{minipage}{0.47\textwidth} 
        \centering
        \resizebox{\linewidth}{!}{\definecolor{peachcrayola}{RGB}{247,197,159}
\definecolor{beige}{RGB}{239, 239, 208}
\definecolor{darkbluegray}{RGB}{102,106,134}
\definecolor{shadowblue}{RGB}{120,138,163}
\definecolor{opal}{RGB}{146,182,177}
\definecolor{laurelgreen}{RGB}{178,201,171}
\definecolor{dutchwhite}{RGB}{232,221,181}
\begin{tikzpicture}
\begin{semilogyaxis}[
    width=7cm,
    height=4cm,
    ymajorgrids=false,
    axis x line*=bottom,
    axis y line=left,
    xlabel={\small Number tactics learned},
    ylabel={\small Cumulative Time (s)},
    legend columns=-1,
    legend style={at={(0,1.05)}, anchor=south west, font=\footnotesize},
    legend cell align={left},
    ymin=0.01,
    ymax=2000,
    xmin=0,
    xmax=130,
]

\addplot[very thick, densely dotted, opal] table[x index=0, y index=1] {
1.0 0.01
2.0 0.02
3.0 0.03
4.0 0.04
5.0 0.05
6.0 0.061
7.0 0.072
8.0 0.083
9.0 0.094
10.0 0.106
11.0 0.118
12.0 0.13
13.0 0.143
14.0 0.157
15.0 0.171
16.0 0.185
17.0 0.2
18.0 0.216
19.0 0.233
20.0 0.251
21.0 0.269
22.0 0.288
23.0 0.308
24.0 0.328
25.0 0.349
26.0 0.37
27.0 0.391
28.0 0.412
29.0 0.434
30.0 0.456
31.0 0.478
32.0 0.5
33.0 0.523
34.0 0.546
35.0 0.571
36.0 0.596
37.0 0.622
38.0 0.648
39.0 0.675
40.0 0.702
41.0 0.729
42.0 0.756
43.0 0.784
44.0 0.812
45.0 0.84
46.0 0.869
47.0 0.898
48.0 0.927
49.0 0.957
50.0 0.988
51.0 1.021
52.0 1.054
53.0 1.087
54.0 1.122
55.0 1.157
56.0 1.192
57.0 1.228
58.0 1.264
59.0 1.301
60.0 1.338
61.0 1.376
62.0 1.414
63.0 1.453
64.0 1.492
65.0 1.531
66.0 1.572
67.0 1.615
68.0 1.659
69.0 1.703
70.0 1.748
71.0 1.794
72.0 1.84
73.0 1.887
74.0 1.934
75.0 1.981
76.0 2.029
77.0 2.079
78.0 2.13
79.0 2.182
80.0 2.235
81.0 2.289
82.0 2.344
83.0 2.399
84.0 2.456
85.0 2.517
86.0 2.578
87.0 2.64
88.0 2.703
89.0 2.769
90.0 2.838
91.0 2.91
92.0 2.982
93.0 3.059
94.0 3.142
95.0 3.226
96.0 3.315
97.0 3.405
98.0 3.496
99.0 3.589
100.0 3.687
101.0 3.8
102.0 3.913
103.0 4.031
104.0 4.166
105.0 4.308
106.0 4.453
107.0 4.609
108.0 4.777
109.0 4.971
110.0 5.204
111.0 5.451
112.0 5.701
113.0 5.973
114.0 6.286
115.0 6.628
116.0 6.973
117.0 7.611
118.0 8.623
119.0 10.69
120.0 14.929
121.0 31.4
122.0 54.47
};
\addlegendentry{PruningABL}

\addplot[very thick, densely dashdotted, peachcrayola] table[x index=0, y index=1] {
1.0 0.072
2.0 0.145
3.0 0.219
4.0 0.295
5.0 0.371
6.0 0.455
7.0 0.54
8.0 0.625
9.0 0.71
10.0 0.796
11.0 0.882
12.0 0.969
13.0 1.056
14.0 1.144
15.0 1.232
16.0 1.32
17.0 1.409
18.0 1.499
19.0 1.59
20.0 1.681
21.0 1.773
22.0 1.865
23.0 1.958
24.0 2.051
25.0 2.145
26.0 2.242
27.0 2.341
28.0 2.442
29.0 2.547
30.0 2.654
31.0 2.763
32.0 2.872
33.0 2.981
34.0 3.092
35.0 3.205
36.0 3.318
37.0 3.432
38.0 3.546
39.0 3.661
40.0 3.776
41.0 3.892
42.0 4.009
43.0 4.127
44.0 4.246
45.0 4.365
46.0 4.484
47.0 4.604
48.0 4.728
49.0 4.852
50.0 4.98
51.0 5.108
52.0 5.236
53.0 5.365
54.0 5.494
55.0 5.625
56.0 5.76
57.0 5.896
58.0 6.032
59.0 6.171
60.0 6.311
61.0 6.452
62.0 6.593
63.0 6.736
64.0 6.88
65.0 7.026
66.0 7.173
67.0 7.321
68.0 7.472
69.0 7.627
70.0 7.784
71.0 7.942
72.0 8.101
73.0 8.263
74.0 8.426
75.0 8.589
76.0 8.753
77.0 8.919
78.0 9.085
79.0 9.251
80.0 9.417
81.0 9.584
82.0 9.756
83.0 9.943
84.0 10.141
85.0 10.353
86.0 10.567
87.0 10.788
88.0 11.012
89.0 11.236
90.0 11.463
91.0 11.702
92.0 11.943
93.0 12.192
94.0 12.47
95.0 12.75
96.0 13.051
97.0 13.352
98.0 13.662
99.0 13.977
100.0 14.298
101.0 14.636
102.0 14.976
103.0 15.323
104.0 15.674
105.0 16.068
106.0 16.468
107.0 16.876
108.0 17.285
109.0 17.728
110.0 18.221
111.0 18.751
112.0 19.349
113.0 19.979
114.0 20.622
115.0 21.868
116.0 23.153
117.0 24.712
118.0 26.326
119.0 29.111
120.0 44.884
121.0 74.026
122.0 578.706
123.0 1090.551
};
\addlegendentry{GrammarABL}

\addplot[very thick, darkbluegray] table[x index=0, y index=1] {
1.0 0.004
2.0 0.009
3.0 0.014
4.0 0.019
5.0 0.024
6.0 0.029
7.0 0.034
8.0 0.039
9.0 0.044
10.0 0.049
11.0 0.054
12.0 0.059
13.0 0.065
14.0 0.071
15.0 0.077
16.0 0.083
17.0 0.089
18.0 0.095
19.0 0.102
20.0 0.109
21.0 0.116
22.0 0.123
23.0 0.13
24.0 0.137
25.0 0.144
26.0 0.152
27.0 0.16
28.0 0.168
29.0 0.176
30.0 0.184
31.0 0.192
32.0 0.201
33.0 0.21
34.0 0.219
35.0 0.228
36.0 0.237
37.0 0.246
38.0 0.255
39.0 0.264
40.0 0.273
41.0 0.282
42.0 0.292
43.0 0.302
44.0 0.312
45.0 0.322
46.0 0.332
47.0 0.342
48.0 0.353
49.0 0.364
50.0 0.375
51.0 0.386
52.0 0.397
53.0 0.409
54.0 0.421
55.0 0.433
56.0 0.445
57.0 0.457
58.0 0.47
59.0 0.483
60.0 0.496
61.0 0.51
62.0 0.525
63.0 0.54
64.0 0.556
65.0 0.572
66.0 0.588
67.0 0.605
68.0 0.622
69.0 0.64
70.0 0.658
71.0 0.676
72.0 0.695
73.0 0.714
74.0 0.733
75.0 0.752
76.0 0.771
77.0 0.791
78.0 0.811
79.0 0.831
80.0 0.853
81.0 0.875
82.0 0.898
83.0 0.921
84.0 0.945
85.0 0.969
86.0 0.994
87.0 1.019
88.0 1.045
89.0 1.071
90.0 1.097
91.0 1.124
92.0 1.152
93.0 1.18
94.0 1.209
95.0 1.239
96.0 1.269
97.0 1.3
98.0 1.332
99.0 1.365
100.0 1.398
101.0 1.432
102.0 1.466
103.0 1.501
104.0 1.536
105.0 1.575
106.0 1.619
107.0 1.668
108.0 1.723
109.0 1.778
110.0 1.836
111.0 1.895
112.0 1.959
113.0 2.028
114.0 2.102
115.0 2.178
116.0 2.265
117.0 2.393
118.0 2.547
119.0 2.718
120.0 3.165
121.0 3.758
122.0 6.92
123.0 10.732
};
\addlegendentry{Ours}

    \end{semilogyaxis}
\end{tikzpicture}} 
        \vspace{-0.27in}
        \caption{Learning curve for CoqArt.}
        \label{fig:eval4-coqart}
    \end{minipage}%
    \hfill
    \begin{minipage}{0.45\textwidth} 
        \centering
        \resizebox{\linewidth}{!}{\definecolor{peachcrayola}{RGB}{247,197,159}
\definecolor{beige}{RGB}{239, 239, 208}
\definecolor{darkbluegray}{RGB}{102,106,134}
\definecolor{shadowblue}{RGB}{120,138,163}
\definecolor{opal}{RGB}{146,182,177}
\definecolor{laurelgreen}{RGB}{178,201,171}
\definecolor{dutchwhite}{RGB}{232,221,181}
\begin{tikzpicture}
\begin{semilogyaxis}[
    width=7cm,
    height=4cm,
    ymajorgrids=false,
    axis x line*=bottom,
    axis y line=left,
    xlabel={\small Number tactics learned},
    ylabel={\small Cumulative Time (s)},
    legend columns=-1,
    legend style={at={(0,1.05)}, anchor=south west, font=\footnotesize},
    legend cell align={left},
    ymin=0.01,
    ymax=2000,
    xmin=0,
    xmax=250,
]

\addplot[very thick, densely dotted, opal] table[x index=0, y index=1] {
1.0 0.008
2.0 0.017
3.0 0.026
4.0 0.035
5.0 0.045
6.0 0.055
7.0 0.065
8.0 0.075
9.0 0.086
10.0 0.099
11.0 0.113
12.0 0.127
13.0 0.141
14.0 0.156
15.0 0.171
16.0 0.187
17.0 0.203
18.0 0.219
19.0 0.235
20.0 0.251
21.0 0.268
22.0 0.285
23.0 0.302
24.0 0.319
25.0 0.336
26.0 0.353
27.0 0.37
28.0 0.388
29.0 0.406
30.0 0.424
31.0 0.442
32.0 0.461
33.0 0.48
34.0 0.499
35.0 0.518
36.0 0.538
37.0 0.559
38.0 0.58
39.0 0.601
40.0 0.622
41.0 0.643
42.0 0.664
43.0 0.686
44.0 0.708
45.0 0.73
46.0 0.752
47.0 0.774
48.0 0.797
49.0 0.82
50.0 0.843
51.0 0.867
52.0 0.891
53.0 0.915
54.0 0.939
55.0 0.963
56.0 0.987
57.0 1.012
58.0 1.037
59.0 1.062
60.0 1.087
61.0 1.113
62.0 1.139
63.0 1.165
64.0 1.191
65.0 1.217
66.0 1.244
67.0 1.271
68.0 1.298
69.0 1.326
70.0 1.354
71.0 1.382
72.0 1.412
73.0 1.442
74.0 1.472
75.0 1.502
76.0 1.532
77.0 1.563
78.0 1.594
79.0 1.625
80.0 1.656
81.0 1.688
82.0 1.72
83.0 1.753
84.0 1.787
85.0 1.821
86.0 1.856
87.0 1.891
88.0 1.926
89.0 1.962
90.0 2.0
91.0 2.039
92.0 2.078
93.0 2.118
94.0 2.158
95.0 2.198
96.0 2.239
97.0 2.28
98.0 2.321
99.0 2.362
100.0 2.404
101.0 2.446
102.0 2.488
103.0 2.532
104.0 2.576
105.0 2.62
106.0 2.665
107.0 2.711
108.0 2.757
109.0 2.804
110.0 2.852
111.0 2.901
112.0 2.95
113.0 2.999
114.0 3.049
115.0 3.099
116.0 3.152
117.0 3.206
118.0 3.26
119.0 3.317
120.0 3.374
121.0 3.431
122.0 3.49
123.0 3.552
124.0 3.617
125.0 3.683
126.0 3.751
127.0 3.825
128.0 3.902
129.0 3.982
130.0 4.064
131.0 4.147
132.0 4.232
133.0 4.318
134.0 4.407
135.0 4.499
136.0 4.591
137.0 4.684
138.0 4.78
139.0 4.876
140.0 4.975
141.0 5.074
142.0 5.173
143.0 5.276
144.0 5.38
145.0 5.485
146.0 5.591
147.0 5.699
148.0 5.809
149.0 5.919
150.0 6.03
151.0 6.142
152.0 6.258
153.0 6.378
154.0 6.501
155.0 6.627
156.0 6.757
157.0 6.889
158.0 7.022
159.0 7.16
160.0 7.299
161.0 7.441
162.0 7.586
163.0 7.733
164.0 7.881
165.0 8.03
166.0 8.179
167.0 8.329
168.0 8.484
169.0 8.643
170.0 8.802
171.0 8.963
172.0 9.135
173.0 9.31
174.0 9.487
175.0 9.664
176.0 9.85
177.0 10.037
178.0 10.228
179.0 10.447
180.0 10.673
181.0 10.906
182.0 11.142
183.0 11.383
184.0 11.638
185.0 11.896
186.0 12.167
187.0 12.455
188.0 12.744
189.0 13.033
190.0 13.327
191.0 13.633
192.0 13.951
193.0 14.293
194.0 14.635
195.0 14.978
196.0 15.341
197.0 15.751
198.0 16.177
199.0 16.625
200.0 17.109
201.0 17.697
202.0 18.291
203.0 18.961
204.0 19.674
205.0 20.39
206.0 21.121
207.0 21.878
208.0 22.685
209.0 23.517
210.0 24.44
211.0 25.433
212.0 26.441
213.0 27.463
214.0 28.491
215.0 29.568
216.0 30.671
217.0 31.832
218.0 33.032
219.0 34.291
220.0 35.652
221.0 37.21
222.0 39.048
223.0 41.27
224.0 43.976
225.0 46.701
226.0 49.605
227.0 52.53
228.0 56.025
229.0 60.729
230.0 65.957
231.0 105.101
232.0 151.746
233.0 245.759
234.0 340.63
235.0 453.139
};
\addlegendentry{PruningABL}

\addplot[very thick, densely dashdotted, peachcrayola] table[x index=0, y index=1] {
1.0 0.07
2.0 0.142
3.0 0.215
4.0 0.289
5.0 0.367
6.0 0.45
7.0 0.535
8.0 0.62
9.0 0.706
10.0 0.794
11.0 0.888
12.0 0.984
13.0 1.083
14.0 1.183
15.0 1.284
16.0 1.386
17.0 1.499
18.0 1.614
19.0 1.729
20.0 1.845
21.0 1.961
22.0 2.08
23.0 2.2
24.0 2.321
25.0 2.443
26.0 2.565
27.0 2.687
28.0 2.81
29.0 2.934
30.0 3.058
31.0 3.183
32.0 3.309
33.0 3.436
34.0 3.564
35.0 3.694
36.0 3.825
37.0 3.956
38.0 4.088
39.0 4.22
40.0 4.355
41.0 4.491
42.0 4.63
43.0 4.769
44.0 4.91
45.0 5.052
46.0 5.195
47.0 5.343
48.0 5.492
49.0 5.642
50.0 5.792
51.0 5.945
52.0 6.098
53.0 6.252
54.0 6.407
55.0 6.566
56.0 6.726
57.0 6.892
58.0 7.061
59.0 7.231
60.0 7.403
61.0 7.577
62.0 7.755
63.0 7.933
64.0 8.115
65.0 8.301
66.0 8.494
67.0 8.694
68.0 8.896
69.0 9.098
70.0 9.302
71.0 9.507
72.0 9.712
73.0 9.919
74.0 10.127
75.0 10.336
76.0 10.548
77.0 10.76
78.0 10.973
79.0 11.186
80.0 11.4
81.0 11.614
82.0 11.83
83.0 12.046
84.0 12.263
85.0 12.48
86.0 12.703
87.0 12.927
88.0 13.152
89.0 13.379
90.0 13.608
91.0 13.837
92.0 14.068
93.0 14.3
94.0 14.533
95.0 14.767
96.0 15.004
97.0 15.241
98.0 15.478
99.0 15.716
100.0 15.956
101.0 16.197
102.0 16.439
103.0 16.684
104.0 16.93
105.0 17.178
106.0 17.426
107.0 17.676
108.0 17.927
109.0 18.18
110.0 18.433
111.0 18.687
112.0 18.945
113.0 19.203
114.0 19.467
115.0 19.731
116.0 19.998
117.0 20.267
118.0 20.536
119.0 20.812
120.0 21.096
121.0 21.381
122.0 21.675
123.0 21.972
124.0 22.271
125.0 22.574
126.0 22.879
127.0 23.201
128.0 23.524
129.0 23.851
130.0 24.179
131.0 24.508
132.0 24.839
133.0 25.17
134.0 25.505
135.0 25.84
136.0 26.187
137.0 26.541
138.0 26.9
139.0 27.261
140.0 27.627
141.0 27.993
142.0 28.368
143.0 28.746
144.0 29.126
145.0 29.507
146.0 29.89
147.0 30.274
148.0 30.662
149.0 31.054
150.0 31.451
151.0 31.849
152.0 32.258
153.0 32.669
154.0 33.08
155.0 33.492
156.0 33.905
157.0 34.325
158.0 34.755
159.0 35.189
160.0 35.625
161.0 36.07
162.0 36.52
163.0 36.974
164.0 37.458
165.0 37.949
166.0 38.448
167.0 38.959
168.0 39.495
169.0 40.033
170.0 40.572
171.0 41.12
172.0 41.687
173.0 42.269
174.0 42.854
175.0 43.447
176.0 44.048
177.0 44.649
178.0 45.253
179.0 45.859
180.0 46.477
181.0 47.103
182.0 47.731
183.0 48.367
184.0 49.029
185.0 49.691
186.0 50.367
187.0 51.073
188.0 51.793
189.0 52.513
190.0 53.25
191.0 54.155
192.0 55.157
193.0 56.234
194.0 57.344
195.0 58.692
196.0 60.099
197.0 61.508
198.0 62.918
199.0 64.417
200.0 65.938
201.0 67.473
202.0 69.028
203.0 70.687
204.0 72.363
205.0 74.056
206.0 75.824
207.0 77.592
208.0 79.463
209.0 81.45
210.0 83.451
211.0 85.786
212.0 88.225
213.0 90.669
214.0 93.239
215.0 95.85
216.0 98.48
217.0 101.119
218.0 103.954
219.0 106.838
220.0 110.294
221.0 114.197
222.0 118.153
223.0 122.198
224.0 126.245
225.0 130.387
226.0 134.79
227.0 139.237
228.0 143.689
229.0 148.233
230.0 153.034
231.0 158.559
232.0 164.406
233.0 170.591
234.0 176.947
235.0 183.318
236.0 189.783
237.0 197.123
238.0 204.609
239.0 213.618
240.0 223.053
241.0 251.147
242.0 286.48
};
\addlegendentry{GrammarABL}

\addplot[very thick, darkbluegray] table[x index=0, y index=1] {
1.0 0.007
2.0 0.015
3.0 0.024
4.0 0.033
5.0 0.042
6.0 0.051
7.0 0.06
8.0 0.069
9.0 0.078
10.0 0.087
11.0 0.097
12.0 0.107
13.0 0.117
14.0 0.127
15.0 0.137
16.0 0.147
17.0 0.158
18.0 0.169
19.0 0.18
20.0 0.191
21.0 0.203
22.0 0.215
23.0 0.227
24.0 0.239
25.0 0.252
26.0 0.265
27.0 0.278
28.0 0.291
29.0 0.304
30.0 0.317
31.0 0.331
32.0 0.345
33.0 0.359
34.0 0.373
35.0 0.387
36.0 0.401
37.0 0.416
38.0 0.431
39.0 0.446
40.0 0.461
41.0 0.476
42.0 0.491
43.0 0.506
44.0 0.521
45.0 0.537
46.0 0.553
47.0 0.569
48.0 0.585
49.0 0.601
50.0 0.618
51.0 0.635
52.0 0.652
53.0 0.669
54.0 0.687
55.0 0.705
56.0 0.723
57.0 0.742
58.0 0.761
59.0 0.78
60.0 0.799
61.0 0.818
62.0 0.837
63.0 0.856
64.0 0.875
65.0 0.895
66.0 0.915
67.0 0.935
68.0 0.955
69.0 0.976
70.0 0.997
71.0 1.018
72.0 1.039
73.0 1.06
74.0 1.081
75.0 1.102
76.0 1.123
77.0 1.145
78.0 1.167
79.0 1.189
80.0 1.212
81.0 1.236
82.0 1.26
83.0 1.284
84.0 1.308
85.0 1.332
86.0 1.357
87.0 1.383
88.0 1.409
89.0 1.435
90.0 1.461
91.0 1.488
92.0 1.515
93.0 1.542
94.0 1.569
95.0 1.596
96.0 1.624
97.0 1.653
98.0 1.682
99.0 1.712
100.0 1.742
101.0 1.772
102.0 1.802
103.0 1.832
104.0 1.862
105.0 1.892
106.0 1.923
107.0 1.954
108.0 1.985
109.0 2.016
110.0 2.048
111.0 2.08
112.0 2.112
113.0 2.144
114.0 2.176
115.0 2.209
116.0 2.242
117.0 2.276
118.0 2.311
119.0 2.346
120.0 2.381
121.0 2.416
122.0 2.451
123.0 2.487
124.0 2.523
125.0 2.559
126.0 2.597
127.0 2.635
128.0 2.674
129.0 2.713
130.0 2.753
131.0 2.794
132.0 2.837
133.0 2.88
134.0 2.923
135.0 2.967
136.0 3.012
137.0 3.058
138.0 3.106
139.0 3.154
140.0 3.203
141.0 3.252
142.0 3.302
143.0 3.352
144.0 3.402
145.0 3.453
146.0 3.506
147.0 3.559
148.0 3.613
149.0 3.668
150.0 3.724
151.0 3.78
152.0 3.838
153.0 3.9
154.0 3.965
155.0 4.031
156.0 4.097
157.0 4.163
158.0 4.23
159.0 4.297
160.0 4.365
161.0 4.433
162.0 4.501
163.0 4.569
164.0 4.637
165.0 4.707
166.0 4.78
167.0 4.854
168.0 4.929
169.0 5.005
170.0 5.087
171.0 5.172
172.0 5.262
173.0 5.356
174.0 5.455
175.0 5.556
176.0 5.657
177.0 5.759
178.0 5.869
179.0 5.979
180.0 6.091
181.0 6.205
182.0 6.321
183.0 6.438
184.0 6.557
185.0 6.676
186.0 6.798
187.0 6.92
188.0 7.042
189.0 7.167
190.0 7.296
191.0 7.427
192.0 7.573
193.0 7.722
194.0 7.872
195.0 8.026
196.0 8.196
197.0 8.369
198.0 8.548
199.0 8.744
200.0 8.94
201.0 9.136
202.0 9.338
203.0 9.545
204.0 9.753
205.0 9.963
206.0 10.186
207.0 10.416
208.0 10.651
209.0 10.889
210.0 11.127
211.0 11.367
212.0 11.611
213.0 11.861
214.0 12.112
215.0 12.375
216.0 12.643
217.0 12.913
218.0 13.198
219.0 13.491
220.0 13.813
221.0 14.173
222.0 14.535
223.0 14.906
224.0 15.355
225.0 15.806
226.0 16.273
227.0 16.742
228.0 17.215
229.0 17.716
230.0 18.233
231.0 18.787
232.0 19.361
233.0 20.183
234.0 21.092
235.0 22.139
236.0 23.382
237.0 24.68
238.0 26.134
239.0 27.876
240.0 29.827
241.0 32.818
242.0 36.471
};
\addlegendentry{Ours}

    \end{semilogyaxis}
\end{tikzpicture}} 
        \vspace{-0.27in}
        \caption{Learning curve for BigNums.}
        \label{fig:eval4-bignum}
    \end{minipage}
    \vspace{-0.1in}
\end{figure}

\section{Related Work}\label{sec:related}

In this section, we survey work that is most closely related to our proposed tactic discovery method.  \\

 \vspace{-0.1in}

\noindent
{\bf \emph{Tactic Learning.}}
Previous work, particularly \textsc{Peano}~\cite{peano}, has focused on learning tactic libraries through anti-unification~\cite{anti-unification}. This method is syntactic in nature: proofs are modeled as sequences of operations, and new tactics are derived when repeated sub-sequences are identified across multiple proofs. While effective in identifying recurring patterns, this approach is limited by its inability to capture deeper semantic relationships between proof steps or perform meaningful proof refactoring. Additionally, \textsc{Peano} operates within a restricted, custom tactic language. \\

\vspace{-0.1in}

\noindent
{\bf \emph{Proof Automation Through Tactic Prediction.}}
A significant body of recent work has used neural models for \emph{tactic prediction}: given the current proof state, predict a tactic invocation to make progress in the proof~\cite{LPAR23:Tactic_Learning_Proving_Coq, holist, holophrasm, gamepad, proverbot, coqhammer, tactitian, rangoICSE2025}. These models can either fully automate proofs or collaborate with human users~\cite{song2024towards}.
Typically, the underlying prediction models are trained on human-written proof steps in the language of interest, and then used to guide a search algorithm interacting with the theorem proving environment. More recent works also explore using LLMs for proof automation~\cite{baldur, copra, palm}, finding that broad pre-training also makes current LLMs capable of predicting tactics. Our work complements these methods by alleviating the burden on the tactic predictor: since proofs using higher-level tactics are shorter, they can be generated with fewer calls to the predictor. Systems based on LLMs that are capable of in-context learning, such as Copra~\cite{copra}, make this integration particularly convenient, since they can attempt to use custom tactics without having prior knowledge about them, as long as they are given to the LLM in their context window.  \\

\vspace{-0.1in}

\noindent
{\bf \emph{Library Learning.}}
Library learning in code-related domains aims to automatically discover reusable components in both programs and formal proofs. In the context of code reuse, researchers have explored various methods for code library learning, including anti-unification techniques~\cite{ec2, iyer-etal-2019-learning}, program synthesis algorithms~\cite{stitch, dreamcoder}, and e-graphs~\cite{babble}. For theorem proving, library learning has primarily focused on extracting reusable lemmas. Kaliszyk and Urban~\cite{kaliszyk2015learning} introduced a method for automatically extracting lemmas from the Mizar system to assist in proving additional theorems. This idea was expanded by REFACTOR~\cite{zhou2024refactor}, which applied similar lemma extraction techniques to the Metamath system. More recently, Xin et al.~\cite{xin2023lego} integrated lemma proposals from large language models (LLMs) into neural theorem proving for the Lean prover. Our work on \emph{tactic} learning provides a complementary addition to these approaches, as tactics represent imperative, untyped proof construction steps rather than specific mathematical facts. While lemma learning focuses on identifying reusable truths, tactic learning captures and generalizes common proof strategies, allowing proofs to be written more concisely and at a higher level of abstraction. \\

\vspace{-0.1in}

\noindent
{\bf \emph{Semantic Code Refactoring}}
Several approaches have been developed for semantic code refactoring across various domains. Revamp~\cite{revamp} focuses on refactoring Abstract Data Types (ADTs) using relational constraints. In other areas, researchers have explored different techniques: using program invariants to detect refactoring candidates~\cite{daikon-refactor}, employing type constraints to refactor class hierarchies~\cite{type-refactor, constraint-refactor}, and applying program analysis techniques to refactor Java generics~\cite{wildcard-refactor}. Additionally, Migrator~\cite{migrator} addresses the refactoring of database programs in response to schema updates. However, these approaches  target code refactoring rather than proof refactoring. 
\\

\vspace{-0.1in}

\noindent
{\bf \emph{Guided Enumerative Synthesis.}}
Our tactic candidate enumeration procedure shares similarities with guided enumerative synthesis techniques from program synthesis literature~\cite{lambda2, synquid, regel, alpharegex, opsynth, admissible-heuristics, dreamcoder, stitch}. While most of these works aim to synthesize complete programs based on input/output examples or other specifications, Stitch~\cite{stitch} also employs a corpus-guided top-down approach to learn library abstractions for programs. However, as noted in our introduction, Stitch primarily focuses on generalizing concrete expressions to lambda abstractions. In contrast, our work emphasizes leveraging the semantics of tactic execution to discover new usable patterns.  \\

\vspace{-0.1in}

\noindent
{\bf \emph{Graph-Based Program Abstractions.}}
Tactic Dependence Graphs (TDGs) are inspired by  graph-based program abstractions~\cite{cfa,dfa,pdg,callgraph,ptg}, such as control-flow  and data-flow graphs, commonly used in program analysis. Broadly, such abstractions represent either the program's control flow (e.g., call graph) or dependencies between data (e.g., points-to graph) and are widely employed for tasks like optimization and security analysis (e.g., malware detection, code clone identification~\cite{ccgraph, gplag, semantic-clone,yogo,apposcopy,astroid}). However, TDGs differ in their purpose and design, focusing on logical proof dependencies between tactics rather than control or data dependencies.
Notably, TDGs abstract away irrelevant syntactic differences between proofs (e.g., subgoal naming or tactic order) and concentrate on the semantic relationships between tactic applications.


\section{Discussion}\label{sec:discussion}
The proof refactoring  and tactic discovery framework presented in this paper enables multiple  applications in interactive theorem proving. 
In this section, we discuss three possible use cases of our approach in the overall proof engineering workflow.

\vspace{0.05in}
\noindent
{\bf \emph{Improving Proof Automation.}}  By encapsulating recurring proof patterns into higher-level tactics, our approach helps automated tools operate at a more abstract level. Instead of repeatedly generating low-level proof steps, proof automation tools can leverage learned tactics as higher-level building blocks. This strategy can improve scalability, since the search space at a higher level of abstraction is  smaller. This approach also facilitates a form of curriculum learning, where newly discovered tactics serve as building blocks for more advanced proofs. As shown in our evaluation in Section~\ref{sec:copra}, the tactics learned by \toolname already significantly improve the success rate of a state-of-the-art proof automation tool. Notably, this improvement is observed even though current automation tools are not explicitly trained to exploit custom tactics. As proof automation methods evolve to take better advantage of custom tactics, we expect these gains to become even more pronounced.

\vspace{0.05in}
\noindent
{\bf \emph{Interactive Tactic Suggestion.}}  Our method can also assist proof engineers by suggesting potential tactics, highlighting repeated patterns in their proofs that might otherwise go unnoticed. In this interactive mode, we expect users to take inspiration from the suggested tactics while refining them to fit their domain expertise and personal style. For instance, if a learned tactic appears too general—such as requiring an excessive number of arguments—users can specialize it with concrete parameters to make it more practical. Likewise, multiple tactics proposed by \toolname{} can be merged into a single tactic that dynamically selects which one to apply based on the structure of the proof goal. Rather than prescribing a fixed way to restructure proofs, the system provides flexible recommendations that users can adapt as needed.


\vspace{0.05in}
\noindent
{\bf \emph{Proof Refactoring.}}  A third use case of our tool is automatically or interactively refactoring existing proofs, particularly in large projects that require long-term maintenance. As definitions change or new lemmas are introduced, proof engineers often need to define new tactics and restructure existing proofs to incorporate them. However, manually refactoring proofs in this way is labor-intensive. Our method simplifies this process by identifying where a given tactic—whether user-defined or adapted from a tactic discovered by \toolname—can replace existing sequences of proof steps. 

\section{Conclusion}\label{sec:conclusion}

We introduced a new approach to tactic discovery using Tactic Dependency Graphs (TDGs), which abstract away syntactic variations while capturing the logical dependencies between tactic applications. TDGs facilitate both the learning of custom tactics and the refactoring of existing proofs into more concise, modular forms. 
We implemented this method in a tool called \toolname and evaluated it on several domains, including various program logics, arbitrary-precision arithmetic, and compiler transformations. We also compared our approach against an anti-unification-based tactic discovery method from prior work ({\sc Peano}) and demonstrated the advantages of our approach in terms of the number and quality of the learned tactics: \toolname learns around 3$\times$ as many tactics compared to \peano and achieves an compression rate of $1.35$ on the test set, reducing the size of the corpus by 26\%. We also showed that the tactics learned by \toolname are useful for improving proof automation: When {\sc Copra}, a state-of-the-art proof automation tool, is supplied with the custom tactics learned by \toolname, its success rate yields a relative increase of 172\%. 
Overall, our work shows that tactic discovery provides a promising avenue for both proof refactoring and automation.


\section*{Data-Availability Statement}

Our artifact, along with the benchmark suite used for evaluation, is available on Zenodo \cite{xin_2025_15761151}.

\begin{acks}
    We thank our anonymous reviewers for their helpful feedback and support. This work was conducted in a research group supported by the
  \grantsponsor{GS100000001}{National Science
    Foundation}{http://dx.doi.org/10.13039/100000001} under Grant
  No.~\grantnum{GS100000001}{1762299}, Grant
  No.~\grantnum{GS100000001}{1918889}, Grant
  No.~\grantnum{GS100000001}{1908304}, Grant
  No.~\grantnum{GS100000001}{1901376}, Grant
  No.~\grantnum{GS100000001}{2120696}, Grant
  No.~\grantnum{GS100000001}{2210831}, and Grant
  No.~\grantnum{GS100000001}{2319471}.
\end{acks}

\bibliography{main}

\appendix
\section{Proofs}
\begin{theorem}
\label{theorem:1}
Let $\pscript$ be a proof of $\pstate$ and let $G$ be the \tdg for $\pscript$. Then, for any $\pscript'$ which is a topological ordering of $G$,
\begin{enumerate}
    \item $\pscript' \in \mathsf{InducedProofs}(G)$,
    \item $\pscript'$ is a proof of $\pstate$, and 
    \item $\pscript$ and $\pscript'$ are the same size.
\end{enumerate}
\end{theorem}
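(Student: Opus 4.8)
The plan is to prove the three claims in order of increasing difficulty, treating part (2) as the technical core and deriving parts (1) and (3) from the same underlying observation: a $\tdg$ records only the producer--consumer (data) dependencies among tactic invocations and nothing about their textual order. Since $\pscript'$ is a topological ordering of $G$, it consists of exactly the same invocations $(\tname, X, Y)$ as $\pscript$, merely rearranged. Part (3) is then immediate: $\mathsf{Size}$ counts tactic invocations, i.e.\ nodes of $G$, and a topological ordering neither adds nor removes nodes, so $\pscript$ and $\pscript'$ have equal size.

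For part (2) I would first establish a structural fact about $\pscript$ used throughout. Because each transition removes its input goals --- recall $\denot{\tinvoke}(\pstate) = (\pstate \setminus \textsf{Goals}(X)) \uplus [\,Y \mapsto \cdots]$ --- and outputs are fresh identifiers, every goal identifier is produced by exactly one invocation and removed by exactly one invocation, whereas hypotheses are never removed and may feed several consumers. These producer/consumer relations are precisely the arcs of $G$ and are independent of ordering, since whether an identifier lies in a given $X$ or $Y$ is a fixed property of the tuple. I would then prove by induction on $k$ the invariant that, when $\tinvoke'_k = (\tname, X, Y)$ is about to run in $\pscript'$, every identifier in $X$ is present in the current state and bound to the same proof element it carried in the original trace. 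Presence follows from two facts: the topological property forces the unique producer of each input of $\tinvoke'_k$ to precede $\tinvoke'_k$, so the input has already been introduced with its correct value (induction hypothesis); and a goal input of $\tinvoke'_k$ has $\tinvoke'_k$ as its only consumer, so no earlier invocation of $\pscript'$ can have removed it, while hypotheses are never removed. Given the invariant, the determinism assumption on tactics guarantees that $\tinvoke'_k$ is applicable and produces exactly its original $Y$-bindings, closing the induction. Because $\pscript'$ executes the same invocations, producing and removing the same identifiers as $\pscript$, its terminal state coincides with $\pstate_n$, which is goal-free; hence $\pscript'$ is a proof of $\pstate$.

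For part (1) I would observe that $\mathsf{TDG}(\pscript')$ is determined entirely by which invocation produces and which consumes each actual identifier, information carried by the fixed tuples $(\tname, X, Y)$ and invariant under reordering. Thus $\mathsf{TDG}(\pscript')$ and $G$ have the same nodes and the same labelled arcs; they are in fact identical and so trivially isomorphic via the identity on nodes, giving $\pscript' \in \mathsf{InducedProofs}(G)$.

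The hard part will be the presence argument inside the induction for part (2): I must rule out that a goal destined for a later invocation is removed prematurely by some invocation that now precedes it in $\pscript'$. This is exactly where uniqueness of goal consumers is indispensable, and where I would argue explicitly that reordering cannot change which invocations consume a given goal, because consumption is a property of the tuple $X$ rather than of position. A secondary subtlety is the implicit freshness/uniqueness of output identifiers across the whole script --- consistent with the SSA-style naming in the examples --- which underwrites the well-definedness of \emph{the} producer and \emph{the} consumer; I would make this assumption explicit at the outset.
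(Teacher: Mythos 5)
Your proposal is correct and follows essentially the same route as the paper's proof: the same key lemma (each goal produced in $\pscript$ is consumed by exactly one invocation, with hypotheses persistent), the same induction along $\pscript'$ showing each invocation sees its original inputs and, by determinism, yields its original outputs, and the same conclusions for parts (1) and (3) from the fixed node/edge structure and the one-to-one correspondence of invocations. The only cosmetic difference is that you prove part (2) first and derive part (1) from it, whereas the paper establishes the input/output preservation as part of its Claim~1 and then reads off part (2); the substance is identical.
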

\begin{proof}
First, it's clear that $G$ is a directed acyclic graph; since $\pi$ is invoked sequentially, edges are drawn strictly from outputs of earlier tactics to inputs of later tactics in $\pi$. (A cycle would imply a back-edge from a later tactic to an earlier tactic, which is not possible.) Hence $\pi$ is a topological ordering of $G$.

Now we take the set of all goals produced in $\pi$, Goals$(\bigcup_{t \in \pi} Y_t)$.
Consider some arbitrary goal $g$ in this set:
\begin{claim}
    Any $g \in $ Goals$(\bigcup_{t \in \pi} Y_t)$ is an input of exactly one tactic in $\pi$.
\end{claim}
We show this by proving the two following claims:
\begin{itemize}[leftmargin=*]
    \item $g$ is an input of at most one tactic in $\pi$. If this were false, we can select the first two tactics $t_i, t_j \in \pi$ that take in $g \in X_{t_i}, X_{t_j}$. When we invoke $t_i$, $g$ is necessarily removed from the proof state. Because goals are unique, then $g$ cannot be an input of $t_j$, so we derive a contradiction.
    \item $g$ is an input of at least one tactic in $\pi$. If this were false, $g$ would remain in the proof state permanently, which can be shown inductively. This contradicts the assumption that $\pi$ is a valid proof.
\end{itemize}

Next, we'll begin the proof of the main theorem.
\begin{claim}
    $\pscript' \in \mathsf{InducedProofs}(G)$.
\end{claim}
In other words, $\pscript'$ generates a graph $G'$ such that $G = G'$.
\begin{enumerate}
    \item The set of vertices in $G'$ is identical to the set of vertices in $G$. This is evident because two topological orderings biject, which is standard. That is, matching tactics $t(\tname) \in \pi$ and $t'(\tname) \in \pi'$ map directly to the same node $v(\tname) \in G$.
    \item The set of edges in $G'$ is identical to the set of edges in $G$. We show this by showing that the inputs and outputs of every matching pair are identical, since each edge $(s, t, \alpha, \beta)$ corresponds exactly to a proposition that is an output of $s$ and an input of $t$. We show that for each $t \in \pi$ corresponding to $t' \in \pi'$, $X_t = X_{t'}$ and $Y_t = Y_{t'}$:

    Because tactics are deterministic, proving this fact amounts to proving that the inputs $X_{t'}$ are the same as their counterparts in $X_t$ in the proof state immediately prior to the invocation of $t'$ during the execution of $\pi'$. We prove this inductively:

    \begin{itemize}[leftmargin=*]
        \item The first tactic necessarily has no inputs and thus produces the same outputs. This corresponds to the \texttt{Proof.} step, which initiates a proof. (For simplicity, we leave this detail out of the diagrams in the body of the paper.)
        \item For every subsequent tactic $t' = (\tname, X_{t'}, Y_{t'})$, consider an arbitrary $\alpha \in \text{In}(\tname)$. Suppose in $t$, $\alpha$ selects some proposition $x \in X_t$. We show that $\alpha$ also selects $x$ as an input to $t'$, e.g. $x \in X_{t'}$. 
    
        First, $\alpha$ selects $x \in X_t$ implies some edge $(s(\tname'), t(\tname), \beta, \alpha) \in E$ where $\alpha, \beta$ correspond to $x$ in the respective tactic invocations for $t, s \in \pi$. (That is, $x \in X_t$ and $x \in Y_s$.)
        
        $s$ corresponds to some $s' \in \pi'$, and because $\pi'$ is topologically sorted, $s'$ must already be processed. Then, by the inductive hypothesis, $Y_s = Y_{s'}$. Therefore, $\alpha$ selects $x \in X_t \rightarrow x \in Y_s \rightarrow x \in Y_{s'} \rightarrow x$ was produced before the invocation of $t'$.
    
        \begin{itemize}
            \item If $x$ is a hypothesis, then it is \textit{persistent}, e.g. it is never removed and doesn't change.
            \item If $x$ is a goal, then $t'$ is the sole tactic which takes $x$ as an input. Therefore, $x$ cannot be removed by an earlier tactic invocation, implying that $x$ remains in the proof state until $t'$ is invoked.
        \end{itemize}
    
        Therefore, $\alpha$ also selects $x$ as an input for $t'$. Additionally, since all inputs are the same for $t$ and $t'$, so are their outputs.
    \end{itemize}
\end{enumerate}

\begin{claim}
    $\pscript'$ is a proof of $\pstate$.
\end{claim}
We have now shown that all tactics take the same inputs and produce the same outputs, which also implies that that Goals$(\bigcup_{t \in \pi} Y_t)$ = Goals$(\bigcup_{t' \in \pi'} Y_{t'})$.

Now consider the \textit{final} proof state $\pstate'$ after executing $\pi'$ on $\pstate$. If there are no goals in the state, then $\pi'$ is a valid proof. If there is a goal $g$ in the final state, then $g \in$ Goals$(\bigcup_{t' \in \pi'} Y_{t'}) \rightarrow g \in $ Goals$(\bigcup_{t \in \pi} Y_t)$. By our assertion above, $g$ is an input of some tactic $s \in \pi$, so $g$ is an input of the corresponding tactic $s' \in \pi'$. However, this would imply that $s' \in \pi'$ comes strictly \textit{after} the tactic sequence $\pi'$ by definition of a topological order, which is a contradiction. Thus $\pi'$ is a proof of $\pstate$.

\begin{claim}
    $\pscript$ and $\pscript'$ are the same size.
\end{claim}
Our final claim follows trivially from the one-to-one correspondence between tactics in $\pi$ and $\pi'$.
\end{proof}

\begin{theorem} 
Let $\tactic$ be a tactic, and let $\pscript$ be a proof script representing a proof of $\pstate$. Then, $\mathsf{Refactor}(\tactic, \pscript)$ yields a proof script $\pscript'$ that is also a valid proof of $\pstate$.
\end{theorem}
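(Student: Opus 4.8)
The plan is to prove the theorem by an induction over the iterations of the while loop in Algorithm~\ref{alg:refactor}, maintaining the invariant that the working graph $G_c$ is always a TDG whose induced proofs are all valid proofs of $\pstate$. Initially $G_c = \mathsf{TDG}(\pscript)$, so by Theorem~\ref{theorem:1} every topological ordering of $G_c$ — and hence every element of $\mathsf{InducedProofs}(G_c)$ — is a proof of $\pstate$, establishing the base case. When the loop terminates (because \textsf{FindEmbedding} returns $\bot$), line~13 returns some $\pscript' \in \mathsf{InducedProofs}(G_c)$, which is a valid proof of $\pstate$ by the invariant and Theorem~\ref{theorem:1}. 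Thus the entire argument reduces to showing that a single pass through the loop body — find a collapsible embedding, rewire, contract — preserves the invariant.

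For the inductive step, suppose \textsf{FindEmbedding} returns a collapsible witness $f$ and let $V = \mathsf{Range}(f)$. The first subclaim is a \emph{scheduling} result: there is a topological ordering of $G_c$ in which all nodes of $V$ appear consecutively. This is precisely where collapsibility condition~(1) of Definition~\ref{def:collapsible} is used — convexity guarantees that no node outside $V$ lies on a path between two nodes of $V$, so contracting $V$ to a single point cannot create a cycle, and the contracted graph $G_c'$ is again a DAG whose topological orders refine those of $G_c$. Fix such an ordering and let $\pscript_s$ be the contiguous block of tactic invocations corresponding to $V$.

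The heart of the proof is a semantic-equivalence lemma: on any reachable state $\pstate$, executing the single contracted invocation $\tinvoke = (\tactic.\tname, X, Y)$ produces the same live proof state as executing $\pscript_s$. I would derive this from the call-by-value tactic semantics $\denot{\tinvoke}(\pstate) = (\pstate \backslash \mathsf{Goals}(X)) \uplus [\, Y \mapsto \denot{\tbody}(\ldots)\,]$. Since $f$ is an isomorphic embedding (Definition~\ref{def:isomorphic}), the subgraph on $V$ is isomorphic to $\mathsf{TDG}(\tactic)$ minus its special nodes; and collapsibility condition~(2) ensures that every internal dependency of that subgraph matches an edge of the tactic body and conversely, so $\pscript_s$ is exactly a topological ordering of $\mathsf{Body}(\tactic.\tname)$. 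Applying Theorem~\ref{theorem:1} to the body together with the assumed determinism of tactics shows $\pscript_s$ computes the same outputs as $\denot{\tbody}$, while the \textsf{RewireIn}/\textsf{RewireOut} relabelings of lines~8--11 are exactly what align the boundary identifiers — the edges entering $V$ with the formal inputs $\inv$, and the edges leaving $V$ with the formal outputs $\outv$ — with the actual propositions crossing the cut. Convexity additionally guarantees that identifiers created and consumed entirely within $V$ are never referenced outside, so substituting $\tinvoke$ for $\pscript_s$ leaves the suffix of the proof unchanged. Splicing this equivalence into the fixed ordering shows every proof in $\mathsf{InducedProofs}(G_c')$ is again a proof of $\pstate$, closing the induction.

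The main obstacle I anticipate is this semantic-equivalence lemma, specifically the bookkeeping of identifier renamings: one must verify that the actual inputs $X$ selected by the rewired boundary edges coincide, in the state immediately before $\tinvoke$, with what the corresponding formal inputs of $\tactic$ bind to, and that the fresh outputs $Y$ correctly shadow the internal outputs exposed at the exit node. Establishing this rigorously relies on the same inductive ``inputs and outputs agree'' argument used in the proof of Theorem~\ref{theorem:1}, combined with both collapsibility conditions to exclude any cross-boundary dependency that the rewiring could fail to account for.
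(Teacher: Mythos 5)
Your proposal is correct and follows essentially the same route as the paper's proof: both maintain the loop invariant that every induced proof of $G_c$ is a valid proof of $\pstate$ (with nonemptiness), establish the base case via Theorem~\ref{theorem:1}, use the two collapsibility conditions to show contraction preserves acyclicity, and conclude by identifying the contracted tactic node with its body up to formal-to-actual renaming. The only cosmetic difference is the direction of the substitution argument — the paper takes an arbitrary induced proof of the contracted graph and expands $\tactic$ back into its body, whereas you contract a chosen ordering of the pre-contraction graph; to cover \emph{every} element of $\mathsf{InducedProofs}(G_c')$ your version needs one extra appeal to Theorem~\ref{theorem:1} on the contracted graph, which you already have available.
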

\begin{proof}
We show that the following invariant holds after every iteration of the loop: after every iteration, any $\pi' \in ($InducedProofs$(G_c))$ is a valid proof of $\pstate$, and (InducedProofs$(G_c)$) is nonempty. Then the theorem statement follows trivially.

First, the invariant holds at the start of the loop, where $\pi \in ($InducedProofs$(G_c))$, by simple application of Theorem~\ref{theorem:1}.

Now, assume the invariant holds for $G_c$ at the start of the loop. The existence of some induced proof indicates that $G_c$ is initially acyclic. Similarly, $G = (V, E)$ is a directed acyclic graph by the same reasoning from ~\ref{theorem:1}, since $\tau$ can be processed as a list of tactics.

We show that $G_c$ remains acyclic after contraction. Assume $G_c'$, which is the graph $G_c$ after contraction, is cyclic. Then there are three cases:
\begin{enumerate}[leftmargin=*]
    \item $t \leadsto v(\tau.\tname) \leadsto t$ for some tactic $t \notin V$. Then there exists some path $t \leadsto a$ and $b \leadsto t$ for $a, b \in V$ before contraction; this follows from a standard graph contraction definition. However, by (1) of Definition~\ref{def:collapsible}, then $t$ ought to be in $V$, a contradiction. 
    \item $v(\tau.\tname) \rightarrow v(\tau.\tname)$, e.g. $v(\tau.\tname)$ forms a self loop. Then there exists some edge $a \rightarrow b$ for $a, b \in V$ before contraction. However, by (2) of Definition~\ref{def:collapsible}, then $(a, b, \_, \_)$ ought to be in $E$, and should therefore have been removed during contraction, making the self-loop impossible.
    \item $v(\tau.\tname)$ is not part of the cycle. This implies the existence of a cycle before contracting $G_c$, which is a contradiction.
\end{enumerate}
Therefore, $G_c$ remains acyclic after contraction, which means that (InducedProofs($G_c'$)) must be nonempty. From this we obtain an arbitrary induced proof script $\pi' = t_0;...;t_i; \tau; t_{j};...;t_n$ which is topologically ordered, of which there must be at least one.

We compare $\pi'$ to the expanded proof script 
\begin{align*}
    \pi'[\tau \backslash T] = t_0,...,t_i,T_0,...,T_N,t_j,...,t_n
\end{align*}
where $\tau = (\tname, I, O, T)$. Then $\pi'$ is a topological ordering of $G_c$, since $T$ and $\pi'$ are both topologically ordered, and expanding a (topologically) ordered list with another (topologically) ordered list results in a (topologically) ordered list.

Therefore, subject to formal-to-actual renamings, $\pi'[\tau \backslash T]$ is an induced proof of $G_c$, which holds by (1) of Theorem~\ref{theorem:1}.

By definition, the custom tactic $\tau$ has identical semantics to the sequence $T_0;...;T_N$, subject to parameter renamings. Therefore, since $\pi'[\tau \backslash T]$ is a valid proof of $\pstate$ by the loop invariant, $\pi'$ is also a valid proof of $\pstate$.
\end{proof}

\setcounter{claim}{0}
\begin{theorem}\label{thm:UBcorrect}
Let $\tc = (G, \ev)$ and $\tc'= (G', \ev')$ be two tactic candidates such that $G$ is a subgraph of $G'$ and \textsf{root}$(G) =$ \textsf{root}$(G')$. Then, $\strength(\tc', \corpus) \leq {\textsc{UpperBound}}(\tc, \corpus)$. 
\end{theorem}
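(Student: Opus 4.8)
The plan is to prove the inequality proof-by-proof and then sum, exploiting the linearity of both $\strength$ and $\textsc{UpperBound}$ over the corpus. First I would unfold the two quantities. On the left, $\strength(\tc', \corpus) = (\mathsf{size}(G') - 1)\,\freq(\tc', \corpus)$, where $\freq(\tc', \corpus) = \sum_{\pi \in \corpus} N'_\pi$ and $N'_\pi$ denotes the number of collapsible witnesses of $G'$ in $\tdg(\pi)$. On the right, $\textsc{UpperBound}(\tc, \corpus) = \sum_{\pi \in \corpus} \sum_{f \in \ws(G, \tdg(\pi))} (\mathsf{size}(G_e(f)) - 1)$, where I abbreviate $G_e(f) = \mathsf{MaxExtend}(\mathsf{ApplyWitness}(G, f), \tdg(\pi))$. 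It therefore suffices to establish, for each fixed $\pi \in \corpus$, the local bound $(\mathsf{size}(G') - 1)\,N'_\pi \leq \sum_{f \in \ws(G, \tdg(\pi))} (\mathsf{size}(G_e(f)) - 1)$.

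The heart of the argument is a restriction map on embeddings. Given a collapsible witness $f'$ of $G'$ into $\tdg(\pi)$, I would restrict it to the nodes of $G$, obtaining $f = f'|_{V(G) \setminus \{\inv,\outv\}}$. Because $G$ is a subgraph of $G'$, every node and every non-special edge of $G$ is inherited from $G'$, so $f$ immediately satisfies conditions (1)-(2) of Definition~\ref{def:isomorphic} and is a genuine member of $\ws(G, \tdg(\pi))$. Moreover, the image $\mathsf{ApplyWitness}(G', f')$ is a subgraph of $\tdg(\pi)$ that contains $\mathsf{ApplyWitness}(G, f)$ and shares its root, since $\textsf{root}(G) = \textsf{root}(G')$ by hypothesis and $f$ is a restriction of $f'$. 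Hence $\mathsf{ApplyWitness}(G', f')$ is one of the competitors in Definition~\ref{def:max-ext}, and by maximality $\mathsf{size}(G_e(f)) \geq \mathsf{size}(\mathsf{ApplyWitness}(G', f')) = \mathsf{size}(G')$, giving the pointwise estimate $\mathsf{size}(G_e(f)) - 1 \geq \mathsf{size}(G') - 1$.

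It remains to control the counting, which is where I would invoke the standing assumption that collapsible embeddings do not overlap. Under this assumption, the $N'_\pi$ collapsible witnesses of $G'$ in $\tdg(\pi)$ have pairwise-disjoint images, so their restrictions to $V(G)$ are themselves pairwise disjoint and in particular pairwise \emph{distinct} elements of $\ws(G, \tdg(\pi))$; i.e. the restriction map is injective on collapsible witnesses. I can then lower-bound the right-hand sum by keeping only the $N'_\pi$ terms indexed by these restrictions $f_1, \ldots, f_{N'_\pi}$, discarding the remaining terms since each is non-negative ($\mathsf{size}(G_e(f)) \geq \mathsf{size}(G) \geq 1$). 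This yields $\sum_{f} (\mathsf{size}(G_e(f)) - 1) \geq \sum_{i=1}^{N'_\pi} (\mathsf{size}(G_e(f_i)) - 1) \geq N'_\pi (\mathsf{size}(G') - 1)$, which is exactly the local bound. Summing over $\pi \in \corpus$ then gives $\strength(\tc', \corpus) \leq \textsc{UpperBound}(\tc, \corpus)$.

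The main obstacle is precisely this counting step: a priori, several distinct witnesses of $G'$ could restrict to the \emph{same} witness of $G$ (the extra nodes of $G'$ landing on different targets), in which case a single term $\mathsf{size}(G_e(f)) - 1$ would have to absorb the compression contributed by multiple $G'$-embeddings, and the bound can genuinely fail. The non-overlapping hypothesis is exactly what rescues the argument, since disjoint $G'$-images force disjoint, hence distinct, $G$-restrictions. In writing the proof I would flag explicitly where this hypothesis is consumed and remark that the implementation's backtracking search over disjoint subsets of embeddings is the practical analogue that keeps the bound sound even when embeddings overlap.
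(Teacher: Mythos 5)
Your proof is correct and follows essentially the same strategy as the paper's: decompose both sides per proof $\pi$, restrict each witness of $G'$ to the nodes of $G$ to land in $\ws(G,\tdg(\pi))$, and use the maximality clause of \textsf{MaxExtend} together with $\textsf{root}(G)=\textsf{root}(G')$ to get $\mathsf{size}(G_e(f))\geq\mathsf{size}(G')$. Where you genuinely improve on the paper is the counting step. The paper's Claim~1 asserts $|\ev'[\pi]|\leq|\ev[\pi]|$ by exhibiting the restriction map but never argues it is injective, and in general it is not: two witnesses of $G'$ that agree on $V_G$ and differ only on the extra nodes restrict to the same witness of $G$, so the cardinality inequality can fail even though the theorem survives (the lost multiplicity is absorbed into the size of the single max extension). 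You correctly identify this as the crux and close it by invoking the standing non-overlap assumption on collapsible embeddings, which forces distinct restrictions; you also correctly discard the unmatched terms on the right using $\mathsf{size}(G_e(f))-1\geq 0$. So your write-up is not merely equivalent to the paper's but is the more careful version of the same argument; the only thing I would add is a sentence making explicit that the restriction of a collapsible $G'$-witness need not itself be collapsible, which is harmless because \textsc{UpperBound} sums over \emph{all} of $\ev[\pi]$ rather than only the collapsible ones.
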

\begin{proof}
Given a TDG $G$, a proof $\pi$, and a witness function $f$, let $G_f$ denote $\textsf{ApplyWitness}(G, f)$, and  $G_{e, \pi}$ denote $\textsf{MaxExtension}(G_f, \textsf{TDG}(\pi))$.
By the definition $\strength$ and Algorithm~\ref{alg:ub} $\textsc{UpperBound}$, the statement to be proven can be rewritten as:
\begin{equation}
    \sum\limits_{\pi \in \corpus}\sum\limits_{f \in \ev'[\pi]} \mathmybb{1}[{\sf IsCollapsible}(f, G', \tdg(\pi))] \cdotp (\mathsf{Size}(G') - 1) \leq \sum\limits_{\pi \in \corpus}\sum\limits_{f \in \ev[\pi]}\textsf{Size}(G_{e, \pi} - 1) \tag{1}
\end{equation}

By simplification we can prove (1) by proving (2), which can be further simplified into (3):
\begin{equation}
\sum\limits_{f \in \ev'[\pi]} (\mathsf{Size}(G') - 1) \leq \sum\limits_{f \in \ev[\pi]}(\mathsf{Size}(G_{e, \pi}) - 1), \forall \pi \in \corpus    
\tag{2}
\end{equation}

\begin{equation}
(\mathsf{Size}(G') - 1) \cdotp |\ev'[\pi]| \leq (\mathsf{Size}(G_{e, \pi}) - 1) \cdotp |\ev[\pi]|, \forall \pi \in \corpus    
\tag{3}
\end{equation}

We prove (3) by proving the following two claims.
\begin{claim}
    $\forall \pi \in \corpus, |\ev'[\pi]| \leq |\ev[\pi]|$.
\end{claim}
Let $f' \in \ev'[\pi]$ for any $\pi \in \corpus$. Note that $V_G \subseteq V_{G'}$ since $G$ is a subgraph of $G'$, hence $V_G \subseteq \textsf{Dom}(f')$. We can construct a function $f$ by duplicating the mappings of $f'$ but only including a subset of its domain such that $\textsf{Dom}(f) = V_G$ and $f(v) = f'(v), \forall v \in \textsf{Dom}(f)$. Note that $f$ is a witness function that proves $G$ is an isomorphic embedding to \textsf{TDG}($\pi$), hence $f \in \ev[\pi]$. Thus, we can conclude that for each $f' \in \ev'[\pi]$, we can construct a $f \in \ev[\pi]$ that is a witness function from $G$ to \textsf{TDG}($\pi$), which proves the claim.

\begin{claim}
    $\forall \pi \in \corpus, \ev'[\pi] \neq \bot \implies \textsf{Size}(G') \leq \textsf{Size}(G_{e, \pi})$.
\end{claim}
We prove the claim by contradiction. Assume there exists a $\pi \in \corpus$, such that $\ev'[\pi] \neq \bot$ and $\textsf{Size}(G') > \textsf{Size}(G_{e, \pi})$. Let $G'_{f}, G_{f}$ denotes the subgraphs of $\textsf{TDG}(\pi)$ that are isomorphic to $G', G$, respectively, obtained by applying any witness function $f \in \ev'[\pi]$ to $G', G$. Note $\textsf{Size}(G'_{f}) > \textsf{Size}(G_{e, \pi})$ by assumption.

Note that $G_{f}$ is a subgraph of $G'_{f}$, $G'_{f}$ is a subgraph of $\textsf{TDG}(\pi)$, and $\textsf{root}(G'_f) = f(\textsf{root}(G')) = f(\textsf{root}(G)) = \textsf{root}(G_f)$. By (4) of definition \ref{def:max-ext}, $\textsf{Size}(G'_f) \leq \textsf{Size}(G_{e, \pi})$, which contradicts to the previous conclusion on the last paragraph.

Combining \textbf{Claim 1} and \textbf{Claim 2}, we conclude (3) and complete the proof.
\end{proof}

The above theorem justifies the soundness of the pruning performed at line 8 of the {\sc LearnTactic} algorithm, and the following theorem states the correctness of the overall {\sc LearnTactic} procedure. 

\begin{theorem}
Let $\tactic^*$ be the result of calling {\sc LearnTactic} on proof corpus $\corpus$. Then, we have:
\[
\tau^* = 
\arg \max_{\tactic} \  \cp(\tactic, \corpus)
\]
\end{theorem}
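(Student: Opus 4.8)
The plan is to reduce the claim about compression power to a claim about the effectiveness metric $\strength$, and then argue that the worklist search in {\sc LearnTactic} correctly maximizes $\strength$ over the relevant search space. For the reduction, observe that the numerator $\sum_{\pscript \in \corpus}\mathsf{Size}(\pscript)$ in Definition~\ref{def:cp} is a constant $N$ independent of $\tactic$, and that refactoring with a tactic whose TDG is $G$ replaces each of its $\freq(\tc,\corpus)$ (non-overlapping) collapsible embeddings by a single node, shrinking the corpus by exactly $(\mathsf{Size}(G)-1)$ per embedding. Hence $\sum_{\pscript'\in\corpus'}\mathsf{Size}(\pscript') = N - \strength(\tc,\corpus)$, so that $\cp(\tactic,\corpus) = N/(N-\strength(\tc,\corpus))$ is strictly increasing in $\strength$. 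Consequently $\arg\max_{\tactic} \cp(\tactic,\corpus) = \arg\max_{\tactic} \strength(\tc,\corpus)$, and it suffices to show that {\sc LearnTactic} returns the candidate of maximum effectiveness.

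Next I would establish completeness of the search with respect to the space of tactics that could possibly be optimal. First, I would argue that the graph grammar $R$ produced by {\sc LearnGraphGrammar} is complete: the three mining observations guarantee that every collapsibly-embeddable subgraph of a corpus TDG---and hence the TDG of every tactic with nonzero frequency---can be derived from a single-node graph by repeated application of the productions in $R$. Second, I would show by induction on the number of {\sc Expand} steps that the worklist enumerates every such grammar-derivable TDG, since {\sc Expand} applies each production at every matching node and {\sc Apply} retains the resulting extension whenever its frequency is at least two. Because each extension $G'$ of a candidate $G$ produced by {\sc Expand} is a supergraph of $G$ preserving $\mathsf{root}(G)$, and because the corpus is finite (bounding candidate size by the largest proof), the search space is finite and the enumeration terminates.

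I would then justify the two forms of pruning. Frequency is monotonically non-increasing under extension: adding edges or a node to $G$ only removes isomorphic embeddings, so any $\tc'$ extending $\tc$ satisfies $\freq(\tc',\corpus)\le\freq(\tc,\corpus)$; thus discarding candidates of frequency below two in {\sc Apply} cannot remove a frequency-$\ge 2$ tactic nor any ancestor on the path to one, and the search still reaches every tactic that genuinely compresses the corpus. For the upper-bound pruning at line 8, I would invoke Theorem~\ref{thm:UBcorrect}: since every candidate reachable from $\tc$ is a supergraph of $\tc.G$ sharing its root, $\textsc{UpperBound}(\tc,\corpus)$ dominates $\strength(\tc',\corpus)$ for all descendants $\tc'$, so whenever this bound falls below the incumbent $r.\mathsf{Eff}$ no descendant can improve on the best tactic found so far and pruning is sound. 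Combining completeness with sound pruning, and noting that $r$ is updated to any candidate exceeding the current best effectiveness, the record $r$ at termination holds a candidate of globally maximum $\strength$, which by the reduction above is the tactic maximizing $\cp$.

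The main obstacle I expect is the completeness argument for the grammar and search---precisely formalizing that {\sc LearnGraphGrammar} derives a superset of all collapsibly-embeddable TDGs, so that no optimal tactic is ever discarded before it can be evaluated. A secondary delicate point is the interaction between the frequency-$\ge 2$ filter and the definition of $\cp$: the argument above shows the returned tactic maximizes $\cp$ among \emph{reusable} (frequency-$\ge 2$) tactics, and one must argue, or assume following prior library-learning work, that the global optimum indeed lies in this restricted class rather than being a tactic applied only once.
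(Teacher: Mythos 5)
Your proposal is correct and follows essentially the same route as the paper's own proof: reduce the maximization of $\cp$ to the maximization of $\strength$ by showing that refactoring shrinks each proof by $(\mathsf{Size}(G)-1)$ per collapsible embedding (so $\cp$ is monotone in $\strength$), and then argue via Theorem~\ref{thm:UBcorrect} that the upper-bound pruning never discards a candidate whose descendants could beat the incumbent. One point worth noting: your explicit treatment of search completeness (grammar completeness, worklist enumeration, and monotonicity of frequency under extension) covers ground that the paper's proof leaves implicit --- the paper simply asserts that a candidate not reaching line 9 must have had an ancestor pruned by \textsc{UpperBound}, without ruling out that it was never derivable or was dropped by the frequency filter. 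Your closing caveat about the frequency-$\ge 2$ restriction is likewise a real subtlety that the paper's argument does not address; the paper implicitly restricts the optimization to reusable tactics, exactly as you suspect.
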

\begin{proof}
We approach the proof by contradiction, and we'd like to establish two claims before we start the proof: 
\begin{claim}
Let $\tactic, \pi$ be any tactic and proof. Let $G$ denote $\tdg(\tactic)$, and $\pi'$ denote $\textsc{Refactor}(\tactic, \pi)$.
\[\mathsf{Size}(\pi') = \mathsf{Size}(\pi) - \sum\limits_{f \in \ws(G, \tdg(\pi))}\mathmybb{1}[{\sf IsCollapsible}(f, G, \tdg(\pi))] \cdotp (\mathsf{Size}(\tactic) - 1)\]
\end{claim}
As shown in Algorithm~\ref{alg:refactor} {\sc Refactor}, given a tactic $\tactic$ and a proof $\pi$, it iteratively finds collapsible embeddings of $\tactic$ in $\pi$, and contracts $\pi$ by replacing the embedding of $\tactic$'s tactic body in $\tdg(\pi)$ by $\tactic$. Therefore, for each aforementioned collapsible embedding, $\textsc{Refactor}$ reduces the size of $\pi$ by $\mathsf{Size}(\tactic) - 1$, which concludes \textbf{Claim 3}.

\begin{claim}
For any tactics $\tactic_1$, $\tactic_2$, and any corpus $\corpus$, 
\[\strength(\tc_1, \corpus) \leq \strength(\tc_2, \corpus) \implies \mathsf{CP}(\tc_1.\tactic, \corpus) \leq \mathsf{CP}(\tc_2.\tactic, \corpus).\]
\end{claim}
We can conclude \textbf{Claim 4} by proving:
\begin{equation}
\strength(\tc_1, \corpus) - \strength(\tc_2, \corpus) \leq 0 \implies \mathsf{CP}(\tc_1.\tactic, \corpus) - \mathsf{CP}(\tc_2.\tactic, \corpus) \leq 0
\tag{1}
\end{equation}
We start by rewriting the left-hand-side of the assumption. Let $G_1, G_2, G'$ denote the $\tdg$ of $\tactic_1$, $\tactic_2$, and $\pi$, respectively. 
{\footnotesize
\begin{align*}
\strength(\tc_1, \corpus) - \strength(\tc_2, \corpus) =& (\textsf{Size}(\tactic_1) - 1) \cdotp \sum\limits_{\pi \in \corpus}\sum\limits_{f \in \ws(G_1, G')}\mathmybb{1}[{\sf IsCollapsible}(f, G_1, G')] - \\
& (\textsf{Size}(\tactic_2) - 1) \cdotp \sum\limits_{\pi \in \corpus}\sum\limits_{f \in \ws(G_2, G')}\mathmybb{1}[{\sf IsCollapsible}(f, G_2, G')] \\
=& \sum\limits_{\pi \in \corpus}\Big((\textsf{Size}(\tactic_1) - 1) \cdotp \sum\limits_{f \in \ws(G_1, G')}\mathmybb{1}[{\sf IsCollapsible}(f, G_1, G')] - \\
&\hspace{0.29in} (\textsf{Size}(\tactic_2) - 1) \cdotp \sum\limits_{f \in \ws(G_2, G')}\mathmybb{1}[{\sf IsCollapsible}(f, G_2, G')]\Big)
\end{align*}}
Let's continue rewriting the left-hand-side of the conclusion by using \textbf{Claim 3}:
{\scriptsize
\begin{align*}
&\mathsf{CP}(\tc_1.\tactic, \corpus) - \mathsf{CP}(\tc_2.\tactic, \corpus) \\
&= \frac{\sum\limits_{\pi \in \corpus}\textsf{Size}(\pi)}{\sum\limits_{\pi_1' \in \corpus_1'}\textsf{Size}(\pi_1')} - \frac{\sum\limits_{\pi \in \corpus}\textsf{Size}(\pi)}{\sum\limits_{\pi_2' \in \corpus_2'}\textsf{Size}(\pi_2')}\\
&= \frac{\sum\limits_{\pi \in \corpus}\textsf{Size}(\pi)}{\sum\limits_{\pi_1' \in \corpus_1'}\textsf{Size}(\pi_1)  \sum\limits_{\pi_2' \in \corpus_2'}\textsf{Size}(\pi_2')}\left(\sum\limits_{\pi_2' \in \corpus_2'}\textsf{Size}(\pi_2') - \sum\limits_{\pi_1' \in \corpus_1'}\textsf{Size}(\pi_1)\right)\\
&= \frac{\sum\limits_{\pi \in \corpus}\textsf{Size}(\pi)}{\sum\limits_{\pi_1' \in \corpus_1'}\textsf{Size}(\pi_1)  \sum\limits_{\pi_2' \in \corpus_2'}\textsf{Size}(\pi_2')} \Bigg(\sum\limits_{\pi \in \corpus}\Big(\textsf{Size}(\pi) - \sum\limits_{f \in \ws(G_2, G')}\mathmybb{1}[{\sf IsCollapsible}(f, G_2, G')] \cdotp (\textsf{Size}(\tactic_2) - 1)\Big) - \\
&\hspace{4.2cm}\sum\limits_{\pi \in \corpus}\Big(\textsf{Size}(\pi) - \sum\limits_{f \in \ws(G_1, G')}\mathmybb{1}[{\sf IsCollapsible}(f, G_1, G')] \cdotp (\textsf{Size}(\tactic_1) - 1)\Big)\Bigg)\\
&= \frac{\sum\limits_{\pi \in \corpus}\textsf{Size}(\pi)}{\sum\limits_{\pi_1' \in \corpus_1'}\textsf{Size}(\pi_1)  \sum\limits_{\pi_2' \in \corpus_2'}\textsf{Size}(\pi_2')}\sum\limits_{\pi \in \corpus}\Bigg(\sum\limits_{f \in \ws(G_1, G')}\mathmybb{1}[{\sf IsCollapsible}(f, G_1, G')] \cdotp (\textsf{Size}(\tactic_1) - 1) - \\
&\hspace{4.75cm}\sum\limits_{f \in \ws(G_2, G')}\mathmybb{1}[{\sf IsCollapsible}(f, G_2, G')] \cdotp (\textsf{Size}(\tactic_2) - 1)\Bigg)\\
&=\frac{\sum\limits_{\pi \in \corpus}\textsf{Size}(\pi)}{\sum\limits_{\pi_1' \in \corpus_1'}\textsf{Size}(\pi_1)  \sum\limits_{\pi_2' \in \corpus_2'}\textsf{Size}(\pi_2')}\Bigg(\strength(\tc_1, \corpus) - \strength(\tc_2, \corpus)\Bigg) \leq 0
\end{align*}}
We've concluded \textbf{Claim 4}. Now let's refocus to the statement of the current theorem. Let $\tc^*$ denote the corresponding tactic candidate of $\tactic^*$. By \textbf{Claim 4}, we can conclude the theorem statement by proving:
\[\tc^* = \arg \max_{\tc} \ \strength(\tc, \corpus).\]
We start our proof by contradiction with assuming there exists a tactic candidate $\tc'$, such that $\strength(\tc^*, \corpus) < \strength(\tc', \corpus)$. However, since $\tc'$ was not returned by {\sc LearnTactic}, it must have never reached line $9$ of {\sc LearnTactic}, implying that any tactic candidate $\tc''$ expandable to $\tc'$ was pruned out by {\sc UpperBound}. Note that $\tc''.G$ is a subgraph of $\tc'.G$ as $\tc''$ expands to $\tc$. By Theorem~\ref{thm:UBcorrect}, $\strength(\tc', \corpus) \leq \textsc{UpperBound}(\tc'', \corpus)$. Additionally, since $\tc''$ was pruned out, $\textsc{UpperBound}(\tc'',$\\
{\noindent $\corpus) < \strength(\tc^*, \corpus).$ Therefore, we conclude $\strength(\tc', \corpus) < \strength(\tc^*, \corpus)$, which yields a contradiction.}
\end{proof}

\section{Setup of the \textsc{Peano} baseline}
As outlined in Section~\ref{sec:exp-setup}, we have re-implemented \textsc{Peano} to support the Rocq environment. In this section, we provide a detailed exposition of this re-implementation. 

\noindent {\bf \emph{Proof tokenization.}} 
Given a corpus of proofs, each proof is segmented into a sequence of tactics, which are analogous to ``actions'' in \textsc{Peano} terminology. This approach represents each proof as a single linear sequence of tactics, implying a sequential dependency structure. However, this linear representation may oversimplify the more graph-like dependency structure present in Rocq proofs.

\noindent {\bf \emph{Learning maximally compressive tactics.}} 
In the \textsc{Peano} framework, the maximally compressive composite tactic, $\tactic$, for a given corpus of proofs, $\corpus$, is constructed from the longest consecutive common subsequence shared among a subset of proofs, $\corpus'$, with argument anti-unification applied (detailed in the following subsection). This composite tactic is selected to maximize compression, measured by its compression power $(\mathsf{Size}(\tactic) - 1) \cdot |\corpus'|$. To identify the optimal tactic $\tactic$, we exhaustively search for the longest common subsequence between every pair of proofs in the corpus, from the results of which we ultimately choose the subsequence that offers the highest compression power across the entire corpus.

\noindent {\bf \emph{Tactic argument anti-unification.}} 
During the search for the maximally compressive custom composite tactic, we employ argument anti-unification across sequences of tactic tokens. Specifically, given two sequences of tactics that share identical tactic names but differ in their list of formal arguments, denoted by $\vec{I}_1$ and $\vec{I}_2$, we construct a generalized tactic sequence with the minimally sufficient set of input identifiers, $\vec{I}$. This $\vec{I}$ is defined such that there exists a valid mapping from each identifier $i$ in $\{i \in I | \forall I \in \vec{I}\}$ to the corresponding identifiers in both $\{i \in I | \forall I_1 \in \vec{I}_1\}$ and $\{i \in I | \forall I_2 \in \vec{I}_2\}$. $\vec{I}$ is known as the \emph{least general generalization} of $\vec{I_1}$ and  $\vec{I_2}$ \cite{peano}.

For example, consider the two tactic sequences \verb|intros H H0 H1; rewrite H0 in H1| and\\
{\noindent \verb|intros H H0 H1; rewrite H in H1|. The resulting generalized tactic, capturing the common structure, would be defined as: \verb|Ltac H H0 H1 H2 := intros H H0 H1; rewrite H2 in H1.|}

\section{Integration with \sc{Copra}}

In Section~\ref{sec:copra}, we described our experiment measuring the impact of combining our custom tactics with \textsc{Copra}, an LLM-based tool for proof automation. We used \textsc{Copra} for this experiment due to its flexibility with respect to custom tactics (e.g., tools like CoqHammer cannot leverage them), and the fact that it leverages in-context learning. Previous proof automation tools based on machine learning require first training a neural network on existing proofs \cite{proverbot}. However, these architectures cannot predict tactics that weren't seen during training, while LLMs can still attempt to use them as long as we provide enough information about the new tactics in their context window.

Our concrete implementation thus had 4 steps:

\begin{description}
    \item[Refactoring: ] We first refactored the existing proof scripts using the discovered custom tactics. We also insert the new tactic definitions immediately before the first lemma that uses that tactic. This gives us a set of valid Rocq proof scripts that we know use the new tactics. Moreover, we initialize the \textsc{Copra} agent on this refactored script, so that it will be able to invoke the custom tactics.
    \item[Indexing available tactics: ] For each lemma $L$ that uses custom tactics, we  index offline which custom tactics have been declared before the statement of $L$. These are all the available tactics that we must describe to \textsc{Copra} when it tries to prove $L$.
    \item[Example extraction: ] We run the refactored proof script through the Coq SerAPI, and log all of the proof states along with the tactic invocations that were observed. From these, we can extract usage examples for each of the custom tactics. We save these examples in the index, along with what lemma they came from. We found even one example to help significantly: given only the tactic definitions, GPT-4o would generally avoid using the new tactics, or use them with invalid arguments.
    \item[Prompt hacking: ] Finally, at test time, when \textsc{Copra} is proving a lemma, we modify the prompt construction that takes place in its ``DFS Agent'' to first (1) look up the current lemma in the index, and (2) if there are custom tactics declared before the current lemma, add both the tactics' declarations and one usage example per tactic, taken from a different lemma. This gives \textsc{Copra} some indication of when this tactic might be useful.    
\end{description}

We give one representative prompt on the CoqArt benchmark in Figure~\ref{fig:copra-prompt}. The prompt starts with a common system message explaining how the \textsc{Copra} DFS agent works, which we abbreviate. Then, we describe each available tactic along with one usage example for each.

\begin{figure}
\begin{mdframed}
\footnotesize
    \begin{verbatim}
<<system>>
You are a proficient formal theorem-proving agent in Coq. You can predict the next proof step 
given the current proof state, relevant definitions, and some possible useful lemmas/theorems. 
The proof state is described in the following format:
1. All the goals are described under `[GOALS]` keyword.

[...]

<<user>>
In your proof, you can use the following custom tactics that were defined for this particular 
domain:

### Available domain-specific tactics ###

Tactic definition: Ltac custom2 H0 H1 :=  intros H0 H1; [assumption |..].
Real example of this tactic being used in context:
    Goal: forall (n : nat) (_ : lt (Init.Nat.mul 7 5) n), le (Init.Nat.mul 6 6) n
    Action: custom2 n H .

Tactic definition: Ltac custom0 H0 :=  apply H0; [assumption |..].
Real example of this tactic being used in context:
    Goal: Q
    Relevant Context: H : forall _ : P, Q
    Action: custom0 H .

Tactic definition: Ltac custom1 H0 H1 :=  unfold H0; [apply H1 |..].
Real example of this tactic being used in context:
    Goal: lt 8 9
    Action: custom1 lt le_n .

When used properly, these tactics can save you several steps in your proof, and increase your 
chances of success, so use them whenever appropriate.

Goals to prove:
[GOALS]
[GOAL] 1
forall (A : Type) (P Q : A -> Prop),
(exists y, P y) -> (forall x : A, P x -> Q x) -> exists y, Q y
[END]
    \end{verbatim}
\end{mdframed}
    \caption{Example of a \textsc{Copra} prompt we generate on the CoqArt benchmark when learned custom tactics are available.}
    \label{fig:copra-prompt}
\end{figure}

\newpage
\section{Example Extracted Custom Tactics}
In this appendix, we provide representative examples of custom tactics extracted by \toolname, along with descriptions of their functionality and appropriate use cases in Figure~\ref{fig:sample-custom-tactics}.\\

\begin{figure}
\begin{mdframed}
\small
\begin{verbatim}
(* Introduces H0 and simplifies it. *)
(* Best applied when H0 contains complex expressions that could be reduced before 
   further processing. *)
Ltac simplifiedIntro H0 := intros H0; [simpl in H0].

(* Inverts H0, then immediately tries contradiction. *)
(* Best applied when H0 leads to an immediate contradiction after inversion *)
Ltac invertContradict H0 := inv H0; [contradiction | .. ].

(* Solves a disjunctive goal by providing a variable to complete the left case of 
   the goal. *)
(* Best applied when the goal is of the form (exists x, P x), is a disjunction and
   left is the correct case. *)
Ltac leftWitness H0 := left; [exists H0; [auto]].

(* Splits the goal, and solves one with apply and the other with auto. *)
(* Best applied when the goal is a conjunction, and H0 can be applied to solve the
   first goal. *)
Ltac applyAndAuto H0 := split; [apply H0 | auto ].

(* Introduces H0, H1, and eliminates H1 by case analysis, solving each case with 
   auto. *)
(* Best applied when H1 is a disjunction or an existential needing case splitting. *)
Ltac elimAuto H0 H1 := intros H0 H1; [elim H1; [auto | auto | .. ]].

(* Introduces an existential witness, splits into subgoals, and sovles each subgoal
   with tauto. *)
(* Best applied when proving existentially quantified goals of the form (exists x, 
   P x) where P x can be split into simpler subgoals. *)
Ltac splitSolve H0 := exists H0; [split; [tauto | tauto]].

(* Introduces and case splits a variable, and introduces more variables in each
   subgoal. *)
(* Best applied when H0 is a inductive type variable or has multiple contructors. *)
Ltac caseIntro H0 H1 H2 := intro H0; [case H0; [intro H2 | intro H1 | .. ]].

(* Uses induction on the first hypothesis, simplifying each step, and recursively
   applies induction. *)
(* Best applied when proving properties over recursively defined structures (e.g.,
   lists, trees, etc). *)
Ltac nestedInduct1 := induction 1; [simpl; [try tauto] | simpl; [induction 1] | .. ].

(* Introduces an existential witness and splits into nested goals. *)
(* Best applied when the goal is of the form (exists x, P x) and is structured for 
   multiple levels of splitting. *)
Ltac nestedSplitSolve H0 := exists H0; [split; [auto | split; [tauto | tauto]]].
\end{verbatim}
\end{mdframed}
\caption{Example of a \textsc{Copra} prompt we generate on the CoqArt benchmark when learned custom tactics are available.}
\label{fig:sample-custom-tactics}
\end{figure}

\end{document}